\newcommand{\Esp}{\mathbb{E}}
\newcommand{\Prob}{\mathbb{P}}
\newcommand{\IPB}{\textnormal{IPB}}
\newcommand{\ITE}{\textnormal{ITE}}
\newcommand{\AIPB}{\textnormal{AIPB}}
\newcommand{\opt}{\textnormal{opt}}
\newcommand{\R}{\mathbb{R}}
\newcommand{\N}{\mathbb{N}}
\newcommand{\norm}[1]{\vert\vert#1\vert \vert}
\DeclareMathOperator{\supp}{supp}
\DeclareMathOperator{\CATE}{CATE}
\DeclareMathOperator{\Multinomial}{Multinomial}
\DeclareMathOperator{\Binom}{Binom}
\DeclareMathOperator{\Ber}{Ber}
\theoremstyle{plain} 
	\newtheorem{theorem}{Theorem}[section]
	\newtheorem{corollary}[theorem]{Corollary}
	\newtheorem{proposition}[theorem]{Proposition}
	\newtheorem{lemma}[theorem]{Lemma}
	\theoremstyle{definition} 
	\newtheorem{definition}[theorem]{Definition}
	\newtheorem{remark}[theorem]{Remark}
\numberwithin{equation}{section}
\title{Optimal treatment regimes for the net benefit of a treatment}
\author[1]{François Petit}
\author[3]{Gérard Biau}
\author[1,2]{Rapha\"el Porcher}
\affil[1]{Universit\'e Paris Cit\'e and Universit\'e Sorbonne Paris Nord, Inserm, INRAE, Center for Research in Epidemiology and StatisticS (CRESS), F-75004 Paris, France}
\affil[2]{Centre d'\'Epid\'emiologie Clinique, Assistance Publique-H\^opitaux de Paris, H\^otel-Dieu, F-75004, Paris, France}
\affil[3]{Sorbonne Université, CNRS, LPSM, Institut universitaire de France, F-75005 Paris, France}
\begin{document}

\maketitle

\begin{abstract}
   We develop a mathematical framework to define an optimal individualized treatment rule (ITR) within the context of prioritized outcomes in a randomized controlled trial. Our optimality criterion is based on the framework of generalized pairwise comparisons. We propose two approaches for estimating optimal ITRs on a pairwise basis. The first approach is a variant of the k-nearest neighbors algorithm. The second approach is a meta-learning method based on a randomized bagging scheme, which enables the use of any classification algorithm to construct an ITR. We investigate the theoretical properties of these estimation procedures, evaluate their performance through Monte Carlo simulations, and demonstrate their application to clinical trial data.
\end{abstract}

\section{Introduction}

Personalized medicine aims to tailor treatments to individual patient characteristics. Typically, this involves constructing individualized treatment rules (ITRs) that optimize population-level outcomes if all individuals were to adhere to these rules. Most methods to derive ITRs are based on a single outcome, though clinical decisions,  as well as regulatory evaluations, usually weigh several outcomes, generally at least benefits and harms, in the so-called benefit-risk assessment \cite{Willan97}, but also different measures of benefit, including clinical outcomes and patient-reported outcomes, for instance.

For example, the Third International Stroke Trial (IST-3) that evaluated thrombolysis for acute ischemic stroke \cite{ist3} primarily targeted the proportion of patients who were alive and independent at six months, as assessed by the Oxford Handicap Score (OHS) \cite{Bamford1990}. A secondary outcome was health-related quality of life at six months, measured using the EuroQol scale \cite{Herdman2011}. These outcomes reflect complementary dimensions of the treatment effect, prompting the pursuit of optimal ITRs that accommodate multiple endpoints simultaneously.

In practice, distinct outcomes are often analyzed independently, neglecting their interdependencies \cite{Lynd2004}. When considered jointly, a common strategy to develop ITRs is to optimize a primary outcome under constraints for a secondary one. This is particularly suitable when the secondary outcome reflects treatment harms \cite{Wang2018,Huang2020}. Nevertheless, this constrained approach becomes inappropriate when secondary outcomes cannot reasonably serve as constraints. Therefore, our objective was to develop a methodology for treatment personalization that explicitly incorporates multiple outcomes, including those that cannot be expressed as simple constraints.

Our proposed methodology builds upon concepts originally developed for analyzing randomized clinical trials with multiple endpoints. Specifically, we draw on the framework of prioritized outcomes introduced by \cite{Finkelstein99,Buyse:2010aa,Pocock2012}, which extends the average treatment effect paradigm to quantify treatment impacts across multiple prioritized endpoints. The causal estimands these methods target have been formally characterized by \cite{Luo:2017aa}, who provided theoretical foundations for their estimation through $U$-statistics. 

Let us provide a more detailed description of the Generalized Pairwise Comparison (GPC) approach introduced for estimating the \textit{net benefit} in \cite{Buyse:2010aa}. The net benefit is defined as the difference between the probability that a randomly selected participant in the experimental group achieves a better outcome than a randomly selected participant in the control group and the probability of the reverse. This approach involves hierarchically comparing multiple outcomes between paired patients, with each pair consisting of one treated and one control participants. Comparisons within each pair begin with the highest-priority outcome and only proceed to subsequent outcomes if no distinction is evident at the current level. Once a clear difference is observed at any outcome level, subsequent outcomes are disregarded for that pair. Each pair is then classified as \textit{favorable} if the treated patient outperforms the control, \textit{unfavorable} if the opposite occurs, or \textit{neutral} in the case of a tie. Ultimately, the net benefit quantifies the overall advantage of the treatment and is typically estimated using $U$-statistics based on these pairwise comparisons.

While the GPC approach described above effectively estimates the net benefit at a population level, extending it to personalized treatment settings requires additional considerations. 
In particular, constructing personalized treatment rules incorporating prioritized outcomes entails (1) defining an appropriate criterion of optimality for treatment strategies and (2) proposing methods for estimating such optimal rules. 

Personalizing treatment requires comparing potential outcomes under different treatments within the same individual rather than across distinct individuals. More precisely, when constructing a causal framework for treatment personalization, one would like to compare the potential outcomes $Y(0)$ and $Y(1)$ of a patient with characteristics $X$ that is, comparing the outcome $Y(0)$, had the patient received the control treatment, with the outcome $Y(1)$, had the same patient received the experimental treatment. Thus, defining a relevant causal quantity that is identifiable is critical. A frequent choice for this causal estimand is the conditional average treatment effect (CATE) defined as $\CATE(X)=\mathbb{E}[Y(1)-Y(0)|X]$, which remains causally identifiable thanks to the linearity of the expectation. An estimation of the CATE immediately yields an estimation of an optimal individualized treatment rule.

Conversely, directly applying non-linear functions $f$ (as in the GPC framework’s hierarchical comparisons) to potential outcomes---such as computing $\mathbb{E}[f(Y(1), Y(0)) | X]$---is problematic. This approach typically compromises causal identifiability due to the non-linearity of $f$.

To address this challenge, leveraging the GPC framework in stratified randomized trials, we observe that patients within a given stratum are assumed to represent identical, independent copies, and note that stratified GPC methods, where patients are grouped into categorical strata, already reflect a basic form of treatment personalization based on net benefit signs within strata.

Formally, considering a pair formed by the individuals $(X,Y(0),Y(1))$ and $(U,V(0),V(1))$,  with characteristics $X$ (resp. $U$) and potential outcomes $Y(0)$, $Y(1)$ (resp. $V(0)$,$V(1)$), we define:
\[
\Delta_{(r_0,r_1)}(X,U)=\mathbb{E}[\sigma(Y(0),V(1))|X,U].
\]
However, for personalized medicine, this quantity lacks direct relevance. Therefore, we identify an individual patient characterized by $x$ as the pair $(x,x)$ and define the individualized pairwise benefit (IPB) as:
\[
\IPB_{(r_0,r_1)}(x)=\Delta_{(r_0,r_1)}(x,x),
\]
representing the probability that an individual with characteristics $x$ receiving the experimental treatment achieves a "better" outcome compared to an identical, independent control individual. The IPB is well-defined under mild conditions and coincides with CATE for binary outcomes. 

The individualized pairwise benefit $\IPB_{(r_0,r_1)}$ allows to define a notion of optimality for individualized treatment rules aiming at optimizing treatment decisions with respect to a hierarchy of outcomes. In this setting, the estimation of optimal treatment strategies reduces to accurately estimating $\IPB_{(r_0,r_1)}$.

While the IPB provides a useful measure of optimality, alternative criteria also exist for evaluating treatment strategies. We explore several such causal estimands, comparing their interpretations and relating them to the net benefit concept from \cite{Buyse:2010aa}. Though different criteria lead to the same optimal treatment rule, they may differ when ranking suboptimal rules. The influence of the choice of the causal estimand used to assess treatment effect with prioritized outcome has been extensively studied in \cite{Even2025}, where they introduce a novel individual-level causal effect measure to improve treatment effect evaluation based on the win-ratio.

This paper is structured as follows. Section \ref{sec:GPC} reviews prioritized outcomes and the generalized pairwise comparison approach. Section \ref{sec:IPC} introduces our proposed optimality notion for ITRs using the IPB. We then propose two estimators for the $\IPB$ in the context of randomized clinical trials. The first is a nearest-neighbors-based, two-sample conditional $U$-statistics estimator. The second is a classification-based approach, where a classifier is trained to predict the type of pair (favorable, neutral, or unfavorable), which is then used to estimate the $\IPB$. In Section \ref{sec:theoritical_result}, we establish sufficient conditions for the convergence of the conditional $U$-statistics estimator. We also introduce and prove the convergence of a bagging method to mitigate the large memory footprint of the classification-based method. Section \ref{sec:simul} presents a finite-sample analysis via Monte-Carlo simulation of the classification-based estimator of the $\IPB$. Finally, in Section \ref{sec:data}, we illustrate our methodology by applying it to data from IST-3 to determine optimal treatment recommendations for prescribing intravenous recombinant tissue plasminogen activator to patients with ischemic stroke within three hours of symptom onset. Finally, Section \ref{sec:discussion} discusses broader implications and future research directions.

\section{Generalized pairwise comparisons} \label{sec:GPC}

Let us consider a randomized clinical trial where $n$ individuals have been randomized to the experimental treatment group $E$ and $m$ individuals have been randomized to the control group $C$. We consider that participants in the group $C$ are planned to receive a treatment option $A = 0$ and those in the group $E$ the other treatment option, $A = 1$. For an individual in the trial, we denote by $X$ (resp. $U$) a vector of baseline (pre-randomization) covariates if s/he is assigned to the control (resp. experimental) group and by $Y$ (resp. $V$) a vector of outcomes.

Generalized pairwise comparisons rely on forming pairs of individuals, one from group $C$ and one from group $E$, ordering their outcomes and assigning a score depending on that ordering. Consider a pair consisting of the individual $i$ of group $C$ and individual $j$ of group $E$. Their pair of outcomes is $(Y_i,V_j)$, and the corresponding score is defined as follow:
\begin{equation*}
\sigma_{ij}=
\begin{cases}
 +1 \; \textnormal{if} \;  V_j \succ Y_i \\
 0 \; \textnormal{if} \;  V_j \bowtie Y_i\\
 -1 \; \textnormal{if} \;  V_j \prec Y_i\\
\end{cases}
\end{equation*}
where the symbols '$\succ$' and '$\prec$' denote superiority and inferiority of outcomes, respectively, and $\bowtie$ indicates that the comparison is inconclusive. To allow more complex calculations, we will later use the more general notation $\sigma(Y_i, V_j)$ to generalize the definition of $\sigma_{ij}$. In the simple case of a single binary or continuous outcome, $Y$ and $V$ are scalars, and the ordering simply relates to the natural ordering of $\mathbb{R}$, adding the consideration of whether larger values of $Y$ (resp. $V$) are beneficial or detrimental to the individuals. The average treatment effect is then expressed in terms of net benefit---or proportion in favor of treatment---$\Theta$ estimated as:
\begin{equation*}
\hat\Theta = \frac{\sum_{i=1}^m\sum_{j=1}^n \sigma_{ij}}{n \times m}.
\end{equation*}

This definition of scores extends to more complex settings, such as time-to-event outcomes, or if requiring $|V_j - Y_i|$ being larger than a threshold $\tau$ for $\sigma_{ij}$ being different from 0 (i.e., $|V_j - Y_i| \le \tau$ leads to $V_j \bowtie Y_i$) \cite{Buyse:2010aa,Peron:2018aa}. The key idea underlying generalized pairwise comparisons is to equip the space of outcomes with an order.
When several outcomes are jointly considered, that is when $Y=(Y^1,\ldots,Y^d) \in \mathbb{R}^d$ (resp. $V=(V^1,\ldots,V^d) \in \mathbb{R}^d$) is a vector, a possible way of specifying an order is to first provide an order $\prec^k$ for the $k$-th coordinate space and second form the lexicographic order on $\mathbb{R}^d$ induced by the $\prec^k$'s (see tables \ref{tab:score} and \ref{tab:score_multi} for examples of such orders) .

Generalized pairwise comparisons allow for quantitative assessment of the benefit of an intervention, compared to a control \cite{Peron:2015aa,Buyse:2021aa} using a hierarchy of outcomes where the lexicographic order entails deciding which of those outcomes to prioritize. In the case of a prioritized outcome with two outcomes, the table~\ref{tab:score} presents how a pair of individuals would be scored with binary outcomes, assuming a first (higher priority) and second (lower priority) efficacy outcomes, for which a value of 1 would be beneficial to the patient. More complex orders can be considered, making it virtually possible to cover any situation. A second example with a numerical outcome is given in table~\ref{tab:score_multi}

\begin{table}[htb]
	\centering
	\caption{Scoring $\sigma_{ij}$ of a pair $(i,j)$ of individuals taken from the control ($C$) and experimental ($E$) groups, respectively, with binary outcomes.}\label{tab:score}
	\begin{tabular}{lcclccc}
		\hline
		\multicolumn{3}{l}{Outcome with higher priority} & \multicolumn{3}{l}{Outcome with lower priority}  & \\ \cline{1-6}
		$(Y^{1}_i,V^{1}_j)$ & Ordering & Label & $(Y^{2}_i,V^{2}_j)$ & Ordering & Label & $\sigma_{ij}$ \\ \hline
		$(0,1)$ &  $V^{1}_j  \succ Y^{1}_i$ & Favorable & Any & --- & Not considered & $+ 1$ \\
		$(1,0)$ & $V^{1}_j \prec Y^{1}_i$ & Unfavorable & Any & --- & Not considered & $-1$\\
		$(1,1)$ or $(0,0)$ & $V^{1}_j \bowtie Y^{1}_i$ & Tie/neutral & $(0,1)$ & $V^{2}_j \succ Y^{2}_i$ & Favorable & $+ 1$\\
		$(1,1)$ or $(0,0)$ & $V^{1}_j \bowtie Y^{1}_i$ & Tie/neutral & $(1,0)$ & $V^{2}_j \prec Y^{2}_i$ & Unfavorable & $-1$\\
		$(1,1)$ or $(0,0)$ & $V^{1}_j \bowtie Y^{1}_i$ & Tie/neutral & $(1,1)$ or $(0,0)$ & $V^{2}_j \bowtie Y^{2}_i$ & Tie/neutral & $0$\\
		\hline
	\end{tabular}
\end{table}

\begin{table}[htb]
	\centering
	\caption{Scoring $\sigma_{ij}$ of a pair $(i,j)$ of individuals taken from the control ($C$) and experimental ($E$) groups. The outcome is a vector constituted of a binary outcome and a continuous outcome. The $i^{th}$ component of this vector is denoted $Y^i$ for patients in the control group and $V^i$ for patients in the experimental group. The random variables $Y^{1}_i,V^{1}_j$ encodes the  binary outcome and $Y^{2}_i,V^{2}_j$ encodes the continuous outcome with threshold $\delta$.}\label{tab:score_multi}
	\begin{tabular}{lcclccc}
		\hline
		\multicolumn{3}{l}{Outcome with higher priority} & \multicolumn{3}{l}{Outcome with lower priority}  & \\ \cline{1-6}
		$(Y^{1}_i,V^{1}_j)$ & Ordering & Label & $(Y^{2}_i,V^{2}_j)$ & Ordering & Label & $\sigma_{ij}$ \\ \hline
		$(0,1)$ &  $V^{1}_j \succ Y^{1}_i$ & Favorable & Any & --- & Not considered & $+ 1$ \\
		$(1,0)$ & $V^{1}_j \prec Y^{1}_i$ & Unfavorable & Any & --- & Not considered & $-1$\\
		$(1,1)$ or $(0,0)$ & $V^{1}_j \bowtie Y^{1}_i$ & Tie/neutral & $Y^{2}_i-V^{2}_j<-\delta$ & $V^{2}_j \succ Y^{2}_i$ & Favorable & $+ 1$\\
		$(1,1)$ or $(0,0)$ & $V^{1}_j \bowtie Y^{1}_i$ & Tie/neutral &  $Y^{2}_i-V^{2}_j>\delta$ & $V^{2}_j \prec Y^{2}_i$ & Unfavorable & $-1$\\
		$(1,1)$ or $(0,0)$ & $V^{1}_j \bowtie Y^{1}_i$ & Tie/neutral & $ \vert Y^{2}_i-V^{2}_j \vert \leq \delta$ & $V^{2}_j \bowtie Y^{2}_i$ & Tie/neutral & $0$\\
		\hline
	\end{tabular}
\end{table}

\section{Individualized pairwise comparisons} \label{sec:IPC}

\subsection{General idea}

This subsection explains how generalized pairwise comparisons can be utilized to personalize treatments in the context of prioritized outcomes, using data from randomized control trials (RCTs).

Our approach builds on the use of generalized pairwise comparisons with stratification, as described in \cite{Buyse:2010aa}. Traditionally, this method is applied within strata defined by one or a few categorical variables. In such cases, deriving an optimal treatment rule involves simply recommending the treatment according to the sign of the net benefit within each stratum.

Here, we extend this approach to handle stratification based on an arbitrary set of predictors, including continuous ones, and formalize the construction of optimal individualized treatment rules (ITRs) in such scenarios. Specifically, we define a metric that establishes a notion of optimality for ITRs aimed at optimizing prioritized outcomes.

In practice, we assume that a score, denoted by \(\sigma\), based on prioritized outcomes is provided. Our objective is to construct an ITR that is \textit{pairwise optimal} (see subsection \ref{subsec:opt} for a precise definition).
Here, our goal is not to evaluate the treatment effect using RCT data but rather to leverage these data to develop optimal treatment rules.

Let \(X_i\) and \(U_j\) represent the covariates of patients in the control and experimental arms, respectively, and \(Y_i\) and \(V_j\) their corresponding prioritized outcomes. We consider patient pairs \((X_i, U_j, \sigma(Y_i, V_j))_{1 \leq i \leq n, 1 \leq j \leq m}\), where each pair falls into one of the following categories:
\[
\begin{aligned}
\begin{cases}
(x_i, u_j, -1) & \text{\parbox[t]{10cm}{if the patient in the control arm had a more favorable outcome (unfavorable pair)}} \\
(x_i, u_j, 0)  & \text{if the patients had similar outcomes (neutral pair)} \\
(x_i, u_j, 1)  & \text{\parbox[t]{10cm}{if the patient in the experimental arm had a more favorable outcome (favorable pair).}} \\
\end{cases}
\end{aligned}
\]
To construct the ITR, we estimate a function \(\widehat{\Delta}\) that approximates \(\Delta\), which predicts the difference between the probability of a favorable pair and the probability of an unfavorable pair, given the covariates. We then restrict \(\widehat{\Delta}\) to the diagonal—i.e., \(\widehat{\Delta}(x, x)\)—and recommend the experimental treatment to a patient if \(\widehat{\Delta}(x, x) > 0\).

We propose two methods to estimate \(\Delta\): (1) direct estimation using regression techniques, or (2) framing the task as a classification problem to predict the label (unfavorable, neutral, favorable) of a pair \((x, u)\) and subsequently estimating \(\Delta\) as the difference between the probabilities of favorable and unfavorable pairs.

\subsection{Policy optimality}\label{subsec:opt}
Let $\mathcal{X}$ be the space of covariates and $\mathcal{Y}$ be the space of outcomes. We assume that $\mathcal{X}$ is a metric space which distance is denoted $d_\mathcal{X}$. We assume that $\mathcal{Y}$ is endowed with a binary relation $\prec$. In this section, we ignore measure theoretic issues and refer the interested reader to the Appendix \ref{app:existence} were the more mathematical aspects are taken care of.

We assume that we are given two independent and identically distributed random variables $X$ and $U$. Following Neyman-Rubin causal model, we consider that a patient with characteristics $X$ (resp. $U$) with observed outcome $Y$ (resp. $V$) has two potential outcomes $Y(0)$ (resp. $V(0)$) and $Y(1)$ (resp. $V(1)$) representing the outcome s/he would achieve if, possibly contrary to fact, s/he had received treatment option $A = 0$ or $A = 1$ respectively. 

An ITR is a map $r \colon \mathcal{X} \to \{0;1\}$ which assigns to each patient a treatment option. We assume that we are given two ITRs $r$ and $s$ and that the patients of population $X$ are treated accordingly to the ITR $r$ while those of population $U$ are treated according to the ITR $s$.

We define the potential outcomes under rule $r$ and $s$ via the following formula
\begin{align*}
    Y(r)&=r(X)Y(1)+(1-r(X))Y(0),\\
    V(s)&=s(U)V(1)+(1-s(U))V(0)
\end{align*}
and further assume that $Y=Y(r)$ and $V=V(s)$ (consistency assumption).

We define the following causal quantity, which we term the \textit{pairwise benefit}.
\begin{equation*}
    \Delta_{(r, s)} (X,U)=\Prob(  V(s) \succ Y(r)  \vert X,U)- \Prob(V(s) \prec Y(r)  \vert X,U).
\end{equation*}
Since our aim it to make recommendations at the individual level it is natural to introduce the \textit{individualized pairwise benefit} (IPB) of the pair $(r,s)$ which, is well-defined under mild technical assumptions (see Appendix \ref{app:existence})
\begin{equation*}
    \IPB_{(r,s)}(x)=\Delta_{(r, s)}(x,x).
\end{equation*}
The \textit{average individualized pairwise benefit} (AIPB) of the pair $(r,s)$ is the quantity
\begin{equation}\label{eq:AIPB}
    \AIPB_{(r,s)}=\Esp(\IPB_{(r,s)}(X)).
\end{equation}
We say that an ITR $s$ is \textit{pairwise optimal} if and only if for every ITR $r$, $\AIPB_{(r,s)}\geq 0$.

\begin{remark}
    There is another possible notion of optimality based on pairwise comparison that is closer in spirit to the proportion in favor of treatment. We compare it to pairwise optimality in Appendix \ref{sec:propinfavor}. Under suitable assumptions, they lead to the same optimal rules. Nonetheless they do no rank suboptimal rules in the same order. An issue with this alternative  notion of optimality is that it depends on the probability of drawing a given pair among pairs of the form $(x,x)$ (see Equation \eqref{eq:gamma_disc} and Theorem \ref{thm:sqr_density}).
\end{remark}

A randomized trial (or an observational study) can be interpreted in the above setting as follows. It corresponds to the comparison of the two constant ITRs $r_0\colon \mathcal{X} \to \{0;1\}$, assigning the treatment labeled zero to every patients, and $r_1\colon \mathcal{X} \to \{0;1\}$, assigning the treatment labeled one to every patient. 

\begin{proposition}\label{prop:opt}
The ITR defined by
\begin{equation*}
s^{\opt}:=\mathds{1}_{\{\IPB_{(r_0,r_1)}>0\}}
\end{equation*}
is pairwise optimal.
\end{proposition}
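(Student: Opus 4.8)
The plan is to show that for the specific rule $s^{\opt}=\mathds{1}_{\{\IPB_{(r_0,r_1)}>0\}}$, the quantity $\AIPB_{(r,s^{\opt})}$ is nonnegative for every competing ITR $r$. Since $\AIPB_{(r,s^{\opt})}=\Esp(\IPB_{(r,s^{\opt})}(X))$ is an expectation of a pointwise quantity, it suffices by monotonicity of the expectation to prove the pointwise inequality $\IPB_{(r,s^{\opt})}(x)=\Delta_{(r,s^{\opt})}(x,x)\geq 0$ for (almost) every $x$. So the whole problem reduces to a pointwise statement about the pairwise benefit on the diagonal.

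The key step is to understand how $\Delta_{(r,s)}(x,x)$ decomposes according to the treatment labels that $r$ and $s$ assign at $x$. First I would fix $x$ and observe that on the diagonal the two rules take values $r(x),s(x)\in\{0,1\}$, so the potential outcomes $Y(r)$ and $V(s)$ reduce to $Y(r(x))$ and $V(s(x))$, each being one of $Y(0),Y(1)$ (resp. $V(0),V(1)$). Because $X=U$ at the diagonal point and the pair $(X,\dots)$, $(U,\dots)$ are i.i.d., the pairwise benefit $\Delta_{(r,s)}(x,x)$ depends on $(r,s)$ only through the pair of labels $(r(x),s(x))$, and I would express it in terms of the four building-block quantities $\Delta_{(r_a,r_b)}(x,x)$ for $a,b\in\{0,1\}$, i.e. $\IPB_{(r_a,r_b)}(x)$. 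The crucial observations are the antisymmetry relation $\Delta_{(r_a,r_b)}(x,x)=-\Delta_{(r_b,r_a)}(x,x)$ (swapping the roles of $\succ$ and $\prec$), which in particular gives $\Delta_{(r_a,r_a)}(x,x)=0$, so the diagonal labels $(0,0)$ and $(1,1)$ contribute zero.

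Putting this together, on the set where $\IPB_{(r_0,r_1)}(x)>0$ the optimal rule sets $s^{\opt}(x)=1$, so $\Delta_{(r,s^{\opt})}(x,x)$ equals either $\Delta_{(r_0,r_1)}(x,x)=\IPB_{(r_0,r_1)}(x)>0$ (when $r(x)=0$) or $\Delta_{(r_1,r_1)}(x,x)=0$ (when $r(x)=1$); in both cases it is $\geq 0$. Symmetrically, on the set where $\IPB_{(r_0,r_1)}(x)\leq 0$ the optimal rule sets $s^{\opt}(x)=0$, and $\Delta_{(r,s^{\opt})}(x,x)$ equals either $0$ (when $r(x)=0$) or $\Delta_{(r_1,r_0)}(x,x)=-\IPB_{(r_0,r_1)}(x)\geq 0$ (when $r(x)=1$), again $\geq 0$. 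Hence $\IPB_{(r,s^{\opt})}(x)\geq 0$ pointwise, and taking expectations yields $\AIPB_{(r,s^{\opt})}\geq 0$, which is precisely pairwise optimality.

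I expect the main obstacle to be the rigorous justification of the reduction from $\Delta_{(r,s)}(x,x)$ to the four constant-rule quantities, namely verifying that conditioning on $X=U=x$ and substituting $Y(r)=Y(r(x))$, $V(s)=V(s(x))$ is legitimate. This relies on the consistency assumption $Y=Y(r)$, $V=V(s)$ together with the independence of the two sampled individuals, and it is exactly the measure-theoretic well-definedness of the IPB that the paper defers to Appendix \ref{app:existence}; the antisymmetry relation itself is an easy consequence of the definition of $\Delta_{(r,s)}$ once this substitution is in place.
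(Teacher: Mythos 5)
Your proposal is correct and follows essentially the same route as the paper: the paper likewise reduces $\AIPB_{(r,s^{\opt})}\geq 0$ to the pointwise inequality $\IPB_{(r,s^{\opt})}(x)\geq 0$, using the decomposition of $\Delta_{(r,s)}$ into the four constant-rule quantities (its Lemma \ref{lem:form_can}), the antisymmetry relation giving $\Delta_{(r_i,r_i)}(x,x)=0$ (Lemma \ref{lem:annulation_delta}), and the resulting identity $\IPB_{(r,s)}(x)=[s(x)-r(x)]\Delta_{(r_0,r_1)}(x,x)$. Your explicit four-case analysis over the labels $(r(x),s^{\opt}(x))$ is just an unpacked version of the paper's one-line bound $\IPB_{(r,s^{\opt})}(x)\geq -\mathds{1}_{\{\IPB_{(r_0,r_1)}\leq 0\}}\IPB_{(r_0,r_1)}(x)\geq 0$, and your deferral of the measure-theoretic justification matches how the paper handles it in Appendix \ref{app:existence}.
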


We now relate the individualized pairwise benefit and the individualized treatment effect (or conditional average treatment effect) which forms the basis of many approaches to develop ITRs in the classical counterfactual model setting \cite{Kunzel2019}. Let us recall that $\ITE(x)=\Esp(Y(1)-Y(0)|X=x)$.

\begin{proposition}\label{prop:comp_bin}
    Assume that the potential outcomes considered are binary (i.e., $\mathcal{Y}=\{0,1\}$) and that the relation $\prec$ is the usual order on $\{0;1\}$. Then $\IPB_{(r_0,r_1)}(x)=\ITE(x)$.
\end{proposition}

\subsection{Estimation}
The  aim of this section is to provides methods to construct pairwise optimal ITR.   Again the idea underlying our approach is to learn a contrast between the outcome of patients in each arm of the trial and deduce an estimation of the optimal ITR. This problem can be approached via regression or classification methods. Indeed, one can either 
\begin{enumerate}[(i)]
\item estimate $\Delta_{(r_0,r_1)}$ using regression techniques then deduce $\IPB_{(r_0,r_1)}$ and obtain an estimate of the optimal rule. 
\item Use multi-class classification techniques to learn a classifier that predicts for each pair $(X,U)$ the pseudo-outcomes corresponding to the status of the pair (favorable, neutral, unfavorable) and deduce from it an estimate of $\Delta_{(r_0,r_1)}$ which in turn provides an estimates of the optimal ITR.
\end{enumerate}

We recall the data generating mechanism that we are considering.

\subsubsection{Model of the data generating mechanism}

In this section, we describe how we model a RCT. Our approach is similar to the one of \cite{Buyse:2010aa}. We assume that we are given two groups of patients respectively denoted $C$ and $E$ sampled from the same population. The patients of the group $C$ are exposed to the control treatment while those of the group $E$ are exposed to an experimental treatment. 
We write $(X_i,Y_i(0),Y_i(1), Y_i)$ for the characteristics, potential outcomes and outcome of a patient of the group $C$ and, write $(U_i,V_i(0),V_i(1),V_i)$ for the characteristics, potential outcomes and outcome of a patient of the group $E$.
For the sake of simplicity we assume that the patients of group $C$ are numbered from $1$ to $m=|C|$ and the patients of group $E$ are numbered from $1$ to $n=|E|$ with $N=m+n$. Throughout the rest of the paper, we assume the following

\begin{enumerate}[(H1)]
    \item The random variables  $(X_i,Y_i(0),Y_i(1))$ and $(U_j,V_j(0),V_j(1))$ are iid, $1 \leq i \leq m$, $1 \leq j \leq n$. Note that the outcome is not part of this tuples.
    \item The random variables $(X_i,Y_i(0),Y_i(1),Y_i)$ for $1 \leq i \leq m$ are iid,
    \item The random variables $(U_j,V_j(0),V_j(1),V_j)$ for $1 \leq j \leq n$ are iid,
    \item $Y_i=Y_i(0)$ and $U_j=U_j(1)$.
\end{enumerate}

\begin{proposition}
    Under Assumptions $\textnormal{(H1)}$ to $\textnormal{(H4)}$, the random variables $(X_i,Y_i)$ and $(U_j,V_j)$ are independent.
\end{proposition}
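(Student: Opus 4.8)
The plan is to obtain the independence of the observed pairs as an immediate consequence of the cross-arm independence built into (H1), once the consistency assumption (H4) is used to express each observed outcome as a coordinate of the corresponding potential-outcome tuple.

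First I would extract the relevant independence from (H1). Since (H1) asserts that the tuples $(X_i,Y_i(0),Y_i(1))$, $1\le i\le m$, and $(U_j,V_j(0),V_j(1))$, $1\le j\le n$, are jointly iid, the whole family of control tuples is mutually independent of the whole family of experimental tuples; in particular, for any fixed $i$ and $j$ the control tuple $(X_i,Y_i(0),Y_i(1))$ is independent of the experimental tuple $(U_j,V_j(0),V_j(1))$. I would stress that (H1) deliberately omits the observed outcomes $Y_i$ and $V_j$ from these tuples, so the desired independence cannot be read off directly and a further reduction is needed.

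That reduction is supplied by consistency. By (H4) we have $Y_i=Y_i(0)$ and $V_j=V_j(1)$, whence
\[
(X_i,Y_i)=\pi\bigl(X_i,Y_i(0),Y_i(1)\bigr),\qquad (U_j,V_j)=\pi'\bigl(U_j,V_j(0),V_j(1)\bigr),
\]
where $\pi$ and $\pi'$ are the measurable projections retaining the covariate together with the relevant potential outcome. Hence $(X_i,Y_i)$ is a measurable function of the control tuple and $(U_j,V_j)$ a measurable function of the experimental tuple.

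Finally I would invoke the elementary fact that measurable functions of independent random vectors are independent: applying $\pi$ and $\pi'$ to the independent tuples from the previous step yields $(X_i,Y_i)\indep(U_j,V_j)$, as claimed. I do not expect any genuine obstacle here; the only subtlety worth flagging is that it is (H1)---not (H2) or (H3), which govern the within-arm structure---that carries the cross-arm independence, and that this independence transfers to the observed pairs only after the observed outcomes are rewritten as projections of the potential-outcome tuples through (H4).
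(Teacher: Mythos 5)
Your proof is correct and is precisely the straightforward argument the paper leaves implicit (the proposition is stated there without proof): (H1) makes the control tuple $(X_i,Y_i(0),Y_i(1))$ and the experimental tuple $(U_j,V_j(0),V_j(1))$ independent as distinct members of a jointly iid family, and (H4) lets you rewrite each observed pair as a measurable projection of its tuple, so independence transfers. Note also that your reading of (H4) as $V_j=V_j(1)$ --- rather than the paper's evident typo $U_j=U_j(1)$ --- is the intended one, and your remark that (H2)--(H3) play no role in the cross-arm independence is accurate.
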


\subsubsection{Conditional U-statistics}

In this section, we provide a method to estimate $\Delta_{(r_0,r_1)}$ via a conditional version of the Wilcoxon-Mann-Whitney statistic. This first method is in the spirit of the previous works on pairwise comparison which relies on U-statistics such as the Wilcoxon-Mann-Whitney statistics. Indeed our estimator is an extension of the notion of one sample conditional U-statistics due to W. Stute \cite{Stute94} to the case of 2-samples conditional U-statistics. We provide a  theoretical study of these estimators in Appendix \ref{sec:conv_knn}. We assume that the space of patients characteristics $\mathcal{X}$ is the normed  vector space $(\R^d,\norm{\cdot})$.

We consider  the function
$\sigma \colon \mathcal{Y} \times \mathcal{Y} \to \{-1;0;1\}$ defined by
\begin{equation*}
\begin{cases}
\sigma(a,b)=-1 \; \textnormal{if} \; b \prec a     \\
\sigma(a,b)=0 \; \textnormal{if} \;  a \bowtie b  \\
\sigma(a,b)=1 \; \textnormal{if} \; b \succ a   \\
\end{cases}
\end{equation*}
and observe that
\begin{align*}
    \Delta_{(r_0,r_1)}(X_i,U_j) &=\Esp(\sigma(Y_i(0),V_j(1)) \vert X_i,U_j)\\
                            &=\Esp(\sigma(Y_i,V_j) \vert X_i,U_j).
\end{align*}

We rely on nearest neighbors methods to define two-samples conditional U-statistics to estimate $\Delta_{(r_0,r_1)}$. 
Let $1 \leq c_m \leq m$ and $1 \leq e_n \leq n$ and set $N=m+n$. We estimate $\Delta_{(r_1,r_0)}$ via
\begin{equation}\label{eq:Delta_estimate}
    \widehat{\Delta}^N_{(r_0,r_1)}(x,u)=\sum_{i,j} W^N_{mi,nj}(x,u) \, \sigma(Y_i,V_j)
\end{equation}
where the weights $W^N_{mi,nj}$ are defined by
\begin{equation*}
W^N_{mi,nj}(x,u)=\begin{cases}
      \frac{1}{c_m \, e_n} \; \textnormal{if $X_i$ is among the $c_m$-NN of $x$ and $U_j$ is among the $e_n$-NN of $u$}\\
     0 \; \textnormal{otherwise}.
     \end{cases}
\end{equation*}
This yields
\begin{equation*}
    \widehat{\IPB}^N_{(r_0,r_1)}(x)=\widehat{\Delta}^N_{(r_0,r_1)}(x,x)
\end{equation*}
and we estimate the optimal rule by
\begin{equation*}
    \widehat{r}_N^{\,\opt}(x)=\mathds{1}_{\{\widehat{\IPB}^N_{(r_0,r_1)}>0\}}(x).
\end{equation*}

We prove in Corollary \ref{cor:wuc} that the estimator \eqref{eq:Delta_estimate} is universally weakly pointwise consistent under appropriate conditions (see Section \ref{sec:theoritical_result} and Online Appendix \ref{sec:conv_knn}).
Methods based on nearest neighbors often perform poorly when dealing with a large number of categorical covariates. This stems from the curse of dimensionality. Furthermore, choosing a suitable encoding scheme for categorical variables and defining a meaningful distance metric for mixed data types presents a significant challenge. Therefore, in the next section, we propose a classification-based approach to estimate the optimal ITR.

\subsection{Classification-based approach}\label{sec:ITRclassif}

We consider data from patients in the control group $(X_i, Y_i)_{1 \leq i \leq m}$ and from patients in the experimental group $(U_j, V_j)_{1 \leq j \leq n}$, and choose a classifier. Our methodology is the following one:

\begin{enumerate}[(1)]
    \item Construct a dataset consisting of all pairwise combinations of the form $(X_i, U_j, \sigma(Y_i, V_j))$ for $1 \leq i \leq m, 1 \leq j \leq n$,
\item Train the classifier $\rho(x, u)$ using the constructed dataset to predict the label $\sigma(Y, V)$ given the input pair $(X, U)$,
\item For a given patient with characteristic $x$, estimate $\IPB_{(r_0, r_1)}(x)$ as follows:
\begin{enumerate}
    \item Predict the probability $p_1(x)$ that $\rho(x, x) = 1$, and the probability $p_{-1}(x)$ that $\rho(x, x) = -1$,
    \item Define $\widehat{\IPB}^N_{(r_0, r_1)}(x) = p_1(x) - p_{-1}(x)$.
\end{enumerate}

\item Estimate the optimal rule using:
   \[
   \widehat{r}_N^{\,\opt}(x) = \mathds{1}_{\{\widehat{\IPB}^N_{(r_0, r_1)} > 0\}}(x).
   \]
\end{enumerate}
While this method is conceptually simple, it has a quadratic memory footprint, making it challenging to apply to large datasets. Additionally, many learning algorithms assume independent and identically distributed samples, which this approach does not inherently satisfy. To address these limitations, we propose a bagging-based method and prove that it is pointwise consistent, provided the base learner is pointwise consistent.

\section{Theoretical results}\label{sec:theoritical_result}

\subsection{Pointwise consistency of two-sample conditional \texorpdfstring{$U$}{U}-statistics}

We provide sufficient conditions to ensure that the estimator defined in Equation \eqref{eq:Delta_estimate} is pointwise consistent. We have the following result.

\begin{theorem}\label{thm:cv_knn_Delta}
Suppose that $\Delta_{(r_0,r_1)}$ is continuous on the support of $\mu \otimes \mu$. Set $N=n+m$ and assume that the sequences of integers ${c}=\{c_m\}$ and $e=\{e_n\}$ are such that

 \begin{enumerate}[(i)]
        \item $\dfrac{n}{m} \underset{N \to \infty}{\longrightarrow} \lambda \neq 0$, 
        \item $c_m \underset{N \to \infty}{\longrightarrow} \infty$ and $e_n \underset{N \to \infty}{\longrightarrow} \infty$,
        \item $\dfrac{c_m}{N} \underset{N \to \infty}{\longrightarrow} 0$ and $\dfrac{e_n}{N} \underset{N \to \infty}{\longrightarrow} 0$.
    \end{enumerate}
Then the two-samples conditionnal $U$-statistics $\widehat{\Delta}^N_{(r_0,r_1)}$ satisfies
\begin{equation*}
\Esp|\widehat{\Delta}^N_{(r_0,r_1)}(x,u)-\Delta_{(r_0,r_1)}(x,u) |^2 \to 0 \; \textnormal{for all} \;(x,u)\; \textnormal{in} \supp(\mu \otimes \mu). 
\end{equation*}
In particular, for all $(x,u) \in \supp(\mu \otimes \mu)$,
\begin{equation*}
    \widehat{\Delta}^N_{(r_0,r_1)}(x,u) \to \Delta_{(r_0,r_1)}(x,u) \; \textnormal{in probability}.
\end{equation*}
\end{theorem}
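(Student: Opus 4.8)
The plan is to bound the mean squared error by a bias--variance decomposition, exploiting the fact that conditioning on the covariates reduces the estimator to a weighted average with a very sparse dependence structure. Let $\mathcal{C}$ denote the $\sigma$-algebra generated by all the covariates $(X_i)_{1\le i\le m}$ and $(U_j)_{1\le j\le n}$, and abbreviate $\widehat\Delta=\widehat{\Delta}^N_{(r_0,r_1)}(x,u)$ and $\Delta=\Delta_{(r_0,r_1)}(x,u)$. Since $\sigma$ takes values in $\{-1,0,1\}$, every summand satisfies $|\sigma(Y_i,V_j)|\le 1$, which trivializes all the second-moment estimates below. The law of total variance gives
\begin{equation*}
\Esp|\widehat\Delta-\Delta|^2=\big(\Esp\widehat\Delta-\Delta\big)^2+\mathrm{Var}\big(\Esp[\widehat\Delta\mid\mathcal C]\big)+\Esp\big[\mathrm{Var}(\widehat\Delta\mid\mathcal C)\big],
\end{equation*}
and I would show that each of the three terms vanishes as $N\to\infty$. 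Throughout I use that condition (i), with $\lambda$ finite and nonzero and $N\to\infty$, forces both $m\to\infty$ and $n\to\infty$, so that (iii) is equivalent to $c_m/m\to 0$ and $e_n/n\to 0$; these are exactly the scalings under which the radius of the $c_m$ (resp.\ $e_n$) nearest neighbours of $x$ (resp.\ $u$) tends to $0$ almost surely at any point of $\supp(\mu)$.

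For the first two terms I rely on a conditional-mean identity. By Assumptions (H1)--(H4) and the independence of $(X_i,Y_i)$ and $(U_j,V_j)$, the score $\sigma(Y_i,V_j)$ depends on $\mathcal C$ only through $(X_i,U_j)$, so $\Esp[\sigma(Y_i,V_j)\mid\mathcal C]=\Delta_{(r_0,r_1)}(X_i,U_j)$ and hence
\begin{equation*}
\Esp[\widehat\Delta\mid\mathcal C]=\sum_{i,j}W^N_{mi,nj}(x,u)\,\Delta_{(r_0,r_1)}(X_i,U_j).
\end{equation*}
This is a convex combination of the values of $\Delta_{(r_0,r_1)}$ over the selected nearest-neighbour pairs. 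Because the nearest-neighbour radii vanish and $(x,u)\in\supp(\mu\otimes\mu)=\supp(\mu)\times\supp(\mu)$, the continuity of $\Delta_{(r_0,r_1)}$ forces $\max_{i,j}|\Delta_{(r_0,r_1)}(X_i,U_j)-\Delta|\to 0$ over the active pairs, whence $\Esp[\widehat\Delta\mid\mathcal C]\to\Delta$ almost surely. As this conditional mean is bounded by $1$, bounded convergence yields $\Esp\widehat\Delta\to\Delta$ (the squared bias vanishes) and $\mathrm{Var}(\Esp[\widehat\Delta\mid\mathcal C])\to 0$.

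The remaining term $\Esp[\mathrm{Var}(\widehat\Delta\mid\mathcal C)]$ is where the $U$-statistic dependence must be handled. Expanding the conditional variance produces a sum of covariances $\Cov(\sigma(Y_i,V_j),\sigma(Y_{i'},V_{j'})\mid\mathcal C)$ weighted by $W^N_{mi,nj}W^N_{mi',nj'}$. The crucial simplification is that, given $\mathcal C$, the outcomes are mutually independent---each $Y_i$ is a function of $X_i$ and independent noise, each $V_j$ a function of $U_j$, and the two arms are independent---so
\begin{equation*}
\Cov\big(\sigma(Y_i,V_j),\sigma(Y_{i'},V_{j'})\mid\mathcal C\big)=0\qquad\text{whenever }i\ne i'\text{ and }j\ne j'.
\end{equation*}
Only three groups of terms survive: the diagonal $i=i',\,j=j'$, the shared-control terms $i=i',\,j\ne j'$, and the shared-experimental terms $i\ne i',\,j=j'$. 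Using $|\Cov|\le 1$, the weight $1/(c_m e_n)$ on each active pair, and the counts $c_m e_n$, $c_m e_n(e_n-1)$, and $c_m(c_m-1)e_n$ of active index configurations in the three groups, these contributions are bounded respectively by $1/(c_m e_n)$, $1/c_m$, and $1/e_n$, all of which tend to $0$ by (ii).

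Combining the three bounds gives $\Esp|\widehat\Delta-\Delta|^2\to 0$ for every $(x,u)\in\supp(\mu\otimes\mu)$, and convergence in probability then follows from Markov's inequality. I expect the bookkeeping of the surviving covariance terms above---the verification that cross-pairs with distinct indices are conditionally uncorrelated, together with the correct counting of the shared-index configurations---to be the only genuinely delicate point; the bias analysis is a routine consequence of nearest-neighbour consistency and the assumed continuity of $\Delta_{(r_0,r_1)}$.
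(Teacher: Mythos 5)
Your proposal is correct and follows essentially the same route as the paper's own proof (Theorem \ref{thm:cv_knn} together with Lemmas \ref{lem:rhs_main} and \ref{lem:vanishing_condexp}): the paper likewise centers the scores at their conditional means given all covariates, kills the covariance terms with $i\neq i'$ and $j\neq j'$ by conditional independence, counts the surviving shared-index configurations to obtain a bound of order $1/(c_m e_n)+1/c_m+1/e_n$, and handles the bias term via vanishing nearest-neighbour radii and continuity of $\Delta_{(r_0,r_1)}$ on $\supp(\mu\otimes\mu)$. The only differences are cosmetic: you use the exact law-of-total-variance decomposition and bounded convergence, whereas the paper uses Jensen's inequality ($|a+b|^2\le 2a^2+2b^2$) and an explicit $\varepsilon$--$\delta$ estimate, and it states the variance--bias analysis for general weights $v_{mi,nj}$ rather than only the uniform weights $1/(c_m e_n)$.
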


\begin{corollary}\label{cor:wuc}
\begin{enumerate}[(i)]
    \item For all $x \in \supp{\mu}$, $\widehat{\IPB}^N_{(r_0,r_1)}(x) \to \IPB_{(r_0,r_1)}(x)$ in probability  when $N \to \infty$.
    \item For all $x \in \supp(\mu)$, $\Prob(\widehat{r}^{\,\opt}(x) \neq r^{\,\opt}(x) ) \to 0$ when $N \to \infty$ .
\end{enumerate}
\end{corollary}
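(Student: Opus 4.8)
The plan is to deduce both parts directly from Theorem \ref{thm:cv_knn_Delta}, which I take as available.

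For part (i), the key observation is purely definitional: $\widehat{\IPB}^N_{(r_0,r_1)}(x)=\widehat{\Delta}^N_{(r_0,r_1)}(x,x)$ and $\IPB_{(r_0,r_1)}(x)=\Delta_{(r_0,r_1)}(x,x)$, so the assertion is nothing but the conclusion of Theorem \ref{thm:cv_knn_Delta} read along the diagonal $\{(x,x)\}$. The only point requiring care is to check that the diagonal point lies in the region where the theorem applies. Since $X$ and $U$ are i.i.d.\ with common law $\mu$, their joint law is $\mu\otimes\mu$, and I would invoke the standard identity $\supp(\mu\otimes\mu)=\supp(\mu)\times\supp(\mu)$. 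Hence for $x\in\supp(\mu)$ one gets $(x,x)\in\supp(\mu\otimes\mu)$, so Theorem \ref{thm:cv_knn_Delta} yields $\widehat{\Delta}^N_{(r_0,r_1)}(x,x)\to\Delta_{(r_0,r_1)}(x,x)$ in probability, which is exactly (i).

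For part (ii), I would fix $x\in\supp(\mu)$ with $\IPB_{(r_0,r_1)}(x)\neq 0$ and set $\varepsilon=|\IPB_{(r_0,r_1)}(x)|/2>0$. Recall $r^{\opt}(x)=\mathds{1}_{\{\IPB_{(r_0,r_1)}(x)>0\}}$ and $\widehat{r}^{\,\opt}(x)=\mathds{1}_{\{\widehat{\IPB}^N_{(r_0,r_1)}(x)>0\}}$. Splitting according to the sign of $\IPB_{(r_0,r_1)}(x)$: if it is positive then disagreement forces $\widehat{\IPB}^N_{(r_0,r_1)}(x)\le 0$, while if it is negative then disagreement forces $\widehat{\IPB}^N_{(r_0,r_1)}(x)>0$; in either case the estimate and its target lie on opposite sides of $0$, so
\[
\{\widehat{r}^{\,\opt}(x)\neq r^{\opt}(x)\}\subseteq\{\,|\widehat{\IPB}^N_{(r_0,r_1)}(x)-\IPB_{(r_0,r_1)}(x)|\ge|\IPB_{(r_0,r_1)}(x)|>\varepsilon\,\}.
\]
Taking probabilities and applying part (i), the right-hand side tends to $0$, which establishes (ii) at every such $x$.

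The main obstacle is the boundary set $\{x:\IPB_{(r_0,r_1)}(x)=0\}$, where the thresholding map $t\mapsto\mathds{1}_{\{t>0\}}$ is discontinuous and the argument above breaks down: part (i) only delivers $\widehat{\IPB}^N_{(r_0,r_1)}(x)\to 0$ in probability, which does not control the sign of the estimate, so the disagreement probability need not vanish (for instance, if the centred fluctuations of $\widehat{\IPB}^N_{(r_0,r_1)}(x)$ are asymptotically sign-symmetric, disagreement persists with probability bounded away from $0$). I would therefore either state (ii) for all $x$ with $\IPB_{(r_0,r_1)}(x)\neq 0$, or add a mild non-degeneracy hypothesis such as $\mu(\{\IPB_{(r_0,r_1)}=0\})=0$, under which (ii) holds for $\mu$-almost every $x$. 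Such boundary points are in any case harmless for the decision problem itself, since by Proposition \ref{prop:opt} both treatment assignments are pairwise optimal when $\IPB_{(r_0,r_1)}(x)=0$.
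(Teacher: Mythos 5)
Your proposal is correct and follows what is evidently the paper's intended route: the corollary is stated without a separate proof, as an immediate consequence of Theorem \ref{thm:cv_knn_Delta}, and your part (i) supplies precisely the one step that needs checking, namely that $\supp(\mu\otimes\mu)=\supp(\mu)\times\supp(\mu)$ (valid here since $\R^d$ is second countable), so that $(x,x)\in\supp(\mu\otimes\mu)$ for every $x\in\supp(\mu)$ and the theorem can be read along the diagonal. This is also exactly why the paper insists on a theorem valid at \emph{all} points of the support rather than $\mu\otimes\mu$-almost all points: the diagonal may well have measure zero, as the paper's own remark after Theorem \ref{thm:cv_knn} stresses.

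Your discussion of part (ii) deserves emphasis, because the caveat you raise is genuine and is glossed over by the paper. At a point $x$ with $\IPB_{(r_0,r_1)}(x)=0$ one has $r^{\opt}(x)=0$, while the event $\{\widehat{r}^{\,\opt}(x)=1\}=\{\widehat{\IPB}^N_{(r_0,r_1)}(x)>0\}$ is not controlled by convergence in probability to $0$. A concrete failure: take a single binary outcome with $Y\sim\Bernoulli(1/2)$ and $V\sim\Bernoulli(1/2)$ independent of the covariates; then $\IPB_{(r_0,r_1)}\equiv 0$, and with $c_m=e_n=k$ the statistic $\widehat{\Delta}^N_{(r_0,r_1)}(x,x)$ is $(B_1-B_2)/k$ with $B_1,B_2\sim\Binom(k,1/2)$ independent, so $\Prob(\widehat{\IPB}^N_{(r_0,r_1)}(x)>0)\to 1/2$ and the disagreement probability tends to $1/2$, not $0$. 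Hence part (ii) as stated is false at such points, and your repair---restricting to $\{x:\IPB_{(r_0,r_1)}(x)\neq 0\}$, or assuming $\mu(\{\IPB_{(r_0,r_1)}=0\})=0$ for an almost-everywhere statement---is the correct one. On the complement of this boundary set your argument (the inclusion of the disagreement event in $\{|\widehat{\IPB}^N_{(r_0,r_1)}(x)-\IPB_{(r_0,r_1)}(x)|\geq|\IPB_{(r_0,r_1)}(x)|\}$, followed by part (i)) is exactly right, and your closing observation that boundary points are decision-theoretically harmless, both actions being optimal there by Proposition \ref{prop:opt}, is the appropriate way to frame why the gap does not affect the methodology.
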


\subsection{Pointwise consistency of bagging}

The classification-based approach exhibit a substantial memory footprint, scaling quadratically with the data size that is the total number of patients. To address this, we propose a bagging procedure and demonstrate its consistency, provided that the base learner is consistent. Our strategy is inspired by \cite[Section 5]{Biau08}.

Given an iid sample formed of pairs of the form $(X_i,Y_i,U_i,V_i)_{1 \leq i \leq K}$, it induces an iid sample which elements are of the form $(X_i,U_i,\sigma(Y_i,V_i))_{1 \leq i \leq K}$. We denote this sample by $\mathcal{D}^\prime_K$. Using it, it is possible to learn
\begin{equation}\label{eq:target}
    \Delta(x,u)=\Esp(\sigma(Y,V) \mid X=x\,,U=u)
\end{equation}
using any appropriate learner $g_K(X,U)$.

However, we are not directly provided, with a dataset of the form $\mathcal{D}^\prime_K$. We are initially provided with the dataset $D_{m,n}:=C_m \sqcup E_n$, with
\begin{align*}
    C_m := & (X_1,Y_1), \ldots , (X_m,Y_m),\\
    E_n : = & (U_1,V_1), \ldots , (U_n,V_n).
\end{align*}
A first approach to learning \eqref{eq:target} is to consider the set of $mn$ pairs $\{(X_i,U_j,\sigma(Y_i,V_j))\}_{1 \leq i \leq m, 1 \leq j \leq n}$. However, as noted, this method has a quadratic memory footprint in the total number of patients $N = m+n$. Moreover, these pairs are not independent, which theoretically limits the applicability of most learning algorithms. A simpler alternative is to construct $k = \min(m,n)$ iid pairs of the form $(X_i,Y_i,U_i,V_i)_{1 \leq i \leq k}$, yielding an iid sample of $k$ variables $(X_i,U_i,\sigma(Y_i,V_i))$. This dataset, denoted $\mathcal{D}^\prime_k$, allows learning $\Delta(x,u)$ but is inefficient, as it utilizes only $\min(m,n)$ of the $mn$ available pairs. To address these limitations, we propose a bagging method based on randomized regressors. A randomized regressor incorporates a random variable $Z$ in its decision process.

 We introduce several notations in order to define our bagging procedure. We write $\mathcal{I}(\ell,m)$ for the set of injections from $\llbracket 1 ; \ell \rrbracket$ to $\llbracket 1 ; m \rrbracket$ and $\mathcal{I}_i(\ell,n)$ for the set of increasing injections from $\llbracket 1 ; \ell \rrbracket$ to $\llbracket 1 ; n \rrbracket$. Given $\alpha \in \mathcal{I}(\ell,m)$ and $ \beta \in \mathcal{I}_i(\ell,n)$, we extract the following sample from $\mathcal{D}_{m,n}$
\begin{equation*}
\mathcal{D}^\prime_{\alpha,\beta} \colon = \{(X_{\alpha(i)},U_{\beta(i)},\sigma(Y_{\alpha(i)},V_{\beta(i)}))\}_{1 \leq i \leq \ell}.
\end{equation*}
By construction, the sample $\mathcal{D}^\prime_{\alpha,\beta}$ is iid. We now introduce the random variable $Z$:
\begin{enumerate}[1.]
\item Sample an integer $\ell \in  \llbracket 1 ; k \rrbracket$ according to a binomial law with parameters $q_N \in [0,1]$ and $k=\min(m,n)$,
\item Sample uniformly an element $\alpha \in \mathcal{I}(\ell,m)$ and $\beta \in \mathcal{I}_i(\ell,n)$.
\end{enumerate}
Then, a realisation of $Z$ is the pair $(\alpha,\beta) \in \mathcal{I}(\ell,m) \times \mathcal{I}_i(\ell,m)$. We obtain the random training set $\mathcal{D}_{m,n}^\prime(Z)$, defined for each realisation $(\alpha,\beta)$ of $Z$ by $\mathcal{D}^\prime_{\alpha,\beta}$. The size of the dataset $\mathcal{D}_{m,n}^\prime(Z)$ is a binomial random variable $L$ of parameter $q_k$ and $k$.

Given a sequence of estimators $\{g_K(x,u,\mathcal{D}^\prime_K)\}_{K \in \N}$ of $\Delta(x,u)$, We can associate to it a sequence of randomized estimators $g_L(x,u,\mathcal{D}_{m,n}^\prime(Z))$.

Consider an iid sample $Z_1,\ldots,Z_b$ such that $Z_i \sim Z$. We form the $b$ randomized regressors 
\begin{equation*}
g_{L_1}(\cdot,\mathcal{D}^\prime_{m,n}(Z_1)),\ldots,g_{L_b}(\cdot,\mathcal{D}^\prime_{m,n}(Z_b)). 
\end{equation*}
We then consider the averaging regressor associated to $b$ draws of the randomizing variable $Z$, that is
\begin{align} \label{eq:average_clas}
    g_{m,n}^{(b)}(x,u,Z^b,\mathcal{D}_{m,n})=\frac{1}{b}\sum_{v=1}^b g_{m,n}(x,u,Z_v,\mathcal{D}_{m,n}).
\end{align}
The following result is an analogue of \cite[Theorem 6]{Biau08}.
\begin{theorem}\label{thm:bagging} Let $\{g_K\}$ be a sequence of regressors. Consider the averaging regressor $g_{m,n}^{(b)}(x,u,Z^b,\mathcal{D}_{m,n})$ defined above.  Assume that 
    \begin{enumerate}[(i)]
        \item $g_K(x,u, \mathcal{D}^\prime_K)$ is pointwise consistent in $(x,u)$ for the distribution of $(X,U,\sigma(Y,V))$,
        \item  $k q_{k} \to \infty$ as $k \to \infty$ where $k=\min(m,n)$.
    \end{enumerate}
   Then the  regressor $g_{m,n}^{(b)}(x,u,Z^b,\mathcal{D}_{m,n})$ is pointwise consistent in $(x,u)$.
\end{theorem}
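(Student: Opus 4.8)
The plan is to reduce the consistency of the $b$-fold average to that of a \emph{single} randomized regressor, and then dispose of the averaging by convexity. Write $g^{(v)}:=g_{L_v}(x,u,\mathcal{D}'_{m,n}(Z_v))$ for the $v$-th regressor, so that $g^{(b)}_{m,n}=\frac1b\sum_{v=1}^{b}g^{(v)}$, and read pointwise consistency in the $L^2$ sense, i.e. $\Esp|g^{(b)}_{m,n}(x,u)-\Delta(x,u)|^2\to 0$ as $k=\min(m,n)\to\infty$. Since the $Z_v$ are iid and all regressors are built from the \emph{same} data $\mathcal{D}_{m,n}$, each $g^{(v)}$ has the same marginal law as the single randomized regressor $g^{(1)}=g_{L}(x,u,\mathcal{D}'_{m,n}(Z))$. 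First I would record, by convexity of $t\mapsto t^2$ and Jensen's inequality,
\[
\Esp\bigl|g^{(b)}_{m,n}(x,u)-\Delta(x,u)\bigr|^2\le\frac1b\sum_{v=1}^{b}\Esp\bigl|g^{(v)}-\Delta(x,u)\bigr|^2=\Esp\bigl|g^{(1)}-\Delta(x,u)\bigr|^2,
\]
a bound that is uniform in $b$. Hence it suffices to prove that the single randomized regressor is $L^2$-consistent at $(x,u)$.

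To treat $g^{(1)}$, I would condition on the random subsample size $L$, which follows a $\Binom(k,q_k)$ law. Conditionally on $\{L=\ell\}$, the extracted set $\mathcal{D}'_{m,n}(Z)$ coincides in distribution with $\mathcal{D}'_{\alpha,\beta}$ for $\alpha\in\mathcal{I}(\ell,m)$, $\beta\in\mathcal{I}_i(\ell,n)$ drawn uniformly, which is an iid sample of size $\ell$ from the law of $(X,U,\sigma(Y,V))$ --- this is exactly the ``$\mathcal{D}'_{\alpha,\beta}$ is iid by construction'' observation (a uniform injection selects distinct indices of the already iid, mutually independent groups $C$ and $E$). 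Writing $a_\ell:=\Esp|g_\ell(x,u,\mathcal{D}'_\ell)-\Delta(x,u)|^2$ for the base learner's pointwise $L^2$ risk at sample size $\ell$, this gives $\Esp[\,|g^{(1)}-\Delta(x,u)|^2\mid L=\ell\,]=a_\ell$, so that $\Esp|g^{(1)}-\Delta(x,u)|^2=\Esp[a_L]$.

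The remaining and substantive step is to show $\Esp[a_L]\to 0$. Assumption (i) gives $a_\ell\to 0$ as $\ell\to\infty$, and since the regressors estimate $\Delta\in[-1,1]$ and may be taken to be $[-1,1]$-valued, the risks are uniformly bounded, $a_\ell\le M$. Fixing $\varepsilon>0$ and choosing $\ell_0$ with $a_\ell<\varepsilon$ for $\ell>\ell_0$, I would split $\Esp[a_L]\le\varepsilon+M\,\Prob(L\le\ell_0)$. By assumption (ii), $L$ has mean $kq_k\to\infty$ and variance $kq_k(1-q_k)\le kq_k$, so Chebyshev's inequality forces $\Prob(L\le\ell_0)\to 0$; letting $\varepsilon\to 0$ yields $\Esp[a_L]\to 0$. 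Combined with the Jensen bound of the first paragraph, this gives $\Esp|g^{(b)}_{m,n}(x,u)-\Delta(x,u)|^2\to 0$ for every $b$, which is the claimed pointwise consistency.

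I expect the main obstacle to be the middle step: verifying rigorously that, conditionally on its size, the randomly extracted set is genuinely iid with the correct marginal law --- so that the base learner's consistency may be invoked at the \emph{random} size $L$ --- together with the uniform-boundedness bookkeeping needed to exchange the limit with the expectation over $L$. The averaging reduction is then essentially free. If one only has consistency in probability rather than in $L^2$, the dominated-convergence step must be rephrased through boundedness and the concentration of $L$ around $kq_k$, but the skeleton of the argument is unchanged.
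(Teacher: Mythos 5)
Your proposal is correct and follows essentially the same route as the paper's proof: reduce the $b$-fold average to a single randomized regressor, condition on the realization of the subsampling variable so that the extracted iid sample $\mathcal{D}'_{\alpha,\beta}$ lets you invoke the base learner's consistency at the (random) size $L$, split at a fixed $\ell_0$, and kill the small-$\ell$ contribution via concentration of $\Binom(k,q_k)$ under $kq_k\to\infty$. The only differences are cosmetic: the paper works directly with convergence in probability (decomposing $\Prob(|g_L-\Delta|>\delta)$ over realizations of $Z$ and bounding the binomial tail by a direct sum) rather than your $L^2$/Jensen formulation with Chebyshev, which sidesteps the boundedness bookkeeping you correctly flag as the extra ingredient your version needs.
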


\section{Simulation study}\label{sec:simul}

\subsection{Simulation settings and analyses}

We conducted a simulation study to evaluate the finite-sample properties of the classification methodology. Notably, we did not assess the estimator provided by formula \eqref{eq:Delta_estimate}, as it relies on nearest neighbors, requiring a meaningful distance metric on the space of patient characteristics. Selecting an appropriate metric is challenging when dealing with (1) multiple categorical variables and (2) a large number of covariates. Although theoretically appealing, this estimator is difficult to apply in practice for these reasons. Instead, we illustrate the methods described in Section \ref{sec:ITRclassif}, using a random forest classifier as the base learner.

\paragraph{Simulation scenarios:} We examined two distinct scenarios. The simulated population included eight correlated covariates: one normal, three log-normal, and four Bernoulli. The outcomes and scores varied between scenarios:
\begin{itemize}
    \item \textbf{Scenario 1:} Outcomes consist of two binary random variables, with the score defined in Table \ref{tab:score}.
    \item \textbf{Scenario 2:} Outcomes comprise one binary variable and one integer-valued variable (ranging from 0 to 25), with the score defined in Table \ref{tab:score_multi} using $\delta=3$.
\end{itemize}
In both cases, we computed the oracle $\Delta_{(r_0,r_1)}$ and the corresponding optimal individualized treatment rule (ITR). Details of the data generation mechanism are provided in Appendix \ref{sec:datgen}.

\paragraph{Data Generation:} For each scenario, we generated two datasets of 2 million individuals each: one for training the models and the other for evaluation. From the large training dataset, we created 1000 datasets of size 400 hundreds and 1000 datasets of size 2000. Each of these datasets was split into two equal-sized folds, one serving as the control group and the other as the experimental group. 

\paragraph{Model training and hyperparameter tuning:} We formed all possible pairs $(X, U, \sigma(Y(0), V(1)))$, where $(X, Y(0), Y(1))$ belong to a patient in the control group, and $(U, V(0), V(1))$ belong to a patient in the experimental group. We then trained a random forest classifier to predict $\sigma(Y(0), V(1))$ from the data of $(X,U)$. The hyperparameters of the random forest classifier (number of trees and features per tree) were tuned using 5-fold cross-validation over fifty datasets. Once determined, these parameters were fixed for subsequent analyses. The optimal rule was estimated by $\widehat{r^{\opt}}=\mathds{1}_{\{\widehat{\IPB}_{(r_0,r_1)}>0\}}$.

\paragraph{Performance metrics:}  In order to assess the quality of the estimation of the $\IPB$ and of the learned ITR, we use the following metrics. We compute them for each training set and report the result averaged over 1000 iterations.

\begin{itemize}
    \item We assess the quality of the prediction of $\IPB_{(r_0,r_1)}$ by computing its root mean square error (RMSE) averaged over one thousand datasets. We also compute the bias of the estimation of the $\AIPB_{(r_0,r^\opt)}$, defined in equation \eqref{eq:AIPB}, which compare the performance of two ITRs.
    
    \item The estimated ITR can be considered as a plug-in classifier whose decision function is  $\widehat{\IPB} _{(r_0,r_1)}$. Hence, to estimate the overall performance of the learned ITRs, we compute its AUC which is the area under the Receiver Operating Characteristic (ROC) curve and report the averaged AUC of the ITRs learned across 1000 datasets. For the same reason, we report the average of the Matthews correlation coefficient of the ITRs learned accross 1000 datasets produced by the same generating mechanism.
    
    \item We also report the averaged specificity and sensitivity of the estimated ITRs, which allows us to assess the ability of the ITRs to properly identify patients who should receive the control treatment (specificity) and those who should receive the experimental treatment (sensitivity).

    \item We provide the averaged confusion matrices of the ITRs (Table \ref{tab:confusion_matrix}) and the averaged calibration curves of the ITRs for each of the scenarios (Figure \ref{fig:grid}).
\end{itemize}

\subsection{Results}

We report the following  measures:
\begin{table}[H]
\centering
\caption{Performance metrics of the estimation of the $\IPB$ and the estimated optimal ITR averaged across 1000 simulation iterations in two scenarios and two sample sizes.}
\begin{tabular}{cccccc}
\hline
            & \multicolumn{2}{c}{Scenario 1 } & \multicolumn{2}{c}{Scenario 2}  \\
            &\multicolumn{2}{c}{$\AIPB_{(r_0,r^\opt)}=0.18$} & \multicolumn{2}{c}{$\AIPB_{(r_0,r^\opt)}=0.20$}\\
            \hline
Trial of size & 400 & 2000 & 400 & 2000  \\ 
\hline RMSE($\IPB_{(r_0,r_1)}$)   (S.E)   & 0.36 (0.02)   & 0.27 (0.01)  &  0.38 (0.02)   &  0.29 (0.01)    \\
$\widehat{\AIPB}_{(r_0,r^{\opt})}-\AIPB_{(r_0,r^{\opt})}$ (S.E) & -0.03 (0.03)   &     -0.02 (0.01)       &  -0.03 (0.03)         &    -0.02  (0.01)       \\
AUC   (S.E)  &    0.90  (0.02)    &      0.94 (0.01)    &     0.91  (0.02)    &  0.95 (0.01)   \\
Matthew correlation coefficient  (S.E)  &  0.62 (0.05)   &  0.70 (0.02)   &  0.64 (0.04)  &     0.74 (0.02)  \\
Specificity (S.E) & 0.81 (0.03) & 0.86 (0.02) & 0.84 (0.03) & 0.88 (0.02)\\
Sensitivity (S.E) & 0.73 (0.07) & 0.80 (0.03) &0.77 (0.06) & 0.83 (0.03)\\ \hline
\end{tabular}
\end{table}

\begin{table}[H]
\caption{Confusion Matrix for the different scenarios and trial sizes}
    \centering
    \setlength{\tabcolsep}{8pt} 
    \renewcommand{\arraystretch}{1.5} 
    \begin{tabular}{ccccc}
        \hline
        && Treatment &Trial of size 400 &Trial of size 2000 \\
        \hline
        Scenario 1 & \multirow{ 2}{*}{\rotatebox{90}{Best treatment  }} &
        \begin{tabular}{c}
             0 \\
             1
        \end{tabular} &
        \begin{tabular}{cc}
             49.2\% (2.3) & 7.4\% (2.3) \\
            
            11.5\% (2.9) & 31.9 (2.9)\%
        \end{tabular} &
        \begin{tabular}{cc}
            50.7\% (1.1) & 5.9\% (1.1) \\
            8.6\% (1.4) & 34.8\% (1.4)
        \end{tabular} \\
        
        Scenario 2 & &
        \begin{tabular}{c}
             0 \\
             1
        \end{tabular} &
        \begin{tabular}{cc}
            50.1\% (1.9) & 7.4\% (1.9) \\
            10.1\% (2.6) & 32.5\% (2.6)
        \end{tabular} &
        \begin{tabular}{cc}
            51.8\% (0.9) & 5.6\% (0.9)\\
            7.2\% (1.1) & 35.4\% (1.1)
        \end{tabular} \\
        \hline
        &&Treatment &
        \begin{tabular}{cc}
        0 & 1
        \end{tabular}&
        \begin{tabular}{cc}
        0 & 1
        \end{tabular}
        \\
        \hline&&&\multicolumn{2}{c}{Recommended treatment}\\
    \end{tabular}
    
    \label{tab:confusion_matrix}
\end{table}

\begin{figure}[ht]
    \centering
    \begin{minipage}{0.45\textwidth}
        \centering
        \includegraphics[width=\textwidth]{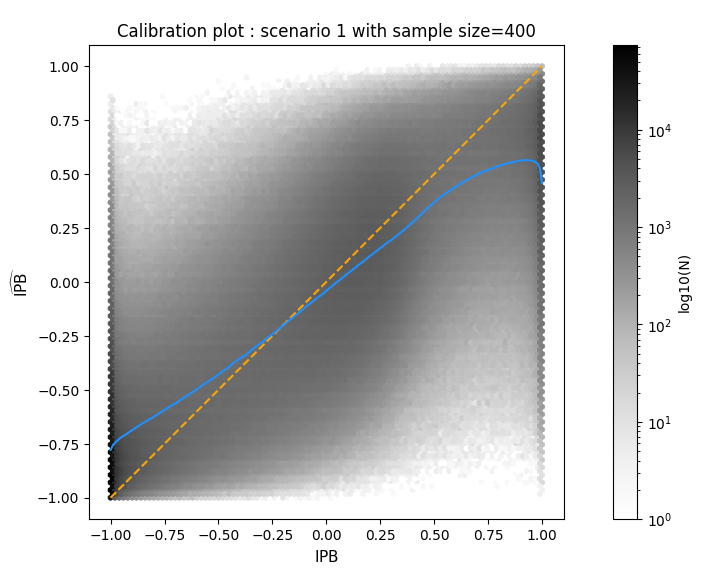}
    \end{minipage}
    \hfill
    \begin{minipage}{0.45\textwidth}
        \centering
        \includegraphics[width=\textwidth]{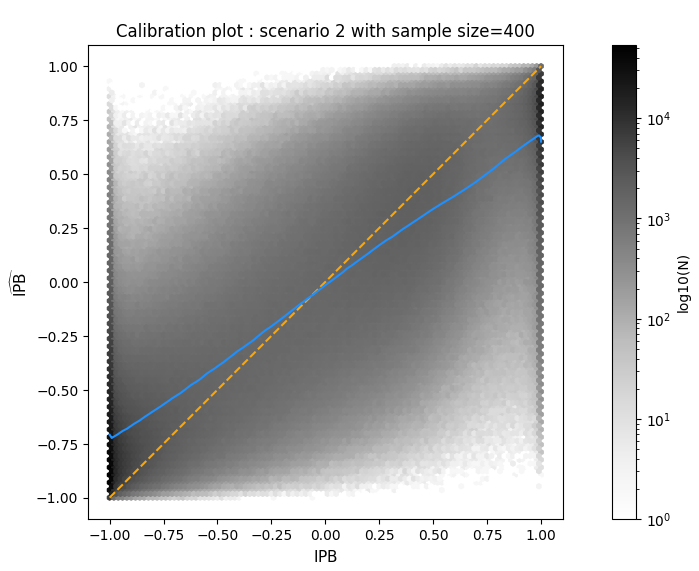}
    \end{minipage}
    
    \vspace{1em} 
    
    \begin{minipage}{0.45\textwidth}
        \centering
        \includegraphics[width=\textwidth]{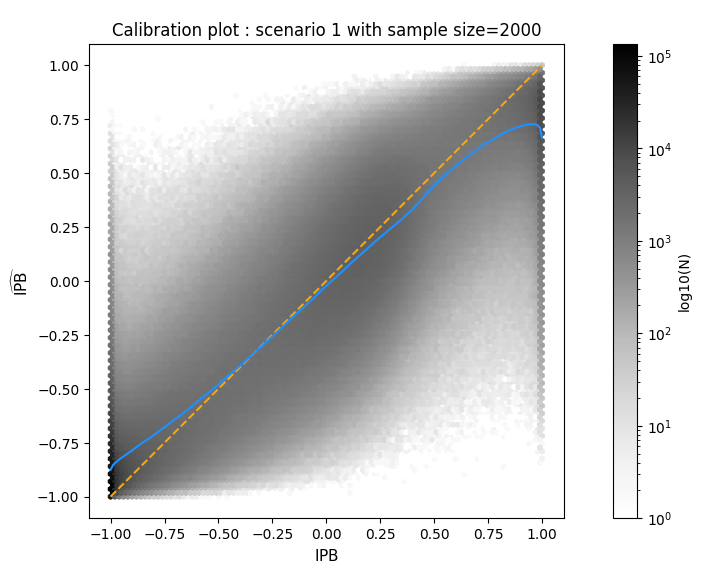}
    \end{minipage}
    \hfill
    \begin{minipage}{0.45\textwidth}
        \centering
        \includegraphics[width=\textwidth]{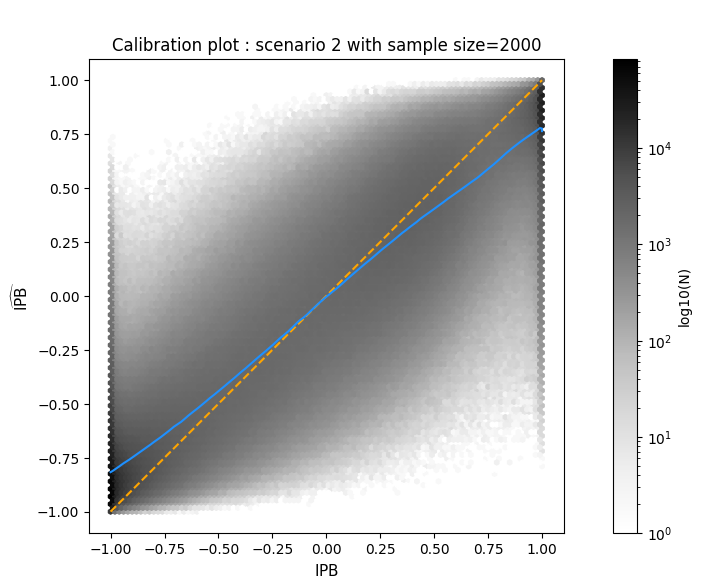}
    \end{minipage}
    
    \caption{Calibration plot of the estimated $\widehat{\IPB}$ versus the oracle $\IPB$ for various scenarios and sample sizes. The calibration curve is shown in blue, the diagonal in orange, and the density of the tuples $(\IPB, \widehat{\IPB})$ is represented using a grayscale scale. For sake of presentation random samples of 100\,000 points are drawn.}
    \label{fig:grid}
\end{figure}

We notice, through the various metrics reported, that the estimation improves when the sample size increase.

\section{Illustration: IST-3 trial}\label{sec:data}

We illustrate our method to estimate an optimal ITR for the thrombolysis of patients with ischemic stroke within 6 hours of symptom onset.
For that purpose, we use the data from the third International Stroke Trial (IST-3) an international, multicenter, randomized controlled trial that included 3\,035 participants recruited from 156 hospitals in 12 countries \cite{ist3}. Participants had ischemic stroke and could start rt-PA within 6 hours of symptom onset; they were randomized between intravenous thrombolysis with recombinant tissue plasminogen activator (rt-PA) or control (same stroke care without rt-PA).

The primary outcome of this trial was the proportion of participants alive and independent at 6 months, as measured using the Oxford Handicap Score (OHS) \cite{Bamford1990}. Patient with an OHS of 0,1 or 2 were considered independent. Several secondary outcomes were reported, among which health-related quality of life at 6 months measured through the EuroQol scale \cite{Herdman2011}.

The IST-3 trial failed to show an overall improvement in the proportion of patient alive and independent at 6 months with rt-PA, but suggested an improvement in functional outcomes at 6 months despite early risks \cite{ist3}. Authors concluded that further studies should be conducted in selected
patients more than 4.5 hours after stroke. Further, the trial did not reach its originally planned sample size of 6\,000 participants, and was therefore underpowered to detect the targeted 3\% absolute difference in the primary outcome. The absolute risk difference in primary outcome was an improvement by 1.4\% (95\% confidence interval $-$0.2 to 4.8) with rt-PA. A secondary analysis showed a significant favorable shift in the distribution of the OHS score. On average, patients treated in the rt-PA group survived with less disability.

In this reanalysis, we considered hierarchical outcomes, with the OHS at six months as the higher priority outcome, and the EuroQol at six months as the lower priority outcome (Table \ref{tab:2levelscore}). For comparison, we also used our method with a single outcome, OHS at six months (online Appendix~\ref{sec:ohs6}).

\begin{table}[htb]
	\centering
	\caption{Scoring $\sigma_{ij}$ of a pair $(i,j)$ of individuals taken from the control ($C$) and experimental ($E$) groups, respectively, with  outcomes $(Y^{1}_i,V^{1}_j)$ and $(Y^{2}_i,V^{2}_j)$ with threshold $\delta=5$.}
     \label{tab:2levelscore}
	\begin{tabular}{lcclccc}
		\hline
		\multicolumn{3}{l}{Outcome with higher priority:} & \multicolumn{3}{l}{Outcome with lower priority:}  & \\ 
        \multicolumn{3}{l}{OHS at 6 months} & \multicolumn{3}{l}{EuroQol score at 6  months}  & \\ \cline{1-6}
		$(Y^{1}_i,V^{1}_j)$ & Ordering & Label & $(Y^{2}_i,V^{2}_j)$ & Ordering & Label & $\sigma_{ij}$ \\ \hline
		$V_j^1-Y_i^1>0$ &  $V^{1}_j \succ Y^{1}_i$ & Favorable & Any & --- & Not considered & $+ 1$ \\
		$V_j^1-Y_i^1<0$ & $V^{1}_j \prec Y^{1}_i$ & Unfavorable & Any & --- & Not considered & $-1$\\
		$V_j^1-Y_i^1=0$ & $V^{1}_j \bowtie Y^{1}_i$ & Tie/neutral & $Y^{2}_i-V^{2}_j<-5$ & $V^{2}_j \succ Y^{2}_i$ & Favorable & $+ 1$\\
		$V_j^1-Y_i^1=0$ & $V^{1}_j \bowtie Y^{1}_i$ & Tie/neutral &  $Y^{2}_i-V^{2}_j>5$ & $V^{2}_j \prec Y^{2}_i$ & Unfavorable & $-1$\\
		$V_j^1-Y_i^1=0$ & $V^{1}_j \bowtie Y^{1}_i$ & Tie/neutral & $ \vert Y^{2}_i-V^{2}_j \vert \leq 5$ & $V^{2}_j \bowtie Y^{2}_i$ & Tie/neutral & $0$\\
		\hline
	\end{tabular}
\end{table}

The ITR derivation was based on 61 covariates collected before randomization, 47 of which being categorical. The list of the covariates included in the model is available in the online Appendix \ref{sec:covar}.

To estimate the ITRs, we used the methods described in Section \ref{sec:ITRclassif}. We first imputed the missing data using multivariate imputation by chained equations \cite{vanBuuren11}. For the sake of illustration, we only used one imputed dataset, while in real application the process should be repeated several times.  We used a random forest classifier as based learner. The hyper-parameters (number of trees and number of features per tree) were chosen via a 5-fold cross-validation. 

We derived the optimal individual treatment rule (ITR) using the entire dataset. The characteristics of patients, grouped by the recommended treatment, are summarized in Table \ref{tab:char_recom1}. We write $r^{\opt}_h$ for the optimal rule associated with the score of Table \ref{tab:2levelscore}. 

We estimated the AIPB for the rule $r_1$, which assigns rt-PA to all patients, using a G-computation-type approach. To mitigate overfitting, we employed a 3-fold cross-fitting procedure: the $\IPB$ was learned on two folds, and the AIPB was estimated on the remaining fold. This procedure was repeated three times, each time rotating the role of the held-out fold. Finally, the estimates from each fold were averaged. The standard error was computed using 200 bootstrap iterations.  We also estimated the AIPB of $r^{\opt}_h$ relative to both $r_1$ and $r_0$ (no treatment). This corresponds to evaluating the rule learned on the entire dataset, along with the associated standard errors conditional to the dataset. To mitigate overfitting, we employed a leave-4-out procedure. This approach ensures that the learned rules are similar across folds, allowing us to reasonably assume that the estimated $\IPB$ values are also consistent from one fold to another.

Overall, results showed a modest benefit of rt-PA, with a proportion in favor of treatment of 4.6\% (Table \ref{tab:result_ist3}). In addition, following a personalized treatment strategy further improved the outcome, both compared to giving rt-PA to everyone or to nobody (Table \ref{tab:result_ist3}).

\begin{table}[htb]
	\centering
	\caption{Estimate of the AIPB for the rule treating everyone with rt-PA and for the estimated optimal rules together with the standard errors. The last two standard errors are conditional to the training set.}\label{tab:result_ist3}
	\begin{tabular}{ccc}
    \hline
    \rule{0pt}{2.8ex} 
        Treatment rules & Mean & Standard error \\ \hline 
		$\widehat{\AIPB}_{(r_0,r_1)}$ & 0.046 & (0.019)\\
        $\widehat{\AIPB}_{(r_1,\hat{r}_h^{\opt})}$ & 0.040 & (0.001) \\
        $\widehat{\AIPB}_{(r_0,\hat{r}_h^{\opt})}$ & 0.089 & (0.002)\vspace{0.1cm}\\
    \hline
	\end{tabular}
\end{table}
A closer look at Table \ref{tab:char_recom1} shows that rt-PA is more often recommended for patients with mild-to-moderate stoke (NIHSS 6–15) and who have better general condition. Interestingly, rt-PA was not less recommended for patients with longer delays from symptoms onset. However, there is no evident simple rule to decide upon treatment, thereby illustrating the interest of our approach given the overall advantage of a personalized treatment strategy. 
{\footnotesize
\begin{longtable}{lll}
\caption{ Characteristics of patients according to the treatment recommended by $\hat{r}_h^{\opt}$. NIHSS=National Institutes of Health Stroke Scale. TACI=total anterior circulation infarct. PACI=partial anterior circulation infarct. LACI=lacunar infarct. POCI=posterior circulation infarct.}
\label{tab:char_recom1}\\
\hline
& \textbf{rt-PA} & \textbf{No rt-PA} \\
& \textbf{recommended} & \textbf{recommended} \\
& \textbf{(n=1670)} & \textbf{(n=1365)} \\
\hline
\endfirsthead

\hline
& \textbf{rt-PA} & \textbf{No rt-PA} \\
& \textbf{recommended} & \textbf{recommended} \\
& \textbf{(n=1670)} & \textbf{(n=1365)} \\
\hline
\endhead
\multicolumn{3}{l}{\textbf{Baseline variables}} \\
\hline
\multicolumn{3}{l}{\textbf{Age (years)}} \\
18--50 & 4.7\% &  3.6\% \\
51--60 & 6.5\% & 6.9\%  \\
61--70 & 12.3\% & 11.7\%\\
71--80 & 22.3\% & 25.8\%  \\
81--90 & 47.7\% & 44.7\% \\
$>$90 & 6.6\% & 7.3\% \\
\hline
\multicolumn{3}{l}{\textbf{Sex}} \\
Female & 52.9\% & 50.3\% \\
\hline
\multicolumn{3}{l}{\textbf{NIHSS}} \\
0--5 & 20.3\% & 19.9\% \\
6--10 & 28.7\% & 27.3\% \\
11--15 & 19.6\% & 20.0\% \\
16--20 & 17.8\% & 18.0\% \\
$>$20 &  13.5\% & 14.7\% \\
\hline
\multicolumn{3}{l}{\textbf{Delay in randomisation}} \\
0--3.0 h & 21.0\% & 19.7\% \\
3.0--4.5 h & 36.9\% & 39.0\% \\
4.5--6.0 h & 38.0\% & 37.6\% \\
$>$6.0 h & 4.1\% & 3.7\% \\
\hline
Atrial fibrillation & 29.9\% & 30.4\% \\
\hline
\multicolumn{3}{l}{\textbf{Systolic blood pressure}} \\
$\leq$143 mm Hg & 34.9\% & 31.8\% \\
144--164 mm Hg & 31.6\% & 33.0\% \\
$\geq$165 mm Hg & 33.5\% & 35.2\% \\
\hline
\multicolumn{3}{l}{\textbf{Diastolic blood pressure}} \\
$\leq$74 mm Hg & 34.1\% & 32.8\% \\
75--89 mm Hg & 33.8\% & 34.0\% \\
$\geq$90 mm Hg & 32.2\% & 33.2\% \\
\hline
\multicolumn{3}{l}{\textbf{Blood glucose}} \\
$\leq$5 mmol/L & 20.8\% & 18.5\% \\
6--7 mmol/L & 46.8\% & 48.8\% \\
$\geq$8 mmol/L & 32.5\% & 32.7\% \\
\hline
Treatment with antiplatelet drugs in previous 48 h & 51.0\% & 52.0\% \\
\hline
\multicolumn{3}{l}{\textbf{Predicted probability of poor outcome at 6 months}} \\
$<$40\% & 54.3\% & 54.6\% \\
40--50\% & 11.4\% & 9.9\% \\
50--75\% & 23.1\% & 26.1\% \\
$\geq$75\% & 11.2\% & 9.5\% \\
\hline
\multicolumn{3}{l}{\textbf{Stroke clinical syndrome}} \\
TACI & 42.7\% & 43.4\% \\
PACI & 37.7\% & 37.8\% \\
LACI & 11.6\% & 10.2\% \\
POCI & 7.8\% & 8.4\% \\
Other & 0.2\% &0.1\% \\
\hline
\multicolumn{3}{l}{\textbf{Baseline variables collected from prerandomisation scan}} \\
\multicolumn{3}{l}{\textbf{Expert reader's assessment of acute ischaemic change}} \\
Scan completely normal & 9.2\% & 9.2\% \\
Scan not normal but no sign of acute ischaemic change & 50.4\% & 50.6\% \\
Signs of acute ischaemic change & 40.4\% & 40.1\% \\
\hline
\end{longtable}
}

\section{Discussion} \label{sec:discussion}

This paper introduces an optimality criterion for individualized treatment rules (ITRs), based on net treatment benefit, and estimation methods using Generalized Pairwise Comparison \cite{Buyse:2010aa}. We present two estimators: a nearest-neighbor approach and a meta-learner similar to the S-learner. Our framework accommodates prioritized outcomes, enabling multi-outcome treatment personalization beyond simple benefit-risk assessments. While prioritized outcomes have been explored in treatment personalization \cite{Duke2024}, our work uniquely applies a pairwise comparison perspective.

Furthermore, our approach is applicable to single-outcome settings with individualized net benefit, maintaining a computational complexity comparable to the S-learner. While it accurately estimates the Conditional Average Treatment Effect (CATE) in binary outcomes, it offers neither a distinct advantage nor a theoretical disadvantage. However, its memory footprint is substantial, scaling quadratically with N=n+m, the total number of patients, limiting its practicality for large datasets. This limitation is mitigated by our proposed bagging approach. Conversely, for continuous outcomes, the situation differs, and this method may be of interest. Specifically, it can filter out small, clinically insignificant differences (e.g., below the minimal detectable change or minimal clinically important difference) while retaining a quantitative measure of the outcome. This minimizes information loss compared to binarization.

As mentioned earlier, the classification-based approach is not very efficient due to the large number of pairs involved, which leads to a significant memory footprint. Furthermore, the framework is mathematically complex, posing challenges even in defining the relevant objects and making it difficult to establish the properties of the estimators, mainly due to the non-independence of pairs. One possible way to address this issue is through Generalized Estimating Equations (GEE), and specifically the Probabilistic Index Model (PIM) \cite{Thas2012}, which can handle sparsely correlated variables in a semi-parametric framework. However, we choose to situate our approach in a non-parametric setting. Additionally, we have not studied confidence intervals in this work, but for conditionally U-statistic-based estimators, asymptotic normality could be established using standard arguments. Nevertheless, inference remains an open research question for the classification-based approach.

This work has been developed for both binary and continuous outcomes, but handling censored data would be valuable, although it presents significant difficulties \cite{Deltuvaite2023,Dong2020}. This complexity is already evident in the standard Generalized Pairwise Comparison framework. Further exploration of this aspect would be useful. While our approach has been applied in the context of randomized trials, it can be extended quite naturally to observational settings. Additionally, although we focused on the net benefit criterion, our methodology is likely closely related to the win ratio. However, the latter is a ratio rather than a difference.


\section*{Acknowledgments}
Francois Petit acknowledge support
by the French Agence Nationale de la Recherche through the project reference ANR-22-CPJ1-0047-01. Raphaël Porcher acknowledges support by the French Agence Nationale de la Recherche as part of the “Investissements d’avenir” program, reference ANR-19-P3IA-0001 (PRAIRIE 3IA Institute). François Petit is grateful to Mathieu Even and Julie Josse for their insightful and stimulating discussions during the writing of this paper.

\bibliographystyle{sim}
\bibliography{GPC}

\clearpage
\appendix
\setcounter{page}{1}
\setcounter{table}{0}
\renewcommand\thetable{\thesection.\arabic{table}}
\setcounter{figure}{0}
\renewcommand\thefigure{\thesection.\arabic{figure}}

\section*{Appendix}

\section{Lebesgue differentiablity}

Lebesgue differentiability is fundamental to our approach, as it ensures the well-definedness of $\IPB_{(r_0,r_1)}$. We thus dedicate this section to a brief review of this concept.

We consider a metric space $(\mathcal{X},d)$ equipped with its Borel $\sigma$-algebra and $\mu$ a locally finite Borel regular measure. We denote by $B(x,r)$ the closed ball of center $x \in \mathcal{X}$ and radius $r$, i.e.,
\begin{equation*}
    B(x,r)=\{z \in \mathcal{X} | d(x,z) \leq r \}
\end{equation*}
where $r \in \R_{>0}$.
\begin{definition}
    Let $f\colon \mathcal{X} \to \R$ be a locally integrable function. We say that $x \in \mathcal{X}$ in the support of $\mu$ is a Lebesgue point of $f$ if
    \begin{equation*}
        \lim_{r \to 0^+} \frac{1}{\mu(B(x,r))} \int_{B(x,r)} |f(z)-f(x)| \, \mu(dz) = 0.
    \end{equation*}
\end{definition}
We remark that if $\mu$ is the probability distribution of a random variable $X$, then
\begin{equation*}
    \frac{1}{\mu(B(x,r))} \int_{B(x,r)} |f(z)-f(x)| \, \mu(dz)=\Esp(|f(X)-f(x)| \, | X \in B(x,r) ).
\end{equation*}
\begin{remark}
\begin{enumerate}[(i)]
    \item Let $f$ and $g$ be two locally integrable functions such that $f=g$ $\mu$-almost everywhere. Assume that $x$ is a Lebesgue point of $f$. Then
    \begin{equation*}
        \lim_{r \to 0^+} \frac{1}{\mu(B(x,r))} \int_{B(x,r)} |g(z)-f(x)| \, \mu(dz) = 0.
    \end{equation*}
    \item Consider the metric space $(\mathcal{X},d)$. Assume it is endowed  with a probability measure $\mu$. Let $f\colon \mathcal{X} \to \R$ be a locally integrable function. If $\mu(x)>0$, then $x$ is a Lebesgue point of $f$. This setting applies in particular to the situation where $\mathcal{X}$ is at most countable and $\mu$ is a discrete probability measure. In this setting, we can endow $\mathcal{X}$ with the discrete distance i.e., $d(x,y)=0$ if $x=y$ and  $d(x,y)=1$ otherwise .
    \item Assume that $f\colon \mathcal{X} \to \R$ is a continuous function. Then every point of $\mathcal{X}$ is a Lebesgue point of $f$.
\end{enumerate}
\end{remark}
One of the key results of Lebesgue differentiation theory is the Lebesgue differentiation theorem. This results holds in various level of generality. Here, we state it for Radon measure on a real finite dimensional normed vector space. We refer the reader to \cite[Theorem 2.9.8]{Fed} (together with  \cite[\textsection 2.8.9]{Fed} and \cite[Theorem 2.8.18]{Fed}) for a proof.

\begin{theorem}\label{thm:Lebesgue_diff}
    Let $(E,\norm{\cdot})$ be a real finite dimensional vector space endowed with a radon measure $\mu$. Let $f \colon E \to \R$ be a locally integrable function. Then $\mu$-almost every point of $E$ is a Lebesgue point of $f$.
\end{theorem}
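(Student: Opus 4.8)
The plan is to prove the statement through the Hardy--Littlewood maximal function combined with a geometric covering argument, and then to upgrade "differentiation of the integral" to the full Lebesgue point property via a countable dense family of test values. First I would fix a linear isomorphism $E \cong \R^n$ with $n = \dim E$, transporting $\mu$ and the norm $\norm{\cdot}$ to $\R^n$; the balls $B(x,r)$ are taken with respect to the given (not necessarily Euclidean) norm. Since $\mu$ is Radon, continuous compactly supported functions are dense in $L^1_{\mathrm{loc}}(\mu)$, and because the assertion is local it suffices to fix a bounded open set and work with $f \in L^1(\mu)$.

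The crucial tool is the covering step. Because $\mu$ is an arbitrary Radon measure, I cannot assume the doubling property $\mu(B(x,2r)) \leq C\,\mu(B(x,r))$, so the usual Vitali argument is unavailable. Instead I would invoke the Besicovitch covering theorem, valid since a finite-dimensional normed space is directionally limited: for any bounded set $A$ and any assignment of a closed ball $B(x,r_x)$ to each $x \in A$, there is a countable subfamily covering $A$ whose overlap is bounded by a constant $N = N(n,\norm{\cdot})$ depending only on the dimension and the norm. This purely geometric fact is what replaces doubling and makes the result hold for general Radon measures.

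Third, I would introduce the maximal operator $Mf(x) = \sup_{r>0} \tfrac{1}{\mu(B(x,r))} \int_{B(x,r)} |f|\,\mu(dz)$ and use Besicovitch to obtain the weak-type $(1,1)$ estimate $\mu(\{Mf > \lambda\}) \leq \tfrac{N}{\lambda}\,\norm{f}_{L^1(\mu)}$. For a continuous function $g$, the oscillation $\tfrac{1}{\mu(B(x,r))}\int_{B(x,r)} |g(z)-g(x)|\,\mu(dz)$ tends to $0$ everywhere on $\supp\mu$ by continuity. For general $f$, I would write $f = g + h$ with $g$ continuous and $\norm{h}_{L^1(\mu)}$ small, split the oscillation, and bound the contribution of $h$ by $Mh(x) + |h(x)|$. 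Combining the weak-type bound with Markov's inequality forces $\limsup_{r\to 0}$ of the oscillation to vanish outside a set of arbitrarily small measure, hence $\mu$-almost everywhere.

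Finally, to reach the Lebesgue point property itself, I would apply the previous step to each function $z \mapsto |f(z)-c|$ for $c$ in a countable dense set $Q \subset \R$: at $\mu$-a.e.\ $x$ one has $\tfrac{1}{\mu(B(x,r))}\int_{B(x,r)} |f(z)-c|\,\mu(dz) \to |f(x)-c|$. Intersecting these countably many full-measure sets and selecting, for $\mu$-a.e.\ $x$, a sequence $c_k \to f(x)$ in $Q$, a triangle-inequality estimate yields $\tfrac{1}{\mu(B(x,r))}\int_{B(x,r)} |f(z)-f(x)|\,\mu(dz) \to 0$, which is exactly the Lebesgue point condition. The main obstacle is the covering step: without doubling, the Vitali lemma no longer yields a maximal inequality, and the entire argument rests on the Besicovitch covering theorem to supply a bounded-overlap subfamily for an arbitrary Radon measure and a general norm.
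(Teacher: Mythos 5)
Your proof is correct and follows essentially the same route the paper itself relies on: the paper does not prove this theorem but cites Federer (Theorem 2.9.8 together with \S 2.8.9 and Theorem 2.8.18), and that machinery---directionally limited metric spaces, i.e.\ the Besicovitch covering property for balls of an arbitrary norm on a finite-dimensional space---is precisely the covering-plus-maximal-function argument you give to compensate for the absence of a doubling condition for a general Radon measure. The only remark is that your final step with a countable dense set of values $c$ is redundant: your $f=g+h$ splitting already controls the full oscillation $\frac{1}{\mu(B(x,r))}\int_{B(x,r)}|f(z)-f(x)|\,\mu(dz)$, so it yields the Lebesgue point property directly rather than merely the differentiation of averages.
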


\section{Individual pairwise comparison} \label{app:existence}
\subsection{Mathematical soundness}

\subsubsection{Technical assumptions} 

The aim of this section is to demonstrate that $\IPB$ is well-defined. At first glance, $\IPB$ appears to be defined as the restriction to a set of measure zero of an object-initially specified only up to a set of measure zero. However, this issue is merely superficial, and under suitable assumptions, $\IPB$ is indeed well-defined. Our approach involves constructing specific integrable functions that represent the conditional expectations $\Esp(\sigma(Y(i), V(j)) \vert X, U)$ for $i=0,1$ and $j=0,1$.
 
We assume that the space $\mathcal{X}$ of patient's characteristics is endowed with a distance $d_\mathcal{X}$. The product of two metric spaces is not canonically a metric space. Hence, the space of pairs of patient's characteristics $\mathcal{X} \times \mathcal{X}$ is not naturally endowed with a distance. Depending of situation we equip $\mathcal{X} \times \mathcal{X}$ with either the distance $d_{\mathcal{X} \times \mathcal{X}}((x,u),(x^\prime,u^\prime)) := \sqrt{d^2_\mathcal{X}(x,u)+d^2_\mathcal{X}(x^\prime,u^\prime)}$ or $d_{\mathcal{X} \times \mathcal{X}}((x,u),(x^\prime,u^\prime)):=\max(d_\mathcal{X}(x,u),d_\mathcal{X}(x^\prime,u^\prime))$

We denote the space of outcome by $(\mathcal{Y},d_\mathcal{Y})$ and assume it is a metric space endowed with a Borel measurable function $\sigma \colon \mathcal{Y} \times \mathcal{Y} \to \{-1;0;1\}$ such that $\sigma(y,v)=-\sigma(v,y)$. We write 
\begin{equation}\label{eq:score_function}
\begin{cases}
b \prec a \; \textnormal{if} \; \sigma(a,b)=-1    \\
a \bowtie b \; \textnormal{if} \;  \sigma(a,b)=0\\
b \succ a \; \textnormal{if} \; \sigma(a,b)=1 .\\
\end{cases}
\end{equation}
We further assume that  we are provided with two integrable iid populations, namely, $(X,Y(0),Y(1))$ and  $(U,V(0),V(1))$ which are treated respectively according to the rule $r \colon \mathcal{X} \to \{0;1\}$ and $s \colon \mathcal{X} \to \{0;1\}$ (which are assumed to be measurable).  The law of these populations is a Borel probability measure denoted $\mu$. Since $X$ and $U$ are iid the law of $(X,U)$, denoted $\mu_{X,U}$, is given by the product measure $\mu \otimes \mu$. Without loss of generality, we assume that $\mathcal{X}=\supp(\mu)$.

Here is our key hypothesis. We assume that for any pair $(x,u) \in \mathcal{X} \times \mathcal{X}$ and $i=0,1$, $j=0,1$ the following limits exist 
\begin{align} \label{hyp:funda}
 \lim_{\varepsilon \to 0} \Esp(\sigma(Y(i),V(j)) \vert (X,U) \in B((x,u),\varepsilon)) \tag{H.1}
\end{align}

and set
\begin{align}
    \Delta_{(r_0,r_1)}(x,u):=& \lim_{\varepsilon \to 0} \Esp(\sigma(Y(0),V(1)) \vert (X,U) \in B((x,u),\varepsilon)) \label{lim:delta01},\\
    \Delta_{(r_i,r_i)}(x,u):=&\lim_{\varepsilon \to 0} \Esp(\sigma(Y(i),V(i)) \vert (X,U) \in B((x,u),\varepsilon)). \label{lim:delta00}
\end{align}
 This implies in particular that the value $\Delta_{(r_0,r_1)}(x,x)$ is well-defined or in other word that $\IPB(x)$ is well-defined.
The Hypothesis \ref{hyp:funda} can be understood as claiming that the pairwise benefit of a pair is the average of the pairwise benefit of the neighboring pairs.

We need to verify that the functions $\Delta_{(r_i,r_j)}$ represent the conditional expectations $\Esp(\sigma(Y(i),V(j)) \vert X,U)$. This is the purpose of the following proposition.
\begin{proposition} \label{prop:espconradon}
Assume that
\begin{enumerate}[(i)]
    \item $\mathcal{X} \subset \R^d$,
    \item $d_{\mathcal{X} \times \mathcal{X}}$ is induced by a norm on $\R^d \times \R^d$,
    \item $\mu$ is a Radon measure.
\end{enumerate}
Then $\Delta_{(r_i,r_j)}$ is an integrable function on $\mathcal{X} \times \mathcal{X}$ such that $\Delta_{(r_i,r_j)}(X,U)=\Esp(\sigma(Y(i),V(j)) \vert X,U)$ almost surely.
\end{proposition}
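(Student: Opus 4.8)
The plan is to exhibit one fixed measurable representative $f$ of the conditional expectation and then invoke the Lebesgue differentiation theorem (Theorem~\ref{thm:Lebesgue_diff}) to show that the everywhere-defined function $\Delta_{(r_i,r_j)}$ coincides with $f$ almost everywhere. Since $\sigma$ takes values in $\{-1,0,1\}$, the variable $\sigma(Y(i),V(j))$ is bounded, hence integrable, and admits a conditional expectation $f(x,u):=\Esp(\sigma(Y(i),V(j))\mid X=x,\,U=u)$, which I fix once and for all as a Borel-measurable function bounded by $1$; being bounded, $f$ is integrable for the probability measure $\mu\otimes\mu$, and a fortiori locally integrable on $\mathcal{X}\times\mathcal{X}$.

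First I would verify the hypotheses of Theorem~\ref{thm:Lebesgue_diff} for $f$ on the space $(\R^d\times\R^d,d_{\mathcal{X}\times\mathcal{X}})$. By assumption (ii) the distance $d_{\mathcal{X}\times\mathcal{X}}$ is induced by a norm, and by assumptions (i) and (iii), together with the $\sigma$-compactness of $\R^d$, the product measure $\mu\otimes\mu$---which is the law $\mu_{X,U}$ of $(X,U)$---is again a Radon measure on $\R^d\times\R^d$. The theorem then yields a set of full $\mu\otimes\mu$-measure consisting of Lebesgue points of $f$.

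Next, fix such a Lebesgue point $(x,u)$. By the definition of a Lebesgue point and the triangle inequality, the averages of $f$ over shrinking balls converge to $f(x,u)$:
\begin{equation*}
\lim_{r\to 0^+}\frac{1}{(\mu\otimes\mu)(B((x,u),r))}\int_{B((x,u),r)} f\,d(\mu\otimes\mu)=f(x,u).
\end{equation*}
Using the identity from the remark in the Lebesgue differentiability section (which rewrites such a ball-average as a conditional expectation) together with the tower property $\Esp(f(X,U)\mid (X,U)\in B)=\Esp(\sigma(Y(i),V(j))\mid (X,U)\in B)$, the left-hand side equals $\lim_{r\to 0^+}\Esp(\sigma(Y(i),V(j))\mid (X,U)\in B((x,u),r))$, which is exactly $\Delta_{(r_i,r_j)}(x,u)$ by Hypothesis~\ref{hyp:funda} and the defining formulas \eqref{lim:delta01}--\eqref{lim:delta00}. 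Hence $\Delta_{(r_i,r_j)}(x,u)=f(x,u)$ at every Lebesgue point, i.e. $\mu\otimes\mu$-almost everywhere. Consequently $\Delta_{(r_i,r_j)}$ equals the integrable function $f$ almost everywhere and therefore represents $\Esp(\sigma(Y(i),V(j))\mid X,U)$, which is the assertion.

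I expect the main difficulty to be measure-theoretic bookkeeping rather than the analytic core: namely (a) confirming that $\mu\otimes\mu$ is genuinely Radon on the normed space $\R^d\times\R^d$ so that Theorem~\ref{thm:Lebesgue_diff} applies, and (b) keeping straight that Hypothesis~\ref{hyp:funda} guarantees the ball-limit exists \emph{everywhere}, whereas the differentiation theorem only pins down its value ($=f$) on a full-measure set. This discrepancy is harmless, since only the almost-everywhere identification is needed both for integrability and for the conditional-expectation representation claimed in the statement.
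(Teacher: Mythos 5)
Your proposal is correct and follows essentially the same route as the paper's own proof: fix an integrable representative of the conditional expectation, rewrite the ball-conditioned expectation as a ball-average of that representative, note that $\mu\otimes\mu$ is Radon on $\R^d\times\R^d$, and apply the Lebesgue differentiation theorem to identify the everywhere-defined limit $\Delta_{(r_i,r_j)}$ with the representative $\mu\otimes\mu$-almost everywhere. The only cosmetic difference is the justification that $\mu\otimes\mu$ is Radon (you invoke $\sigma$-compactness where the paper cites second countability of $\R^d$), which changes nothing of substance.
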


\begin{proof}
Let $\phi_{(r_i,r_j)}$ be an integrable function such that i.e.,
$\Esp(\sigma(Y(i),V(j)) \vert X,U)=\phi_{(r_i,r_j)}(X,U)$ almost surely. Hence,
\begin{align*}
    \Esp(\sigma(Y(i),V(j)) \vert (X,U) \in B((x,u),\varepsilon))= \frac{1}{\mu \otimes \mu (B((x,u),\varepsilon))} \int_{B((x,u),\varepsilon)} \phi_{(r_i,r_j)}(a,b) \; d\mu(a) \otimes d\mu(b).
\end{align*}
Since $\R^d$ is second countable, $\mu \otimes \mu$ is again a Radon measure. Then, Theorem \ref{thm:Lebesgue_diff} implies that  
\begin{equation*}
\lim_{\varepsilon \to 0}    \frac{1}{\mu \otimes \mu (B((x,u),\varepsilon))} \int_{B((x,u),\varepsilon)} \phi_{(r_i,r_j)}(a,b) \; d\mu(a) \otimes d\mu(b) = \phi_{(r_i,r_j)}(x,u)
\end{equation*}
for $\mu \otimes \mu$-almost every $(x,u) \in \mathcal{X} \times \mathcal{X}$. Thus $\Delta_{(r_i,r_j)}=\phi_{(r_i,r_j)} \,\, \mu \otimes \mu-a.e.$ which conclude the proof.
\end{proof}

\begin{remark}
 The assumption \eqref{hyp:funda} together with the assumption $(i)$, $(ii)$  and $(iii)$ of Proposition \ref{prop:espconradon} can be replaced by the following hypothesis.
    \begin{enumerate}[($\textnormal{H}^\prime$.1)]
   \item There exist a $d_{\mathcal{X} \times \mathcal{X}}-\mu \otimes \mu $-Lebesgue differentiable function on $\mathcal{X} \times \mathcal{X}$ denoted $\Delta_{(r_0,r_1)}$ such that
    \begin{equation*}
\Esp(\sigma(Y(0),V(1)) \vert X,U)=\Delta_{(r_0,r_1)}(X,U) \quad \textnormal{almost surely}
\end{equation*}
\item For $i=0, 1$ There exist a $d_{\mathcal{X} \times \mathcal{X}}-\mu \otimes \mu$-Lebesgue differentiable function on $\mathcal{X} \times \mathcal{X}$ denoted $\Delta_{(r_i,r_i)}$ such that
   \begin{equation*}
\Esp(\sigma(Y(i),V(i)) \vert X,U)=\Delta_{(r_i,r_i)}(X,U) \quad \textnormal{almost surely}.
    \end{equation*}
\end{enumerate}
The hypothesis ($\textnormal{H}^\prime$.1) and ($\textnormal{H}^\prime$.2) imply hypothesis (H.1) together with the conclusion of Proposition \ref{prop:espconradon}. Finally, note that the existence of continuous functions representing respectively the conditional expectations $\Esp(\sigma(Y(0),V(1) \vert X, U)$ and $\Esp(\sigma(Y(i),V(i) \vert X, U)$ $i=1,0$ implies hypothesis ($\textnormal{H}^\prime$.1) and ($\textnormal{H}^\prime$.2).
\end{remark}

\begin{remark}
    An approach through regular conditional distributions also work to show that $\IPB$ is a well defined mathematical object. We only briefly sketch it. Assume that regular conditional distributions of $Y(0)$ with respect to $X$ and $V(1)$ with respect to $U$ exist (see for instance \cite{Chang1997} for sufficient conditions for the existence of such objects) and denote them respectively by $\kappa^0$ and $\kappa^1$. Since $(X,Y(0))$ and $(U,V(1))$ are independent, the function $(x,u) \mapsto  \int_{\mathcal{Y} \times \mathcal{Y}} \sigma(y^0,v^1) \kappa^0_x(dy^0) \kappa^1_u(dv^1)$ is a representative of  $\Esp(\sigma(Y(0),V(1)) \vert X,U)$. Moreover, if $\nu^0$ and $\nu^1$ are  conditional distributions of $Y(0)$ with respect to $X$ and $V(1)$ with respect to $U$ then for every mesurable subset $A$ of $\mathcal{Y}$, $\kappa_x^0(A)=\nu_x^0(A)$ and $\kappa_x^1(A)=\nu_x^1(A)$ for $\mu$-almost every $x \in \mathcal{X}$. This implies that the function $x \mapsto  \int_{\mathcal{Y} \times \mathcal{Y}} \sigma(y^0,v^1) \kappa^0_x(dy^0) \kappa^1_x(dv^1)$ is well-defined on $\mathcal{X}$ up to a set of measure zero. This prove that $\IPB$ is well defined.
\end{remark}

\subsubsection{Properties}

We now look at the specific properties of the $\Delta_{(r_i,r_j)}$. Since $(X,Y(0),Y(1))$ and $(U,V(0),V(1))$ are independent and identically distributed, it follows that for any representative $\varphi_{(r_0,r_1)}(x,u)$ and $\varphi_{(r_1,r_0)}(x,u)$ of the conditionnal expectations $\Esp(\sigma(Y(i),V(j)) \vert X,U)$ and $\Esp(\sigma(Y(j),V(i)) \vert X,U)$, we have for $i=0,\,1$ and $j=0, \, 1$
\begin{align}\label{eq:relfond}
    \varphi_{(r_i,r_j)}(x,u)=-\varphi_{(r_j,r_i)}(u,x) \quad \mu \otimes \mu-a.e.
\end{align}
It follows from \eqref{eq:relfond} that the conditional expectation $\Esp(\sigma(Y(1),V(0)) \vert X,U)$ can be represented by the Lebesgue differentiable function defined by
\begin{equation*}
   \Delta_{(r_1,r_0)}(x,u):=-\Delta_{(r_0,r_1)}(u,x). 
\end{equation*}

\begin{lemma}\label{lem:annulation_delta}
For every $x \in \mathcal{X}$, $\Delta_{(r_i,r_i)}(x,x)=0$ \, for $i=0,1$.
\end{lemma}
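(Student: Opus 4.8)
The plan is to exploit the exchange symmetry between the two iid populations together with the antisymmetry $\sigma(y,v)=-\sigma(v,y)$ of the score, working directly at the level of the defining limit \eqref{lim:delta00} rather than through the almost-everywhere identity \eqref{eq:relfond}. The latter would tempt one to read off the result by setting $u=x$, but its conclusion is only valid $\mu\otimes\mu$-almost everywhere, and the diagonal $\{(x,x)\}$ is a $\mu\otimes\mu$-null set, so such a shortcut is illegitimate.

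First I would record that balls centered on the diagonal are invariant under the coordinate swap $T(a,b)=(b,a)$. Indeed, both candidate distances on $\mathcal{X}\times\mathcal{X}$ considered in this section are symmetric in their two factors, so $d_{\mathcal{X}\times\mathcal{X}}((a,b),(x,x))=d_{\mathcal{X}\times\mathcal{X}}((b,a),(x,x))$, whence $T\big(B((x,x),\varepsilon)\big)=B((x,x),\varepsilon)$. Next I would invoke the modeling assumptions: since $(X,Y(i))$ and $(U,V(i))$ are iid, the joint law of the quadruple $(X,Y(i),U,V(i))$ is invariant under the relabeling $S\colon (X,Y(i),U,V(i))\mapsto (U,V(i),X,Y(i))$ that interchanges the two individuals.

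Combining these two facts is the crux of the argument. Under $S$, the integrand $\sigma(Y(i),V(i))\mathbf{1}_{\{(X,U)\in B((x,x),\varepsilon)\}}$ is sent to $\sigma(V(i),Y(i))\mathbf{1}_{\{(U,X)\in B((x,x),\varepsilon)\}}$, which by antisymmetry of $\sigma$ and by $T$-invariance of the ball equals $-\sigma(Y(i),V(i))\mathbf{1}_{\{(X,U)\in B((x,x),\varepsilon)\}}$. Taking expectations and using the $S$-invariance of the law yields $\Esp\big(\sigma(Y(i),V(i))\mathbf{1}_{\{(X,U)\in B((x,x),\varepsilon)\}}\big)=0$, so the conditional expectation in \eqref{lim:delta00} vanishes for every $\varepsilon>0$ whose conditioning event has positive probability. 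Since $x\in\mathcal{X}=\supp(\mu)$, we have $\mu\otimes\mu(B((x,x),\varepsilon))>0$ for all $\varepsilon>0$, so this is indeed the case; passing to the limit $\varepsilon\to 0$ gives $\Delta_{(r_i,r_i)}(x,x)=0$ for $i=0,1$.

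I do not anticipate a genuine obstacle, as the argument is essentially a symmetrization; the only point requiring care is the one already flagged, namely to carry the swap out inside each conditional expectation (thereby obtaining $\Delta_{(r_i,r_i)}(x,u)=-\Delta_{(r_i,r_i)}(u,x)$ pointwise for every $(x,u)$, not merely almost everywhere) before evaluating on the diagonal.
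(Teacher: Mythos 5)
Your proof is correct, and it takes a route that differs in implementation from the paper's. The paper proves the lemma by invoking the almost-everywhere antisymmetry \eqref{eq:relfond} (with $i=j$) and then using the assumed Lebesgue differentiability of $\Delta_{(r_i,r_i)}$ to upgrade the identity $\Delta_{(r_i,r_i)}(x,u)=-\Delta_{(r_i,r_i)}(u,x)$ from $\mu\otimes\mu$-a.e.\ to every point of $\supp(\mu\otimes\mu)$, after which setting $u=x$ gives the result. Note in particular that the shortcut you dismiss as illegitimate is essentially the paper's argument, and it is in fact sound: the Lebesgue-point property is exactly what licenses evaluating the a.e.\ identity on the null diagonal, since ball averages of two functions that agree a.e.\ coincide and both functions are, by hypothesis, the limits of their own ball averages. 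Your version implements the same underlying symmetry (iid populations plus antisymmetry of $\sigma$) one level earlier, directly on the pre-limit quantity in \eqref{lim:delta00}: the swap-invariance of balls centered at diagonal points combined with exchangeability of $(X,Y(i),U,V(i))$ forces each conditional expectation $\Esp\bigl(\sigma(Y(i),V(i))\,\vert\,(X,U)\in B((x,x),\varepsilon)\bigr)$ to vanish exactly, for every $\varepsilon>0$ with positive-probability conditioning event (which your support argument guarantees). What your route buys is self-containedness—no appeal to \eqref{eq:relfond} nor to the a.e.-to-everywhere upgrade—plus the slightly stronger conclusion that the ball-averaged quantity is identically zero rather than merely zero in the limit; what the paper's route buys is economy, since \eqref{eq:relfond} and Lebesgue differentiability are already in place and are reused elsewhere (e.g.\ to define $\Delta_{(r_1,r_0)}$).
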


\begin{proof}
Apply equation \eqref{eq:relfond}  with $i=j=0$ or $i=j=1$. Since, $\Delta_{(r_0,r_0)}$ and $\Delta_{(r_1,r_1)}$ are assumed to be Lebesgue differentiable, we have that for every $(x,u) \in \supp(\mu \otimes \mu)$ 
   \begin{align*}
        \Delta_{(r_i,r_i)}(x,u)=-\Delta_{(r_i,r_i)}(u,x).
   \end{align*}
Setting $x=u$, this implies $\Delta_{(r_i,r_i)}(x,x)=0$.
\end{proof}

\begin{lemma} \label{lem:form_can} 
Let $r$ and $s$ be two ITRs. Then
  \begin{align*}
    \Esp(\sigma(Y(r),V(s)) \vert X,U)&=r(X)s(U)\Delta_{(r_1,r_1)}(X,U)+ s(U)(1-r(X))\Delta_{(r_0,r_1)}(X,U)\\
    &+s(U)(1-r(X))\Delta_{(r_0,r_1)}(X,U)+(1-s(U))r(X) \Delta_{(r_1,r_0)}(X,U)\\
    & + (1-r(X))(1-s(U)) \Delta_{(r_0,r_0)}(X,U) \quad a.s.
\end{align*}  
\end{lemma}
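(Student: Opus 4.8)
The plan is to reduce the computation to the four ``pure'' regimes $r(X),s(U)\in\{0,1\}$ and then extract the policy indicators from the conditional expectation. First I would use the consistency assumption $Y=Y(r)$, $V=V(s)$ together with the fact that $r(X)$ and $s(U)$ take values in $\{0,1\}$. Since, for every realisation, exactly one of the four products $r(X)s(U)$, $(1-r(X))s(U)$, $r(X)(1-s(U))$, $(1-r(X))(1-s(U))$ equals $1$ and the others vanish, one obtains the pointwise identity
\begin{align*}
\sigma(Y(r),V(s)) &= r(X)s(U)\,\sigma(Y(1),V(1)) + (1-r(X))s(U)\,\sigma(Y(0),V(1))\\
&\quad + r(X)(1-s(U))\,\sigma(Y(1),V(0)) + (1-r(X))(1-s(U))\,\sigma(Y(0),V(0)).
\end{align*}
This is checked by inspecting each of the four cases for the pair $(r(X),s(U))$: in every case both sides collapse to the same value $\sigma(Y(a),V(b))$, so no measure-theoretic argument is needed here.

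Next I would apply $\Esp(\,\cdot\mid X,U)$ to both sides. The coefficients $r(X)$, $s(U)$ and their complements are $\sigma(X,U)$-measurable, so by linearity and the pull-out property of conditional expectation they factor out, giving
\begin{align*}
\Esp(\sigma(Y(r),V(s))\mid X,U) &= r(X)s(U)\,\Esp(\sigma(Y(1),V(1))\mid X,U)\\
&\quad + (1-r(X))s(U)\,\Esp(\sigma(Y(0),V(1))\mid X,U)\\
&\quad + r(X)(1-s(U))\,\Esp(\sigma(Y(1),V(0))\mid X,U)\\
&\quad + (1-r(X))(1-s(U))\,\Esp(\sigma(Y(0),V(0))\mid X,U).
\end{align*}

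To conclude, I would substitute the chosen representatives of these conditional expectations. By Proposition~\ref{prop:espconradon}, each $\Esp(\sigma(Y(i),V(j))\mid X,U)$ is represented $\mu\otimes\mu$-almost everywhere by $\Delta_{(r_i,r_j)}(X,U)$; for the mixed term $\Esp(\sigma(Y(1),V(0))\mid X,U)$ I would invoke relation~\eqref{eq:relfond} together with the convention $\Delta_{(r_1,r_0)}(x,u):=-\Delta_{(r_0,r_1)}(u,x)$ recorded just before Lemma~\ref{lem:annulation_delta}. Inserting these representatives yields the stated formula, the equality holding almost surely.

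The argument is essentially routine, and there is no serious obstacle. The only point demanding a little care is the passage from the pointwise identity to the conditional-expectation identity: one must justify that the $\sigma(X,U)$-measurable factors $r(X)$, $s(U)$ may be pulled out of $\Esp(\,\cdot\mid X,U)$, and one must correctly match the four conditional expectations with the functions $\Delta_{(r_i,r_j)}$---in particular tracking the sign produced by \eqref{eq:relfond} in the mixed $(r_1,r_0)$ term. Since all the potential outcomes are integrable under the standing assumptions, no integrability difficulties arise.
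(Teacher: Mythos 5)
Your proof is correct and follows essentially the same route as the paper: decompose $\sigma(Y(r),V(s))$ over the four regimes $(r(X),s(U))\in\{0,1\}^2$ using the policy indicators, pull the $\sigma(X,U)$-measurable factors out of the conditional expectation, and substitute the representatives $\Delta_{(r_i,r_j)}$ (the paper merely splits $\sigma$ into the indicators $\mathbf{1}_{\{b\succ a\}}-\mathbf{1}_{\{b\prec a\}}$ before decomposing, which is an immaterial difference). Your handling of the mixed term via \eqref{eq:relfond} and the convention $\Delta_{(r_1,r_0)}(x,u):=-\Delta_{(r_0,r_1)}(u,x)$ also matches the paper's setup.
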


\begin{proof}

Writing $\mathbf{1}_{\{b \succ a\}}$ for the indicator function of the set $\{(a,b) \vert \, b \succ a\} \subset \mathcal{Y} \times \mathcal{Y}$, we obtain

\begin{align*}
   \mathbf{1}_{\{b \succ a\}}(Y(r),V(s))&=r(X)s(U) \mathbf{1}_{\{b \succ a\}}(Y(1),V(1)) 
   +(1-r(X))s(U) \mathbf{1}_{\{b \succ a\}}(Y(0),V(1))\\
   &+r(X)(1-s(U)) \mathbf{1}_{\{b \succ a\}}(Y(1),V(0))+(1-r(X))(1-s(U)) \mathbf{1}_{\{b \succ a\}}(Y(0),V(0)).
\end{align*}
Similarly,
\begin{align*}
   \mathbf{1}_{\{b \prec a\}}(Y(r),V(s))&=r(X)s(U) \mathbf{1}_{\{b \prec a\}}(Y(1),V(1)) 
   +(1-r(X))s(U) \mathbf{1}_{\{b \prec a\}}(Y(0),V(1))\\
   &+r(X)(1-s(U)) \mathbf{1}_{\{b \prec a\}}(Y(1),V(0))+(1-r(X))(1-s(U)) \mathbf{1}_{\{b \prec a\}}(Y(0),V(0)).
\end{align*}
Thus,
\begin{align*}
    \Esp(\sigma(Y(r),V(s)) \vert X,U)&=r(X)s(U)\Delta_{(r_1,r_1)}(X,U)+(1-r(X))s(U)\Delta_{(r_0,r_1)}(X,U)\\
    &+r(X)(1-s(U)) \Delta_{(r_1,r_0)}(X,U) + (1-r(X))(1-s(U)) \Delta_{(r_0,r_0)}(X,U) \; a.s.
\end{align*}
\end{proof}
Hence, in view of Lemma \ref{lem:form_can}, it is natural to set
\begin{align*}
\Delta_{(r,s)}(x,u):=& r(x)s(u)\Delta_{(r_1,r_1)}(x,u)+ s(u) (1-r(x))\Delta_{(r_0,r_1)}(x,u)\\
    &+ (1-s(u)) r(x)\Delta_{(r_1,r_0)}(x,u)+ (1-r(x))(1-s(u)) \Delta_{(r_0,r_0)}(x,u).
\end{align*}
Setting $x=u$ and using Lemma \ref{lem:annulation_delta}, we get
\begin{align*}
  \Delta_{(r,s)}(x,x)=\lbrack s(x)-r(x) \rbrack \Delta_{(r_0,r_1)}(x,x).   
\end{align*}
This leads to the following definition.
\begin{definition}\label{prop:expr_delta_rs} Let $r$ and $s$ be two ITRs. For every $x \in \supp(\mu)$,
  \begin{align*}
  \IPB_{(r,s)}(x):=\lbrack s(x)-r(x) \rbrack \Delta_{(r_0,r_1)}(x,x).   
\end{align*}  
\end{definition}
We now provide the proof of Propositions \ref{prop:opt} and \ref{prop:comp_bin}.
\begin{proof}[Proof of Proposition \ref{prop:opt}]
Let $r$ be an ITR. We want to prove that $\AIPB_{(r,s^\opt)}\geq 0$. For that purpose, it is sufficient to prove that $\IPB_{(r,s^\opt)}\geq 0$. It follows from Proposition \ref{prop:expr_delta_rs} that
\begin{align*}
    \IPB_{(r,s^\opt)}(x)&=(s^\opt(x)-r(x)) \Delta_{r_0,r_1}(x,x)\\
                        &\geq-\mathds{1}_{\{\IPB_{(r_0,r_1)}\leq 0\}} \IPB_{(r_0,r_1)}(x)\geq 0,
\end{align*}
which proves the claim.
\end{proof}

\begin{proof} [Proof of Proposition \ref{prop:comp_bin}]
Notice that
\begin{equation*}
    \mathds{1}_{ \{ V(1) > Y(0) \}}-\mathds{1}_{\{ V(1) < Y(0)\}}=V(1)-Y(0).
\end{equation*}
This implies that 
\begin{align*}
\Esp(\mathds{1}_{\{V(1) > Y(0) \}}-\mathds{1}_{\{ V(1) < Y(0)\}}|X,U)&=\Esp(V(1)-Y(0)|X,U)\\
&=\Esp(V(1)|X,U)-\Esp(Y(0)|X,U)\\
&=\Esp(V(1)|U)-\Esp(Y(0)|X).\\
\end{align*}
Then, $\Delta_{(r_0,r_1)}(x,u)=\Esp(V(1)|U=u)-\Esp(Y(0)|X=x)$. Since the pairs $(X,Y(1))$ and $(U,V(1))$ have the same distribution, one has
$\Esp(Y(1)\vert X=x)=\Esp(V(1)\vert U=x)$ at  $\mu$-almost all $x$. Hence, setting $x=u$, it follows that
\begin{equation*}
    \IPB_{(r_0,r_1)}(x)=\Esp(Y(1)|X=x)-\Esp(Y(0)|X=x) \quad \textnormal{for } \mu\textnormal{-almost all } x.
\end{equation*}
which proves the claim.
\end{proof}

\subsection{Relation with the proportion in favor of treatment} \label{sec:propinfavor}

The proportion in favor of treatment, that we denote $\Gamma$, has been introduced in \cite{Buyse:2010aa} in the context of a randomized clinical trial with stratification. It can be formalized as follows. In this case, we assume that the random variable $X$ encoding the characteristics of a patient is a discrete with values in $\{1,\ldots, K\}$. Using the notation introduced in the previous subsection, we get
\begin{equation*}
    \Gamma=\Esp(\sigma(Y(0),V(1))\vert \{X=U\}).
\end{equation*}
In particular, 
\begin{equation*}
\Gamma=\Esp(\Esp(\sigma(Y(0),V(1)) \vert X,U) |  \{X=U\} ),
\end{equation*}
that is 
\begin{equation*}
    \Gamma=\Esp(\Delta_{(r_0,r_1)}(X,U) |  \{X=U\} )
\end{equation*}
Setting 
\begin{align*}
    p_k&=\Prob(X=k),\\ 
    q_k^+&=\Prob(\sigma(Y(0),V(1))=1 |X=U=k),\\
    q^-_k&=\Prob(\sigma(Y(0),V(1))=-1 |X=U=k),
\end{align*}
we get
\begin{equation}\label{eq:gamma_disc}
    \Gamma=\frac{\sum_{k=1}^K p_k^2 [ q^+_k-q^-_k]}{\sum_{k=1}^K p_k^2}
\end{equation}
which is the estimand of the estimator proposed in \cite[Section 5.1]{Buyse:2010aa}.

The notion of proportion in favor of treatment can be generalized to the setting of pairs of ITR $(r,s)$ and continuous predictors as follows. We set $D_\varepsilon :=\{(x,u) \in  \mathcal{X} \times \mathcal{X} \vert \, d_\mathcal{X}(x,u) \leq \varepsilon \}$.

\begin{definition}
The proportion in favor of the rule $s$ with respects to the rule $r$ is, if it exists, the quantity
\begin{equation*}
    \Gamma_{(r,s)}= \lim_{\varepsilon \to 0} \Esp(\sigma(Y(r),V(s)) \vert \, (X,U) \in D_\varepsilon).
\end{equation*}
Using an approach similar to the one of Section \ref{subsec:opt}, we define an optimality criterion using $\Gamma_{(r,s)}$. That is we say that an ITR $s$ is $\Gamma$-pairwise optimal if for every ITR $r$, $\Gamma_{(r,s)} \geq 0$.
\end{definition}

We observe that
\begin{align*}
   \Gamma_{(r,s)}&= \lim_{\varepsilon \to 0} \Esp(\Esp(\sigma(Y(r),V(s)) \vert X,U) \vert \, (X,U) \in D_\varepsilon)\\
                 &=\lim_{\varepsilon \to 0} \Esp(\Delta_{(r,s)} (X,U) \vert \, (X,U) \in D_\varepsilon).
\end{align*}

\begin{remark}
As already explained our setting encompass  the case of discrete covariables. For instance, it is possible to recover the notion of proportion in favor of treatment from the notion of proportion in favor of the rule by setting $r=r_0$ and $s=r_1$ and by assuming that $X$ is a discrete random variable with value in $\mathcal{X}=\{1,\ldots, K\}$ endowed with a distance $d_\mathcal{X}$ (for instance the discrete distance).  
\end{remark}

We  now provide closed formulas for $\Gamma_{(r,s)}$ in some special case. In the discrete case, assuming that $X$ and $U$ are discrete random variables with values in $\mathcal{X}=\{1,\ldots, K\}$ endowed with the discrete distance and setting $p_k=\Prob(X=k)$, we get
\begin{align*}
    \Gamma_{(r,s)}&= \lim_{\varepsilon \to 0} \Esp(   [s(U)-r(X)]\Delta_{(r_0,r_1)}(X,U) | (X,U) \in D_\varepsilon)\\
    &= \Esp(   [s(U)-r(X)]\Delta_{(r_0,r_1)}(X,U) | X=U)\\
    &=\dfrac{\sum_{k=1}^K p_k^2[s(k)-r(k)]\Delta_{(r_0,r_1)}(k,k)}{\sum_{k=1}^K p_k^2}.
\end{align*}
Here, the second equality follows from Lemma \ref{lem:form_can} together with Lemma \ref{lem:annulation_delta}.

We now turn our attention to the continuous case. We assume that $\mathcal{X} \subset \R^d$ and that the distance $d_\mathcal{X}$ is induced by an Euclidean norm $\norm{\cdot}$ i.e., $d_\mathcal{X}(x,x^\prime)=\norm{x-x^\prime}$. We further assume that $\mathcal{X} \times \mathcal{X}$ is endowed with the norm $\norm{(x,u)}=\sqrt{\norm{x }^2+\norm{u}^2}$. These assumptions hold till the end of the section. We need a few preparatory lemmas. 
\begin{lemma}\label{lem:reduc_optim}
Assume that the  law of $X$ and $U$ is a Radon probability measure $\mu$ with support $\mathcal{X}$.  Assume in addition that $\Delta_{(r_0,r_1)}(x,u)$ and $\Delta_{(r_i,r_i)}(x,u)$ for $i=0,1$ are uniformly continuous on the support of $\mu \otimes \mu$. Then
\begin{enumerate}[(i)]
\item $\Gamma_{(r,s)}$ exists  if and only if $\lim_{\varepsilon \to 0} \Esp(   [s(U)-r(X)]\Delta_{(r_0,r_1)}( X,U ) | (X,U) \in D_\varepsilon)$ exists. If either of this two quantities exists then
\begin{equation*}
\Gamma_{(r,s)}= \lim_{\varepsilon \to 0} \Esp(   [s(U)-r(X)]\Delta_{(r_0,r_1)}( X,U ) | (X,U) \in D_\varepsilon),
\end{equation*}
\item If $s=\mathds{1}_{\{\IPB_{(r_0,r_1)}>0\}}$, then $\Gamma_{(r,s)} \geq 0$.
\end{enumerate}
\end{lemma}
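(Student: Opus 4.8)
The plan is to reduce both statements to the behaviour of the single function $\Delta_{(r_0,r_1)}$ near the diagonal, exploiting the contrast between the \emph{uniformly continuous} Delta-factors and the merely measurable, but \emph{uniformly bounded}, policy factors $r(x),s(u)\in\{0,1\}$. I would use throughout the identity recorded just above the statement, namely $\Gamma_{(r,s)}=\lim_{\varepsilon\to0}\Esp(\Delta_{(r,s)}(X,U)\mid(X,U)\in D_\varepsilon)$, after first noting that each conditioning event is non-negligible: for any $x_0\in\mathcal{X}=\supp(\mu)$ one has $B(x_0,\varepsilon/2)\times B(x_0,\varepsilon/2)\subset D_\varepsilon$ by the triangle inequality, hence $\Prob((X,U)\in D_\varepsilon)\ge\mu(B(x_0,\varepsilon/2))^2>0$, so the conditional expectations are well defined.

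For part (i), I would form $g(x,u):=\Delta_{(r,s)}(x,u)-[s(u)-r(x)]\Delta_{(r_0,r_1)}(x,u)$, expand $\Delta_{(r,s)}$ with Lemma \ref{lem:form_can} and substitute $\Delta_{(r_1,r_0)}(x,u)=-\Delta_{(r_0,r_1)}(u,x)$. Grouping the $\Delta_{(r_0,r_1)}$-terms makes the difference collapse to
\begin{align*}
g(x,u)=r(x)s(u)\Delta_{(r_1,r_1)}(x,u)&+r(x)(1-s(u))\bigl[\Delta_{(r_0,r_1)}(x,u)-\Delta_{(r_0,r_1)}(u,x)\bigr]\\
&+(1-r(x))(1-s(u))\Delta_{(r_0,r_0)}(x,u).
\end{align*}
Since each coefficient lies in $\{0,1\}$ and $\Delta_{(r_i,r_i)}(x,x)=0$ by Lemma \ref{lem:annulation_delta}, uniform continuity of the three Delta-functions on $\supp(\mu\otimes\mu)=\mathcal{X}\times\mathcal{X}$ forces $\sup_{(x,u)\in D_\varepsilon}|g(x,u)|\to0$: on $D_\varepsilon$ one has $\|(x,u)-(x,x)\|=d_\mathcal{X}(x,u)\le\varepsilon$ and $\|(x,u)-(u,x)\|=\sqrt2\,d_\mathcal{X}(x,u)\le\sqrt2\,\varepsilon$, so each bracket is uniformly small while its coefficient stays bounded by $1$. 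Consequently $|\Esp(g(X,U)\mid(X,U)\in D_\varepsilon)|\le\sup_{D_\varepsilon}|g|\to0$, and since
\begin{align*}
\Esp(\Delta_{(r,s)}(X,U)\mid(X,U)\in D_\varepsilon)={}&\Esp\bigl([s(U)-r(X)]\Delta_{(r_0,r_1)}(X,U)\mid(X,U)\in D_\varepsilon\bigr)\\
&+\Esp(g(X,U)\mid(X,U)\in D_\varepsilon),
\end{align*}
the left-hand limit exists if and only if the first right-hand limit does, and the two agree — which is exactly the claim of (i).

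For part (ii), with $s=\mathds{1}_{\{\IPB_{(r_0,r_1)}>0\}}$ I would pivot the integrand at the diagonal point $(u,u)$, using $\Delta_{(r_0,r_1)}(u,u)=\IPB_{(r_0,r_1)}(u)$:
\begin{align*}
[s(u)-r(x)]\Delta_{(r_0,r_1)}(x,u)={}&[s(u)-r(x)]\,\IPB_{(r_0,r_1)}(u)\\
&+[s(u)-r(x)]\bigl[\Delta_{(r_0,r_1)}(x,u)-\Delta_{(r_0,r_1)}(u,u)\bigr].
\end{align*}
The first summand is pointwise nonnegative by the very argument used in the proof of Proposition \ref{prop:opt}: the choice of $s$ forces $s(u)-r(x)\ge0$ when $\IPB_{(r_0,r_1)}(u)>0$ and $s(u)-r(x)\le0$ when $\IPB_{(r_0,r_1)}(u)\le0$. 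The second summand is bounded in absolute value by $|\Delta_{(r_0,r_1)}(x,u)-\Delta_{(r_0,r_1)}(u,u)|$, which tends to $0$ uniformly on $D_\varepsilon$ by uniform continuity (here $\|(x,u)-(u,u)\|=d_\mathcal{X}(x,u)\le\varepsilon$). Taking the conditional expectation over $D_\varepsilon$, the first term contributes a nonnegative quantity and the second vanishes in the limit, whence
\[
\liminf_{\varepsilon\to0}\Esp\bigl([s(U)-r(X)]\Delta_{(r_0,r_1)}(X,U)\mid(X,U)\in D_\varepsilon\bigr)\ge0;
\]
combined with part (i) this gives $\Gamma_{(r,s)}\ge0$ whenever the limit defining it exists.

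The main obstacle I anticipate is technical rather than conceptual: because $r$ and $s$ are only measurable, the integrands are genuinely discontinuous, so one cannot argue by continuity of the whole product. The decisive point in both parts is to isolate the discontinuous policy factors (uniformly bounded by $1$) from the Delta-factors and to establish the vanishing of the latter on $D_\varepsilon$ \emph{uniformly}; this rests on uniform—not merely pointwise—continuity together with the diagonal identities of Lemma \ref{lem:annulation_delta} and the antisymmetry $\Delta_{(r_1,r_0)}(x,u)=-\Delta_{(r_0,r_1)}(u,x)$. One must also keep the events $\{(X,U)\in D_\varepsilon\}$ of positive probability, as recorded at the outset.
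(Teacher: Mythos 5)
Your proof is correct and takes essentially the same route as the paper's: in part (i) your single remainder function $g$ is exactly the sum of the three terms the paper controls separately (via Lemma \ref{lem:form_can}, Lemma \ref{lem:annulation_delta}, the antisymmetry $\Delta_{(r_1,r_0)}(x,u)=-\Delta_{(r_0,r_1)}(u,x)$, and uniform continuity), and part (ii) uses the identical pivot at $(u,u)$ with the same sign analysis of $[s(u)-r(x)]\Delta_{(r_0,r_1)}(u,u)$. No gaps; your explicit checks that $\Prob((X,U)\in D_\varepsilon)>0$ and that (ii) is conditional on existence of the limit only make the argument slightly more careful than the original.
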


\begin{proof}
\noindent (i) 
We set 
\[
\Lambda_\varepsilon=\Esp(   s(U)\,(1-r(X))\Delta_{(r_0,r_1)}(X,U) + (1-s(U)) \, r(X) \Delta_{(r_1,r_0)}(X,U) | (X,U) \in D_\varepsilon)
\]
and
\[
\Gamma_\varepsilon=\Esp((s(U)-r(X)) \Delta_{(r_0,r_1)}(X,U) \vert (X,U) \in D_\varepsilon).
\]
Using Lemma \ref{lem:form_can}, we note that 
\begin{align} \label{eq:decomplim}
\Esp(\sigma(Y(r),V(s)) \vert \, (X,U) \in D_\varepsilon) &= \Lambda_\varepsilon + \Esp(r(X)s(U)\Delta_{(r_1,r_1)}(X,U) \vert  (X,U) \in D_\varepsilon) \\
&+\Esp((1-r(X))(1-s(U)) \Delta_{(r_0,r_0)}(X,U) \vert  (X,U) \in D_\varepsilon).
\end{align}
Hence, to prove $(i)$, it is sufficient to show that
\begin{align}
    &\lim_{\varepsilon \to 0} \Esp(r(X)s(U)\Delta_{(r_1,r_1)}(X,U) \vert  (X,U) \in D_\varepsilon)=0, \label{eq:lim1}\\
    &\lim_{\varepsilon \to 0} \Esp((1-r(X))(1-s(U)) \Delta_{(r_0,r_0)}(X,U) \vert  (X,U) \in D_\varepsilon)=0, \label{eq:lim2}\\
    & \lim_{\varepsilon \to 0} (\Lambda_\varepsilon-\Gamma_\varepsilon) = 0. \label{eq:lim3}
\end{align}

Indeed, Equations \eqref{eq:decomplim}, \eqref{eq:lim1} and \eqref{eq:lim2} together imply that $\lim_{\varepsilon \to 0} \Esp(\sigma(Y(r),V(s)) \mid (X,U) \in D_\varepsilon)$ exists if and only if $\lim_{\varepsilon \to 0} \Lambda_\varepsilon$ exists, and that, if they exist, they are equal. Similarly, Equation \eqref{eq:lim3} implies that $\lim_{\varepsilon \to 0} \Lambda_\varepsilon$ exists if and only if $\lim_{\varepsilon \to 0} \Gamma_\varepsilon$ exists, and, if so, they are equal. Consequently, this establishes that if any of the three limits exists, then they all exist and are equal.

We prove Equality \eqref{eq:lim1}. Equality \eqref{eq:lim2} will not be proved, as its proof follows a similar approach.
Notice that 
\begin{align*}
\norm{(x,u)-(u,u)} &= \norm{(x-u,0)}\\
                      &= \norm{x-u}.
\end{align*}
 Let $\eta>0$. Since $\Delta_{(r_1,r_1)}$ is uniformly continuous on the support of $\mu \otimes \mu$, there exists $\delta>0$ such that for every $(x,u)$ and $(x^\prime,u^\prime) \in \mathcal{X} \times \mathcal{X}$ such that $\norm{(x,u)-(x^\prime,u^\prime)} \leq \delta$, 
\begin{equation*}
|\Delta_{(r_1,r_1)}(x,u)-\Delta_{(r_1,r_1)}(x^\prime,u^\prime)| \leq \eta.
\end{equation*}
It follows from Lemma \ref{lem:annulation_delta} that $\Delta_{(r_1,r_1)}(u,u)=0$. Hence, if $(x,u) \in D_\delta$, one has
\begin{equation*}
    |\Delta_{(r_1,r_1)}(x,u)|=|\Delta_{(r_1,r_1)}(x,u)-\Delta_{(r_1,r_1)}(u,u)| \leq \eta.
\end{equation*}
This imples that for $\varepsilon \leq \delta$, we have
\begin{align*}
    |\Esp(r(x)s(u)\Delta_{(r_1,r_1)}(x,u) \vert  (X,U) \in D_\varepsilon)&  \leq \frac{1}{\mu \otimes \mu(D_\varepsilon)} \int_{D_\varepsilon}|\Delta_{(r_1,r_1)}(x,u)| d\mu(x) \otimes d\mu(u)\\
    &\leq \eta.
\end{align*}

We now study the limit as $\varepsilon$ tends to zero of  $\Lambda_\varepsilon-\Gamma_\varepsilon$. 
Observe that
\begin{align*}
   \Lambda_\varepsilon-\Gamma_\varepsilon&= \Esp ( s(U)\,(1-r(X))\Delta_{(r_0,r_1)}(X,U) + (1-s(U)) \, r(X) \Delta_{(r_1,r_0)}(X,U)\\
   & \quad - [s(U)-r(X)]\Delta_{(r_0,r_1)}( X,U )  | (X,U) \in D_\varepsilon )\\
   &= \Esp ( r(X) (\Delta_{(r_0,r_1)}(X,U) -\Delta_{(r_0,r_1)}(U,X)) \\
   &\quad-s(U) r(X) \lbrack \Delta_{(r_0,r_1)}(X,U)-\Delta_{(r_0,r_1)}(U,X) \rbrack | (X,U) \in D_\varepsilon )\\
   &=\Esp ( r(X) (1-s(U)) (\Delta_{(r_0,r_1)}(X,U) -\Delta_{(r_0,r_1)}(U,X))  | (X,U) \in D_\varepsilon ).
\end{align*}
Thus, we have
\begin{align*}
  &|\Esp ( r(X) (1-s(U)) (\Delta_{(r_0,r_1)}(X,U) -\Delta_{(r_0,r_1)}(U,X))  | (X,U) \in D_\varepsilon )|\\
  &\leq \Esp( | \Delta_{(r_0,r_1)}(X,U)-\Delta_{(r_0,r_1)}(U,X) | \vert (X,U) \in D_\varepsilon).
\end{align*}
By hypothesis $\Delta_{(r_0,r_1)}$ is uniformly continuous on $\mathcal{X} \times \mathcal{X}$ and $\norm{(x,u)-(u,x)}=\norm{(x-u,u-x)}=2 \norm{x-u}$. Hence, for every $\eta>0$, we have for $\varepsilon$ sufficiently small, $|\Delta_{(r_0,r_1)}(x,u)-\Delta_{(r_0,r_1)}(u,x)| \leq \eta$. This implies that $\Esp( | \Delta_{(r_0,r_1)}(X,U)-\Delta_{(r_0,r_1)}(U,X) | \vert (X,U) \in D_\varepsilon) \leq \eta$ which proves that $\lim_{\varepsilon \to 0} (\Lambda_\varepsilon-\Gamma_\varepsilon) = 0$. This implies that $\lim_{\varepsilon \to 0} \Lambda_\varepsilon$ exists if and only if $\lim_{\varepsilon \to 0}\Gamma_\varepsilon$ exists and, if so, they are equal.
This conclude the proof.\\

\noindent (ii) We have
\begin{align*}
    \Gamma_\varepsilon &=\Esp((s(U)-r(X)) \Delta_{(r_0,r_1)}(X,U) \vert (X,U) \in D_\varepsilon)\\
     &=\frac{1}{\mu \otimes \mu(D_\varepsilon)} \int_{D_\varepsilon} [s(u)-r(x)] \Delta_{(r_0,r_1)}(x,u)  \,d\mu(x) \otimes d\mu(u)\\
    &= \frac{1}{\mu \otimes \mu(D_\varepsilon)} \int_{D_\varepsilon} [s(u)-r(x)]\left( \Delta_{(r_0,r_1)}(x,u) -\Delta_{(r_0,r_1)}(u,u) \right)  \,d\mu(x) \otimes d\mu(u)\\
    & \quad + \frac{1}{\mu \otimes \mu(D_\varepsilon)} \int_{D_\varepsilon} [s(u)-r(x)] \Delta_{(r_0,r_1)}(u,u)  \,d\mu(x) \otimes d\mu(u).
\end{align*}
Let $\delta>0$. Since $\Delta_{(r_0,r_1)}$ is uniformly continuous, there exist $\varepsilon$ such that $\norm{x-u} < \varepsilon$ implies 
\begin{equation*}
\left \vert \Delta_{(r_0,r_1)}(x,u) -\Delta_{(r_0,r_1)}(u,u) \right \vert \leq \delta.
\end{equation*}
Hence,
\begin{align*}
    &\left \vert  \int_{D_\varepsilon} [s(u)-r(x)]\left( \Delta_{(r_0,r_1)}(x,u) -\Delta_{(r_0,r_1)}(u,u) \right)  \,d\mu(x) \otimes d\mu(u) \right \vert \\
    \leq & \int_{D_\varepsilon} \left\vert [s(u)-r(x)]\left( \Delta_{(r_0,r_1)}(x,u) -\Delta_{(r_0,r_1)}(u,u) \right) \right\vert  \,d\mu(x) \otimes d\mu(u)\\
    \leq& \mu \otimes \mu(D_\varepsilon) \delta.
\end{align*}
This implies
\begin{equation*}
    \lim_{\varepsilon \to 0} \frac{1}{\mu \otimes \mu(D_\varepsilon)}  \int_{D_\varepsilon} [s(u)-r(x)]\left( \Delta_{(r_0,r_1)}(x,u) -\Delta_{(r_0,r_1)}(u,u) \right)  \,d\mu(x) \otimes d\mu(u) =0.
\end{equation*}
We now study the sign of
\begin{equation*}
 \Gamma^+_\varepsilon  =\frac{1}{\mu \otimes \mu(D_\varepsilon)} \int_{D_\varepsilon} [s(u)-r(x)] \Delta_{(r_0,r_1)}(u,u)  \,d\mu(x) \otimes d\mu(u).
\end{equation*}

Notice that
\begin{align*}
 [s(u)-r(x)] \Delta_{(r_0,r_1)}(u,u)=
\begin{cases}
   (1-r(x)) \, \Delta_{(r_0,r_1)}(u,u) \quad \text{if} \; \Delta_{(r_0,r_1)}(u,u) > 0,\\
   -r(x) \, \Delta_{(r_0,r_1)}(u,u) \quad \text{if} \; \Delta_{(r_0,r_1)}(u,u)\leq 0
\end{cases}
\end{align*}
which implies that for every $(x,u) \in \mathcal{X} \times \mathcal{X}$, $\Gamma^+_\varepsilon \geq 0$ and $\lim_{\varepsilon \to 0} \Gamma_\varepsilon = \lim_{\varepsilon \to 0} \Gamma_\varepsilon^+$ which, implies the result.
\end{proof}

We now provide sufficient conditions for the existence of $\Gamma_{(r,s)}$ when the probability measure $\mu$ admits a density with respect to the Lebesgue measure.

\begin{lemma}\label{lem:lebesgue_diff_product}
 Consider the normed space $(\R^d,\norm{\cdot})$ endowed with a Radon measure $\mu$.
 Let $x\in \R^n$ and $f,g \colon \R^n \to \R$ be two measurable functions such that $x$ is a Lebesgue point of these two functions. Assume that $g$ is bounded in a neigbourhood of $x$. Then $x$ is a Lebesgue point of $fg$  
\end{lemma}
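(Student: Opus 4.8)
The plan is to verify the Lebesgue point condition for the product $fg$ directly, reducing it to the two hypotheses through the standard add-and-subtract decomposition. First I would record the preliminary observations that make the statement meaningful: since $x$ is a Lebesgue point of $f$, the function $f$ is in particular locally integrable and $f(x)$ is a fixed finite number; and since $g$ is bounded by some constant $M$ on a ball $B(x,r_0)$, the product $fg$ is locally integrable in that neighbourhood, so the quantity defining the Lebesgue point condition for $fg$ is well-posed. Moreover, because a Lebesgue point lies in $\supp(\mu)$, we have $\mu(B(x,r))>0$ for every $r>0$, so the averages below are well-defined.

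The key algebraic step is the identity
\[
f(z)g(z)-f(x)g(x)=g(z)\bigl(f(z)-f(x)\bigr)+f(x)\bigl(g(z)-g(x)\bigr),
\]
which, after the triangle inequality, gives the pointwise bound
\[
|f(z)g(z)-f(x)g(x)|\leq |g(z)|\,|f(z)-f(x)|+|f(x)|\,|g(z)-g(x)|.
\]
I would then integrate this inequality over $B(x,r)$, divide by $\mu(B(x,r))$, and treat the two resulting terms separately. For the first term, restricting to radii $r\leq r_0$ so that $|g(z)|\leq M$ holds uniformly on $B(x,r)$, one obtains
\[
\frac{1}{\mu(B(x,r))}\int_{B(x,r)}|g(z)|\,|f(z)-f(x)|\,\mu(dz)\leq M\cdot\frac{1}{\mu(B(x,r))}\int_{B(x,r)}|f(z)-f(x)|\,\mu(dz),
\]
whose right-hand side tends to $0$ as $r\to 0^+$ because $x$ is a Lebesgue point of $f$. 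For the second term, the constant $|f(x)|$ factors out of the integral, leaving
\[
|f(x)|\cdot\frac{1}{\mu(B(x,r))}\int_{B(x,r)}|g(z)-g(x)|\,\mu(dz),
\]
which tends to $0$ because $x$ is a Lebesgue point of $g$. Adding the two bounds shows the defining limit for $fg$ vanishes.

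There is no genuine obstacle here; the proof is an elementary estimate, and the only point demanding slight care is the \emph{asymmetric} use of the two hypotheses. The boundedness of $g$ is exactly what is needed to control the term in which $g$ appears as a multiplicative factor, whereas the second term requires nothing about $g$ beyond its own Lebesgue point property, since there $g$ is paired against the fixed scalar $f(x)$. Confining attention to radii $r\leq r_0$ is what makes the uniform bound $M$ available throughout the region of integration, and this is the single technical hypothesis that is actually consumed.
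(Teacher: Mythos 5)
Your proof is correct and follows essentially the same route as the paper's: the same add-and-subtract decomposition $f(z)g(z)-f(x)g(x)=g(z)(f(z)-f(x))+f(x)(g(z)-g(x))$, the uniform bound on $g$ near $x$ to control the first term, and the Lebesgue point property of each function to kill both averages. Your version is slightly more careful about well-posedness (local integrability of $fg$ and restricting to radii $r\leq r_0$), but the mathematical content is identical.
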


\begin{proof}
    We notice that in a neigbourhood of $x$,
    \begin{align*}
        |f(u)g(u)-f(x)g(x)|&=|f(u)g(u)-f(x)g(u)+f(x)g(u)-f(x)g(x)|\\
                           &\leq |g(u)(f(u)-f(x)|+ |f(x)(g(u)-g(x)|\\
                           &\leq C |f(u)-f(x)|+ |f(x)| |g(u)-g(x)|.
    \end{align*}
Since $x$ is a Lebesgue point of both $f$ and $g$ it follows that  $\lim_{\varepsilon \to 0} \frac{1}{\mu(B(x,\varepsilon))} \int_{B(x,\varepsilon)} |g(u)-g(x)| du = 0$ and $\lim_{\varepsilon \to 0} \frac{1}{\mu(B(x,\varepsilon))} \int_{B(x,\varepsilon)} |f(u)-f(x)| du=0$ which implies that $\lim_{\varepsilon \to 0} \frac{1}{\mu(B(x,\varepsilon))} \int_{B(x,\varepsilon)} |f(u)g(u)-f(x)g(x)| du = 0$
\end{proof}

\begin{theorem}\label{thm:sqr_density}
Let the law of $X$ and $U$ be a Radon probability measure $\mu$ on the normed space $(\R^d,\norm{\cdot})$. Suppose that $\mu$ is compactly supported and has a bounded density $f$ with respects to the Lebesgue measure. Furthermore, assume that $\Delta_{(r_0,r_0)}$, $\Delta_{(r_1,r_1)}$ and $\Delta_{(r_0,r_1)}$ are continuous on the support of the product measure $\mu \otimes \mu$. Let $r$ and $s$ be two measurable individualized treatment rules. Then $\Gamma_{(r,s)}$ is well defined and

\begin{equation}
     \Gamma_{(r,s)}=\dfrac{1}{\int_{\R^n} f^2(u) du} \int_{\R^n} [s(u)-r(u)]\Delta_{(r_0,r_1)}(u,u) f^2(u) du.
\end{equation}

\end{theorem}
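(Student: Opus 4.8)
The plan is to reduce the statement to a Lebesgue-differentiation computation, handling the non-continuity of $r$ and $s$ via $L^1$ averaging rather than pointwise limits. First I would check that Lemma~\ref{lem:reduc_optim} applies. Since $\mu$ is compactly supported, $\supp(\mu\otimes\mu)=\supp(\mu)\times\supp(\mu)$ is compact, so the assumed continuity of $\Delta_{(r_0,r_0)}$, $\Delta_{(r_1,r_1)}$ and $\Delta_{(r_0,r_1)}$ on this set upgrades to uniform continuity there. Thus the hypotheses of Lemma~\ref{lem:reduc_optim} hold, and $\Gamma_{(r,s)}$ exists if and only if $\lim_{\varepsilon\to 0}\Gamma_\varepsilon$ exists, where $\Gamma_\varepsilon=\Esp\big([s(U)-r(X)]\Delta_{(r_0,r_1)}(X,U)\mid (X,U)\in D_\varepsilon\big)$, and the two limits agree. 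Writing $d\mu=f\,dx$ and noting that $d_\mathcal{X}$ is Euclidean, so $D_\varepsilon=\{(x,u):\norm{x-u}\le\varepsilon\}$, this is
\[
\Gamma_\varepsilon=\frac{1}{(\mu\otimes\mu)(D_\varepsilon)}\int_{D_\varepsilon}[s(u)-r(x)]\,\Delta_{(r_0,r_1)}(x,u)\,f(x)f(u)\,dx\,du.
\]

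Second, I would reuse the decomposition from the proof of Lemma~\ref{lem:reduc_optim}(ii) to replace $\Delta_{(r_0,r_1)}(x,u)$ by the diagonal value $\Delta_{(r_0,r_1)}(u,u)$ in the numerator. The induced error is controlled by $|\Delta_{(r_0,r_1)}(x,u)-\Delta_{(r_0,r_1)}(u,u)|$, which by uniform continuity is at most $\eta$ on $D_\varepsilon$ once $\varepsilon$ is small; hence its contribution to $\Gamma_\varepsilon$ tends to $0$. It then remains to evaluate the limit of
\[
B_\varepsilon=\frac{1}{(\mu\otimes\mu)(D_\varepsilon)}\int_{D_\varepsilon}[s(u)-r(x)]\,\Delta_{(r_0,r_1)}(u,u)\,f(x)f(u)\,dx\,du.
\]

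Third, I would normalise numerator and denominator by $c_d\varepsilon^d$, the Lebesgue volume of the ball of radius $\varepsilon$ ($c_d$ the volume of the unit ball), and take limits separately. By Fubini, $(\mu\otimes\mu)(D_\varepsilon)=\int f(u)\big(\int_{B(u,\varepsilon)}f(x)\,dx\big)du$; dividing by $c_d\varepsilon^d$, the inner average converges to $f(u)$ for Lebesgue-a.e.\ $u$ by Theorem~\ref{thm:Lebesgue_diff}, and dominated convergence (using $f$ bounded and compactly supported) gives $(c_d\varepsilon^d)^{-1}(\mu\otimes\mu)(D_\varepsilon)\to\int_{\R^d}f^2$, a finite, strictly positive constant. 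Splitting $[s(u)-r(x)]$, the $s(u)$–part again only involves the ball average of $f(x)$, so by the same argument it tends to $\int_{\R^d}s(u)\Delta_{(r_0,r_1)}(u,u)f(u)^2\,du$, the outer factor $s(u)\Delta_{(r_0,r_1)}(u,u)f(u)$ being bounded and integrable on $\supp(\mu)$.

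The main obstacle is the $r(x)$–part, since $r$ is merely measurable and cannot be passed to the limit pointwise; here the crucial observation is that $rf\in L^1(\R^d)$ because $r$ is bounded. Applying the Lebesgue differentiation theorem to $rf$ (equivalently Lemma~\ref{lem:lebesgue_diff_product} with $g=r$, using that almost every point is a Lebesgue point of both $f$ and $r$) yields $(c_d\varepsilon^d)^{-1}\int_{B(u,\varepsilon)}r(x)f(x)\,dx\to r(u)f(u)$ for a.e.\ $u$, and dominated convergence sends the $r(x)$–part to $\int_{\R^d}r(u)\Delta_{(r_0,r_1)}(u,u)f(u)^2\,du$. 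Dividing the combined numerator limit by the denominator limit establishes existence of $\Gamma_{(r,s)}$ and the formula
\[
\Gamma_{(r,s)}=\frac{\int_{\R^d}[s(u)-r(u)]\,\Delta_{(r_0,r_1)}(u,u)\,f(u)^2\,du}{\int_{\R^d}f(u)^2\,du},
\]
which is the claim.
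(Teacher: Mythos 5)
Your proposal is correct: every step checks out, including the domination bounds (using $|r|,|s|\le 1$, $f$ bounded with compact support, and $\Delta_{(r_0,r_1)}$ bounded on the compact support by continuity) and the positivity of $\int f^2$, so dividing the two limits is legitimate. However, your route through the core computation differs from the paper's. The paper does not freeze $\Delta_{(r_0,r_1)}$ on the diagonal first; instead it performs the change of variables $(x,u)=(\alpha+\beta,\alpha-\beta)$, which straightens $D_\varepsilon$ into $\R^d\times B(0,\varepsilon/2)$, and then applies Lebesgue differentiation at $\beta=0$ (for a.e.\ $\alpha$) to the full product
\[
g_\alpha(\beta)=[s(\alpha-\beta)-r(\alpha+\beta)]\,\Delta_{(r_0,r_1)}(\alpha+\beta,\alpha-\beta)\,f(\alpha+\beta)f(\alpha-\beta),
\]
using Lemma~\ref{lem:lebesgue_diff_product} to combine the Lebesgue points of $r$, $s$, $f$ with the continuity of $\Delta_{(r_0,r_1)}$, followed by dominated convergence in $\alpha$; the continuity of $\Delta_{(r_0,r_1)}$ along the transformed coordinates makes a separate diagonal-replacement step unnecessary, and $r$ and $s$ are treated symmetrically. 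Your version stays in the original coordinates: you first replace $\Delta_{(r_0,r_1)}(x,u)$ by $\Delta_{(r_0,r_1)}(u,u)$ via uniform continuity (borrowing the estimate from the proof of Lemma~\ref{lem:reduc_optim}(ii)), then use Fubini so that the inner integral runs over $x\in B(u,\varepsilon)$, and apply the Lebesgue differentiation theorem only to $f$ and to $rf$. This buys two things: the argument is more elementary (no change of variables, and the product lemma is needed only for $rf$, or not at all if one differentiates $rf\in L^1$ directly), and it exposes an asymmetry the paper's proof hides, namely that $s$ enters purely as a bounded measurable factor in the outer variable and requires no Lebesgue-point argument whatsoever. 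What the paper's symmetrization buys in exchange is that it handles $\Delta_{(r_0,r_1)}$ without invoking uniform continuity a second time and treats all factors uniformly in a single differentiation step. Both proofs rest on the same two pillars — Lemma~\ref{lem:reduc_optim} for the reduction and Lebesgue differentiation plus dominated convergence for the limit — so the difference is one of decomposition, not of underlying machinery.
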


\begin{proof}
Since the support of $\mu$ is compact, it follows that the restriction of $\Delta_{(r_0,r_0)}$ and $\Delta_{(r_1,r_1)}$ to $\supp(\mu \otimes \mu)$ are uniformly continuous. Hence, applying Lemma \ref{lem:reduc_optim}, we are reduce to compute the limit $\lim_{\varepsilon \to 0^+} \Gamma_\varepsilon$ where
\begin{equation*}
    \Gamma_\varepsilon=\Esp(   [s(U)-r(X)]\Delta_{(r_0,r_1)}(X,U) | (X,U) \in D_\varepsilon).
\end{equation*}

Let $\varepsilon >0$. Then

\begin{equation*}
     \Gamma_\varepsilon=\dfrac{ \int_{D_\varepsilon} [s(u)-r(x)] \Delta_{(r_0,r_1)}(x,u) f(x) f(u) du\, dx}{\int_{D_\varepsilon} f(u)f(x) du \, dx}
\end{equation*}
We consider the following change of variables $(x=\alpha + \beta, u=\alpha-\beta)$. Then 
\begin{align*}
D_\varepsilon&=\{(x,u) ; \norm{x-u} \leq \varepsilon \}=\{(\alpha,\beta) ; \norm{\beta} \leq \frac{\varepsilon}{2}\}\simeq \R^n \times B \left(0,\frac{\varepsilon}{2}\right).
\end{align*}

This also implies that
\begin{align*}
\Gamma_\varepsilon & =\frac{\int_{\R^n}  \left( \int_{B(0,\frac{\varepsilon}{2})} [s(\alpha-\beta)-r(\alpha+\beta)] \Delta_{(r_0,r_1)}(\alpha+\beta,\alpha-\beta) f(\alpha+\beta) f(\alpha-\beta) d \beta \right) \, d\alpha     }{\int_{\R^n}  \left( \int_{B(0,\frac{\varepsilon}{2})} f(\alpha+\beta)  f(\alpha-\beta) d \beta \right) \, d\alpha     }\\
                 &= \frac{\int_{\R^n}  \left(  \left( \int_{B(0,\frac{\varepsilon}{2})}  d \beta \right)^{-1} \int_{B(0,\frac{\varepsilon}{2})} [s(\alpha-\beta)-r(\alpha+\beta)] \Delta_{(r_0,r_1)}(\alpha+\beta,\alpha-\beta) f(\alpha+\beta) f(\alpha-\beta) d \beta \right) \, d\alpha     }{\int_{\R^n}  \left( \left( \int_{B(0,\frac{\varepsilon}{2})}  d \beta \right)^{-1} \int_{B(0,\frac{\varepsilon}{2})} f(\alpha+\beta)  f(\alpha-\beta) d \beta \right) \, d\alpha     }.
\end{align*}
Let $\alpha \in \R^n$ and consider the function
\begin{equation*}
    g_\alpha(\beta)= [s(\alpha-\beta)-r(\alpha+\beta)] \Delta_{(r_0,r_1)}(\alpha+\beta,\alpha-\beta) f(\alpha+\beta) f(\alpha-\beta).
\end{equation*}
The functions $\beta \mapsto r(\alpha+\beta)$, $\beta \mapsto s(\alpha+\beta)$, $\beta \mapsto f(\alpha+\beta)$, $\beta \mapsto f(\alpha-\beta)$ are Lebesgue differentiable at $\beta=0$ for almost every $\alpha$ thanks to the Lebesgue differentiation theorem. Moreover, $\beta \mapsto \Delta_{(r_0,r_1)}(\alpha+\beta,\alpha-\beta)$ is Lebesgue differentiable at $\beta=0$ as it is continuous. This, together with Lemma \ref{lem:lebesgue_diff_product}, implies that for almost every $\alpha \in \R^n$, $g_\alpha$ is Lebesgue differentiable in $\beta=0$. 

Consider the function
\begin{equation*}
    h(\alpha,\varepsilon)=\frac{1}{\int_{B(0,\frac{\varepsilon}{2})}  d \beta}\int_{B(0,\frac{\varepsilon}{2})} g_\alpha(\beta)  d \beta.
\end{equation*}
We now proceed in two step. First, using the Lebesgue differentiability of $g_\alpha$, we compute $\lim_{\varepsilon \to 0} h(\varepsilon,\alpha)$. Second, using the dominated convergence theorem, we prove that we can interchange the limit along $\varepsilon$ and the integration against $\alpha$.

By the Lebesgue differentiability of $g_\alpha(\beta)$ at zero, for almost every $\alpha \in \R^n$
\begin{align*}
  \lim_{\varepsilon \to 0} h(\alpha,\varepsilon) &= g_\alpha(0)\\
                                                 &=[s(\alpha)-r(\alpha)]\Delta_{(r_0,r_1)}(\alpha,\alpha) f^2(\alpha).
\end{align*}
We can further assume that $\varepsilon \leq 2$, thus $\norm{\beta}\leq 1$. As the support of $f$ is compact, we consider the set 
\begin{equation*}
K=\{x \in \R^n ; d(x, \supp(f)) \leq 1\}
\end{equation*}
where $d(x,\supp(f))=\inf_{z \in \supp(f)} \norm{x-z}$. Hence, if $\alpha \notin K$, $h(\alpha,\varepsilon)=0$. This implies
\begin{align*}
 |h(\alpha,\varepsilon)| & \leq \frac{1}{\int_{B(0,\frac{\varepsilon}{2})}  d \beta}\int_{B(0,\frac{\varepsilon}{2})} |g_\alpha(\beta)|  d \beta \; \mathds{1}_K(\alpha)\\
                        & \leq \frac{1}{\int_{B(0,\frac{\varepsilon}{2})}  d \beta}\int_{B(0,\frac{\varepsilon}{2})} \norm{f}_\infty d \beta \; \mathds{1}_K(\alpha)\\
                        &\leq \norm{f}_\infty \; \mathds{1}_K(\alpha).
\end{align*}
    Hence, by the Lebesgue's dominated convergence theorem
\begin{align*}
\lim_{\varepsilon \to 0} \int_{\R^n} h(\alpha,\varepsilon) d \alpha= \int_{\R^n} [s(\alpha)-r(\alpha)]\Delta_{(r_0,r_1)}(\alpha,\alpha) f^2(\alpha) \,d \alpha
\end{align*}
A similar argument shows that
\begin{equation*}
\lim_{\varepsilon \to 0} \frac{1}{ \int_{B(0,\frac{\varepsilon}{2})}  d \beta}\int_{\R^n}  \int_{B(0,\frac{\varepsilon}{2})} f(\alpha+\beta)  f(\alpha-\beta) d \beta  \, d\alpha =    \int_{\R^n} f^2(\alpha) \, d\alpha.
\end{equation*}

The theorem follows immediately.
\end{proof}

\begin{remark}
\begin{enumerate}[(i)]
\item Using the closed formula obtained for $\Gamma_{(r,s)}$ in the discrete or continuous setting, we remark that  rules which are pairwise optimal are $\Gamma$-pairwise optimal.
\item Though the quantity $\Gamma_{(r,s)}$ allows to define a notion of optimality for ITRs, it existence requires stronger assumption than the one required for the existence of the $\AIPB$ (Equation \eqref{eq:AIPB}) and the pairwise optimality criterion.
\item The main difference between the two setting come from the underlying sampling mechanism. In the case of $\AIPB$, the sampling mechanism roughly amount to sample an individual of type $x$ and duplicate him/her, whereas in the $\Gamma$-pairwise setting, the sampling mechanism roughly amount to sample pair of the form $(x,x)$ among all possible pairs of this form.
\end{enumerate}
\end{remark}

\section{Pointwise consistency of two sample conditional \texorpdfstring{$U$}{U}-statistics}\label{sec:conv_knn}

In this section, we provide a proof of Theorem \ref{thm:cv_knn_Delta}. In fact, we prove a more general version which applies to several two-sample $U$-statistics estimator relying on $k$-nearest neighbors.

 We work on the normed  vector space $(\R^d,\norm{\cdot})$ and assume that space of pairs $(x,u)$ is the normed space $(\R^d \times \R^d,\norm{\cdot}_\pi)$ where $\norm{(x,u)}_{\pi}=\max(\norm{x},\norm{u})$. We introduce the following notations.  We denote by $\mu$  the distribution of $X$ and $U$. Since $X$ and $U$ are iid the distribution $\mu_{X,U}$ of $(X,U)$ is given by the product measure $\mu \otimes \mu$. For $x \in \R^d$, we let $((X_{(1)}(x),Y_{(1)}(x)),\ldots,(X_{(m)}(x),Y_{(m)}(x)))$ be a reordering of the data $((X_1,Y_1),\ldots ,(X_m,Y_m))$ according to increasing values of $\norm{X_i-x}$, that is
 \begin{equation*}
     \norm{X_{(1)}(x)-x} \leq \ldots \leq \norm{X_{(m)}(x)-x} 
 \end{equation*}
 where potential ties i.e., $\norm{X_i-x}=\norm{X_j-x}$ with $i \neq j$ are broken according to the following rule. If $\norm{X_i-x}=\norm{X_j-x}$  with $i \neq j$ then $X_i$ will be deemed  closer to $x$ that $X_j$ if and only if $i <j$. Notice that $Y_{(i)}(x)$ is the $Y_i$ of the couple $(X_i,Y_i)$. We apply the same notational conventions for the elements $(U_j,V_j)$ of the experimental group $E$. 
 
  We let $\sigma \colon \R^\ell \times \R^\ell \to \R$ be a Borel mesurable function and seek to estimates $T(x,u)=\Esp(\sigma(Y,V) \mid X=x, U=u)$.

 We consider nearest neighbor type estimators. Let $(v_{mi,nj})_{i,j}$ with $1 \leq i \leq m$, $1 \leq j \leq n$, such that for every $i, j$, $v_{mi,nj} \geq 0$ and $\sum_{i,j} v_{mi,nj}=1$. We consider estimators of the form
 \begin{equation*}
 T_N(x,u): =\sum_{i,j} v_{mi,nj} \sigma(Y_{(i)}(x),V_{(j)}(u))
 \end{equation*}
where $N=m+n$.
 
We have the following result.
 \begin{theorem}\label{thm:cv_knn}
    Let  $\sigma \colon \R^l \times \R^l \to \R$ be a Borel function such that $\sigma(Y,V)$ is bounded and such that $T(x,u)=\Esp(\sigma(Y,V) \mid X=x,U=u)$ is continuous on the support of $\mu \otimes \mu$. 
    
    Let $N=n+m$ and assume that there exist sequences of integers ${c}=\{c_m\}$ and $e=\{e_n\}$ such that

 \begin{enumerate}[(i)]
        \item $\dfrac{n}{m} \underset{N \to \infty}{\longrightarrow} \lambda \neq 0$, \label{cond:relnm},
        \item $c_m \underset{N \to \infty}{\longrightarrow} \infty$ and $e_n \underset{N \to \infty}{\longrightarrow} \infty$.
        \item $\dfrac{c_m}{N} \underset{N \to \infty}{\longrightarrow} 0$ and $\dfrac{e_n}{N} \underset{N \to \infty}{\longrightarrow} 0$ \label{cond:distknn},
        \item $v_{mi,nj}=0$ when $i>c_m$ or $j>e_n$, 
        \item $\sup_{m,n}(c_m \, e_n  \max_{i,j} (v_{mi,nj})) < \infty$ \label{cond:conv}.
    \end{enumerate}
Then the corresponding two-samples conditionnal $U$-statistics $T_N$ satisfies
\begin{equation*}
\Esp|T_N(x,u)-T(x,u) |^2 \to 0 \; \textnormal{at all} \;(x,u)\; in \supp(\mu \otimes \mu). 
\end{equation*}
In particular, for all $(x,u) \in \supp(\mu \otimes \mu)$,
\begin{equation*}
    T_N(x,u) \to T(x,u) \; \textnormal{in probability}.
\end{equation*}
\end{theorem}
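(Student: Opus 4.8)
The plan is to follow the classical bias--variance decomposition for nearest-neighbour regression estimators, adapted to the two-sample setting. Let $\mathcal{F}$ denote the $\sigma$-algebra generated by all the covariates $X_1,\ldots,X_m,U_1,\ldots,U_n$. Since the reordering of the data by increasing distance to $x$ (resp. $u$) depends only on the covariates (ties being broken by index), conditioning on $\mathcal{F}$ fixes which sample point is the $i$-th control neighbour and the $j$-th experimental neighbour. Using the iid structure within each arm and the independence of the two arms, given $\mathcal{F}$ the outcomes $Y_{(1)}(x),\ldots,Y_{(m)}(x)$ are conditionally independent with $Y_{(i)}(x)$ having the law of $Y$ given $X=X_{(i)}(x)$, likewise for the $V_{(j)}(u)$, and the two families are mutually independent. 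Consequently $\Esp(\sigma(Y_{(i)}(x),V_{(j)}(u))\mid\mathcal{F})=T(X_{(i)}(x),U_{(j)}(u))$, so that $\widehat T_N(x,u):=\Esp(T_N(x,u)\mid\mathcal{F})=\sum_{i,j} v_{mi,nj}\,T(X_{(i)}(x),U_{(j)}(u))$. Because $\widehat T_N$ and $T(x,u)$ are $\mathcal{F}$-measurable and $\Esp(T_N-\widehat T_N\mid\mathcal{F})=0$, the cross term vanishes and I obtain the exact decomposition
\[
\Esp|T_N-T|^2=\Esp|T_N-\widehat T_N|^2+\Esp|\widehat T_N-T|^2,
\]
whose two summands I treat as a variance and a bias term respectively.

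For the variance term I would set $W_{ij}=\sigma(Y_{(i)}(x),V_{(j)}(u))-T(X_{(i)}(x),U_{(j)}(u))$, so that $\Esp(W_{ij}\mid\mathcal{F})=0$ and $|W_{ij}|\le 2M$, where $M$ bounds $|\sigma|$. Expanding $\Esp((\sum_{ij} v_{mi,nj} W_{ij})^2\mid\mathcal{F})$ into conditional covariances, the conditional independence structure forces every term with $i\ne i'$ \emph{and} $j\ne j'$ to vanish, since $W_{ij}$ and $W_{i'j'}$ are then functions of four conditionally independent outcomes. Only the same-row ($i=i'$), same-column ($j=j'$) and diagonal terms survive; bounding each covariance crudely by $4M^2$ yields a bound in terms of $\sum_{ij} v_{mi,nj}^2$, $\sum_i(\sum_j v_{mi,nj})^2$ and $\sum_j(\sum_i v_{mi,nj})^2$. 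Using $\sum_{ij} v_{mi,nj}=1$ together with conditions (iv) and (v)---which give $\max_{ij} v_{mi,nj}\le C/(c_m e_n)$, hence $\sum_j v_{mi,nj}\le C/c_m$ and $\sum_i v_{mi,nj}\le C/e_n$---each of these sums is $O(1/c_m)+O(1/e_n)$, which tends to $0$ by condition (ii).

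For the bias term, Jensen's inequality (the weights being non-negative with unit sum) gives $|\widehat T_N-T|^2\le\sum_{ij} v_{mi,nj}\,|T(X_{(i)}(x),U_{(j)}(u))-T(x,u)|^2$. Here I would invoke the classical fact that for $x\in\supp(\mu)$ and $c_m/m\to 0$ (which follows from (i) and (iii)) the $c_m$-th nearest-neighbour distance $\norm{X_{(c_m)}(x)-x}$ tends to $0$ in probability, and symmetrically $\norm{U_{(e_n)}(u)-u}\to 0$. Since the nonzero weights are confined to $i\le c_m$, $j\le e_n$ by (iv), and each neighbour lies in $\supp(\mu)$ almost surely, on the high-probability event where both neighbour distances fall below a continuity radius $\delta$ of $T$ at $(x,u)$ every summand is at most $\varepsilon^2$, while on its complement the sum is bounded by $4M^2$. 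A truncation argument then gives $\limsup_N\Esp|\widehat T_N-T|^2\le\varepsilon^2$ for arbitrary $\varepsilon>0$, so the bias term vanishes.

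Combining the two bounds yields $\Esp|T_N-T|^2\to0$, and convergence in probability follows from Markov's inequality. The main obstacle is the variance term: one must correctly identify that only covariances sharing a neighbour index survive and recognise that conditions (iv)--(v) are precisely what control the total weight of these surviving terms; the two-sample product structure is essential here and is what distinguishes the argument from Stute's one-sample conditional $U$-statistics. A secondary technical point is invoking the nearest-neighbour distance convergence for points in $\supp(\mu)$ and verifying that $\supp(\mu\otimes\mu)=\supp(\mu)\times\supp(\mu)$, so that continuity of $T$ may legitimately be applied along the neighbours.
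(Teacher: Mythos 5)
Your proof is correct and follows essentially the same route as the paper's: a bias--variance split, with the variance term handled via conditional independence of the outcome errors given the covariates (so that only covariance terms sharing a neighbour index survive, controlled exactly by conditions (iv)--(v)), and the bias term handled via nearest-neighbour distance convergence together with continuity of $T$ on $\supp(\mu\otimes\mu)$ --- this is precisely the content of the paper's Lemma on vanishing conditional expectations and its Lemma on the weighted sums of $T(X_{(i)}(x),U_{(j)}(u))$. The only cosmetic difference is that you obtain an exact orthogonal (Pythagorean) decomposition by conditioning on the covariate $\sigma$-algebra, whereas the paper uses Jensen's inequality with a factor of $2$; the two are equivalent in effect.
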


\begin{remark}
 We wish to emphasize that the above theorem does not only hold for almost every point but for all points in the support of the distribution. This necessitates stronger assumptions than those typically employed. Such a result is required as we restrict $\hat{\Delta}$ to the set $\{(x , x) \in \R^d \times \R^d \}$ to estimate the $\IPB$.  It is worth noting that this set may potentially have measure zero. Consequently, a convergence result for the estimator $\widehat{\Delta}$ that only holds for $\mu \otimes \mu$-almost every $(x,u)$ would not be informative regarding the estimation of the $\IPB$.
\end{remark}
To prove Theorem \ref{thm:cv_knn}, we will need the following lemmas.
\begin{lemma}
Let $(m_\nu)_{\nu \in \N}$ and $(n_\nu)_{\nu \in \N}$ be two sequences of elements of $\N$ such that $m_\nu \underset{\nu \to \infty}{\longrightarrow}  \infty$ and $n_\nu \underset{\nu \to \infty}{\longrightarrow} \infty$  and let $N_\nu=m_\nu+n_\nu$. Assume that $\dfrac{n_\nu}{m_\nu} \underset{\nu \to \infty}{\longrightarrow} \lambda \neq 0$. Then
\begin{enumerate}[(i)]\label{lem:trivial_fact}
    \item There exists a constant $\beta$ such that for every $k \in \N$, $N\geq \beta k$ implies that $\min(n,m) \geq k$.
    \item If $N \underset{\nu \to \infty}{\longrightarrow} \infty$, then $m \underset{\nu \to \infty}{\longrightarrow} \infty$ and $n \underset{\nu \to \infty}{\longrightarrow} \infty$.
\end{enumerate}
\end{lemma}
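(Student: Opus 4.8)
The plan is to prove (i) first and then obtain (ii) as an immediate corollary. The single idea driving everything is that convergence of the ratio $n_\nu/m_\nu$ to a \emph{nonzero} limit forces $m_\nu$, $n_\nu$, and $N_\nu=m_\nu+n_\nu$ to be comparable up to fixed multiplicative constants; once that two-sided comparison is in hand, both statements are bookkeeping.

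For (i), I would first note that since $n_\nu/m_\nu$ is a quotient of positive integers converging to $\lambda$, we have $\lambda>0$, so there exist reals $0<a\leq b<\infty$ and an index $\nu_0$ with $a\leq n_\nu/m_\nu\leq b$ for all $\nu\geq \nu_0$. The upper bound $n_\nu\leq b\,m_\nu$ gives $N_\nu\leq(1+b)m_\nu$, hence $m_\nu\geq N_\nu/(1+b)$; the lower bound $n_\nu\geq a\,m_\nu$ gives $N_\nu\leq \tfrac{1+a}{a}n_\nu$, hence $n_\nu\geq \tfrac{a}{1+a}N_\nu$. Setting $\gamma=\min\!\big(\tfrac{1}{1+b},\tfrac{a}{1+a}\big)>0$ then yields $\min(m_\nu,n_\nu)\geq \gamma N_\nu$ for every $\nu\geq\nu_0$.

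It then remains to fold the finitely many indices $\nu<\nu_0$ into one universal constant. I would put $M_0=\max_{\nu<\nu_0}N_\nu$ and take $\beta=\max(1/\gamma,\,M_0)$. For $\nu\geq\nu_0$, the hypothesis $N_\nu\geq\beta k$ gives $\min(m_\nu,n_\nu)\geq\gamma N_\nu\geq\gamma\beta k\geq k$ because $\gamma\beta\geq 1$. For $\nu<\nu_0$, combining $N_\nu\geq\beta k\geq M_0k$ with $N_\nu\leq M_0$ forces $k\leq 1$, and for $k\in\{0,1\}$ the conclusion $\min(m_\nu,n_\nu)\geq k$ is automatic since $m_\nu,n_\nu\geq 1$, while for $k\geq 2$ the premise is vacuous. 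Thus $\beta$ witnesses (i). Part (ii) follows at once: for any threshold $K$, once $N_\nu\geq\beta K$ part (i) gives $\min(m_\nu,n_\nu)\geq K$, and since $N_\nu\to\infty$ this holds for all large $\nu$, so $\min(m_\nu,n_\nu)\to\infty$ and hence both $m_\nu\to\infty$ and $n_\nu\to\infty$.

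The argument is entirely elementary, so there is no deep obstacle; the one point that genuinely needs care is that the two-sided comparison $a\leq n_\nu/m_\nu\leq b$ is only available \emph{eventually} (for $\nu\geq\nu_0$), whereas (i) asserts a bound uniform over \emph{all} $\nu$. Reconciling these is exactly why the constant $\beta$ must be enlarged to dominate the finite initial segment through $M_0$, and getting that enlargement to make the implication vacuously or trivially true on $\{\nu<\nu_0\}$ is the only step where one must be slightly attentive.
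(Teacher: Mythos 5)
Your proof is correct and follows essentially the same route as the paper's: use the convergence of $n_\nu/m_\nu$ to a nonzero limit to obtain two-sided linear comparability of $m_\nu$, $n_\nu$, and $N_\nu$ (the paper takes the specific bounds $\tfrac{\lambda}{2} m_\nu \leq n_\nu \leq \tfrac{3\lambda}{2} m_\nu$, i.e.\ $a=\lambda/2$, $b=3\lambda/2$), and then read off $\beta$ from the resulting constants. The only difference is that you explicitly fold the finitely many indices $\nu<\nu_0$ into $\beta$ via $M_0$, whereas the paper simply assumes $\nu$ sufficiently large and leaves that uniformity issue implicit; your version is the more complete of the two.
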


\begin{proof}
    It is clear that $(i)$ implies $(ii)$. We now prove $(i)$. Since $\dfrac{n_\nu}{m_\nu} \underset{\nu   \to \infty}{\longrightarrow} \lambda$, we assume that $\nu$ is sufficiently large so that $\mid\frac{n_\nu}{m_\nu}-\lambda\mid \leq \frac{\lambda}{2}$. This implies that
    \begin{equation*}
        \frac{\lambda}{2} m_\nu \leq n_\nu \leq \frac{3 \lambda}{2}m_\nu
    \end{equation*}
Thus $ \frac{2}{3\lambda+2} N_\nu\leq  m_\nu$ and $ \frac{\lambda}{\lambda+2} N_\nu \leq n_\nu$. Set $\beta=\max(\frac{3\lambda+2}{2},\frac{\lambda+2}{\lambda})$ which is well defined as $\lambda \neq 0$. Let $k \in \N$. If $N_\nu \geq \beta k$, then $\min(m_\nu,n_\nu) \geq k$.

\end{proof}

\begin{lemma}\label{lem:rhs_main}
Let $p \geq 1$ and let $g \colon \R^d \times \R^d \to \R$ be a bounded Borel function, continuous  on the support of $\mu \otimes \mu$ and such that $\Esp|g(X,U)|^p) < \infty$. Let $N=n+m$ and assume that there exists sequences of integers ${c}=\{c_m\}$ and $e=\{e_n\}$ such that

 \begin{enumerate}[(i)]
        \item $\dfrac{n}{m} \underset{N \to \infty}{\longrightarrow} \lambda \neq 0$, \label{cond:relnmrhs_main}
        \item $c_m \underset{N \to \infty}{\longrightarrow} \infty$ and $e_n \underset{N \to \infty}{\longrightarrow} \infty$,
        \item $\dfrac{c_m}{N} \underset{N \to \infty}{\longrightarrow} 0$ and $\dfrac{e_n}{N} \underset{N \to \infty}{\longrightarrow} 0$, \label{cond:knn}
        \item $v_{mi,nj}=0$ when $i>c_m$ or $j>e_n$, \label{cond:supp}
        \item $\sup_{m,n}(c_m \, e_n  \max_{i,j} v_{mi,nj}) < \infty$.
    \end{enumerate}
Then
\begin{equation*}
\Esp \left\vert \sum_{ \substack{1 \leq i \leq m \\ 1 \leq j \leq n}} v_{mi,nj} g(X_{(i)}(x),U_{(j)}(u)) -g(x,u) \right\vert^p \to 0 \; \textnormal{for} \; (x,u) \in \supp(\mu \otimes \mu). 
\end{equation*}
\end{lemma}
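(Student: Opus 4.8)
The plan is to reduce everything to the statement that the nearest-neighbour radii shrink to zero, and then to exploit continuity of $g$ at the fixed point $(x,u)$ together with the boundedness of $g$. Throughout I work at a fixed $(x,u)\in\supp(\mu\otimes\mu)$; since the support of a product measure is the product of the supports, this gives $x\in\supp(\mu)$ and $u\in\supp(\mu)$. I would first record, via Lemma~\ref{lem:trivial_fact} and condition~(i), that $m,n\to\infty$ and that $c_m/m\to 0$ and $e_n/n\to 0$: indeed $c_m/m=(c_m/N)(N/m)\to 0$ since $N/m\to 1+\lambda$, and likewise $e_n/n\to 0$.

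The key step is a nearest-neighbour distance lemma. Set $R_m:=\norm{X_{(c_m)}(x)-x}$ and $S_n:=\norm{U_{(e_n)}(u)-u}$; I claim $R_m\to 0$ and $S_n\to 0$ in probability. Fix $\delta>0$. Because $x\in\supp(\mu)$ we have $p_\delta:=\mu(B(x,\delta))>0$, and the number of sample points falling in $B(x,\delta)$ is $\Binom(m,p_\delta)$ with mean $mp_\delta$. Since $c_m/m\to 0$, eventually $c_m\leq mp_\delta/2$, so $\{R_m>\delta\}$ is the event that fewer than $c_m$ of the $m$ points lie in $B(x,\delta)$, whose probability tends to $0$ by Chebyshev's inequality. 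The same argument applied to the independent $U$-sample gives $S_n\to 0$ in probability.

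Next I would control the integrand on a good event. Let $\epsilon>0$. By continuity of $g$ at $(x,u)$ relative to $\supp(\mu\otimes\mu)$, and since $\norm{(x',u')-(x,u)}_\pi=\max(\norm{x'-x},\norm{u'-u})$, there is $\delta>0$ such that every $(x',u')\in\supp(\mu\otimes\mu)$ with $\max(\norm{x'-x},\norm{u'-u})\leq\delta$ satisfies $|g(x',u')-g(x,u)|\leq\epsilon$. On the event $A_N:=\{R_m\leq\delta\}\cap\{S_n\leq\delta\}$, every active neighbour pair $(X_{(i)}(x),U_{(j)}(u))$ with $i\leq c_m$, $j\leq e_n$ lies in $\supp(\mu\otimes\mu)$ and within $\norm{\cdot}_\pi$-distance $\delta$ of $(x,u)$ (the neighbours being ordered by increasing distance), hence $|g(X_{(i)}(x),U_{(j)}(u))-g(x,u)|\leq\epsilon$. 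Using condition~(iv), nonnegativity of the weights and $\sum_{i,j}v_{mi,nj}=1$, the triangle inequality gives, on $A_N$,
\begin{equation*}
\left|\sum_{i,j}v_{mi,nj}\,g(X_{(i)}(x),U_{(j)}(u))-g(x,u)\right|\leq\sum_{i,j}v_{mi,nj}\,|g(X_{(i)}(x),U_{(j)}(u))-g(x,u)|\leq\epsilon,
\end{equation*}
whereas on $A_N^c$ the same quantity is bounded by $2\norm{g}_\infty$ since $g$ is bounded.

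Finally I would assemble the $L^p$ estimate. Splitting the expectation over $A_N$ and its complement,
\begin{equation*}
\Esp\left|\sum_{i,j}v_{mi,nj}\,g(X_{(i)}(x),U_{(j)}(u))-g(x,u)\right|^p\leq\epsilon^p+(2\norm{g}_\infty)^p\,\Prob(A_N^c),
\end{equation*}
and $\Prob(A_N^c)\leq\Prob(R_m>\delta)+\Prob(S_n>\delta)\to 0$ by the distance lemma. Letting $N\to\infty$ and then $\epsilon\to 0$ yields the claim. The main obstacle is precisely the \emph{pointwise} (rather than almost-everywhere) nearest-neighbour distance lemma: it must hold at the specific point $(x,u)$, which is exactly where $x\in\supp(\mu)$ guarantees $\mu(B(x,\delta))>0$. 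This is what makes the argument valid at every support point and is the reason continuity on the support, not merely $\mu$-a.e.\ definedness, is needed. Note that condition~(v) plays no role in this bias-type estimate—boundedness of $g$ and $\sum_{i,j}v_{mi,nj}=1$ suffice—it is reserved for the variance control in the proof of Theorem~\ref{thm:cv_knn}.
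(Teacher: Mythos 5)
Your proof is correct, and while it shares the paper's overall skeleton (continuity of $g$ at $(x,u)$ relative to $\supp(\mu\otimes\mu)$, plus shrinking nearest-neighbour radii, plus boundedness of $g$ to kill the bad event), it differs from the paper's argument in two substantive ways. First, the paper controls the weighted sum by applying Jensen's inequality termwise and then invoking condition (v) to replace each $v_{mi,nj}$ by the uniform bound $\alpha/(c_m e_n)$, reducing to an average over the $c_m e_n$ active pairs; you instead bound $\bigl\vert\sum_{i,j}v_{mi,nj}\,g(X_{(i)}(x),U_{(j)}(u))-g(x,u)\bigr\vert$ by the supremum of $\vert g(X_{(i)}(x),U_{(j)}(u))-g(x,u)\vert$ over active pairs, using only $v_{mi,nj}\geq 0$, $\sum_{i,j}v_{mi,nj}=1$ and condition (iv). Your closing remark is accurate: condition (v) is genuinely not needed for this bias term and is only used in the variance part of the proof of Theorem \ref{thm:cv_knn}, so your argument proves the lemma under strictly weaker hypotheses. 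Second, where the paper cites \cite[Lemma 2.2]{Biau_Devroye2015} for \emph{almost-sure} convergence of the $k$-NN radii and packages it into the events $\mathcal{C}^\delta_k$ indexed by $\min(m,n)\geq k$, you prove the radius convergence from scratch: $\{R_m>\delta\}$ is the event that fewer than $c_m$ of the $m$ points fall in $B(x,\delta)$, the count is $\Binom(m,p_\delta)$ with $p_\delta=\mu(B(x,\delta))>0$ precisely because $x\in\supp(\mu)$, and Chebyshev plus $c_m/m\to 0$ finishes it. This yields only convergence \emph{in probability}, but that is all your good-event/bad-event split requires, since the final estimate $\Esp\vert\cdot\vert^p\leq\epsilon^p+(2\norm{g}_\infty)^p\,\Prob(A_N^c)$ needs nothing more than $\Prob(A_N^c)\to 0$. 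The paper's route is shorter given the external reference; yours is self-contained, elementary, and isolates exactly which hypotheses the bias term consumes. One small point worth making explicit: the inclusion of active pairs in $\supp(\mu\otimes\mu)$ holds only almost surely, so $A_N$ should be intersected with that probability-one event before applying the continuity bound — you gesture at this and it costs nothing, but it should be stated.
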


\begin{proof}
    We adapt the proof of \cite[Lemma 11.1]{Biau_Devroye2015} to our setting. We endow $\R^d \times \R^d$ with the norm
    $\norm{(x,u)}_\pi=\max(\norm{x},\norm{u})$.
    
    Since $p \geq 1$, it follows from Jensen's inequality and conditions \eqref{cond:knn}-\eqref{cond:supp} that for some constant $\alpha$
\begin{align*}
    \Esp \left\vert \sum_{ \substack{1 \leq i \leq m \\ 1 \leq j \leq n}} v_{mi,nj} g(X_{(i)}(x),U_{(j)}(u))-g(x,u) \right\vert^p &\leq \Esp \left[ \sum_{ \substack{1 \leq i \leq m \\ 1 \leq j \leq n}}  v_{mi,nj} \vert g(X_{(i)}(x),U_{(j)}(u))-g(x,u) \vert ^p \right]\\
    & \leq  \frac{\alpha}{ce} \Esp \left[ \sum_{ \substack{1 \leq i \leq m \\ 1 \leq j \leq n}}  \vert g(X_{i},U_{j})-g(x,u) \vert ^p \mathds{1}_{\{ \Gamma_i \leq c \}} \mathds{1}_{ \{\Sigma_j \leq e \}}  \right]
\end{align*}
where $\Gamma_i$ (resp. $\Sigma_j$) is the rank of $X_i$ (resp. $U_j$) with respect to the distances to $x$ (resp. $u$). Potential ties are broken by index comparisons according to the rule defined at the beginning of Section \ref{sec:conv_knn}.

Let $(x,u)$ be a point in the support $S=\supp(\mu \otimes \mu)$ of $\mu \otimes \mu$. Let $\varepsilon >0$. Since $g$ is continuous on $S$ there exists $\delta>0$ such that for every $(a,b) \in S \cap B((x;u),\delta)$, $\vert g(a,b)-g(x,u) \vert \leq \varepsilon$.

For every $k \in \N$, we set
\begin{equation*}
    \mathcal{C}^\delta_k=\{\omega \in \Omega \mid \forall (m , n) \in \N \times \N \, \textnormal{such that} \, \min(m,n) \geq k\,; \max(\norm{X_{(c)}(x)-x} \,,\norm{U_{(e)}(u)-u})<\delta\}.
\end{equation*}

The points $(X_i,U_j)$ belong almost surely to the support of $\mu \otimes \mu$. Condition \eqref{cond:relnmrhs_main}-\eqref{cond:knn} ensure that $\frac{c_m}{m}\underset{m \to \infty}{\longrightarrow} 0$ and $\frac{e_n}{n}\underset{n \to \infty}{\longrightarrow} 0$. It follows from \cite[Lemma 2.2]{Biau_Devroye2015} that $\norm{X_{(c)}(x)-x} \underset{m \to \infty}{\longrightarrow} 0$ and $\norm{U_{(e)}(u)-u} \underset{n \to \infty}{\longrightarrow} 0$ almost surely. This implies that $\Prob((\mathcal{C}^{\delta}_k)^c)\underset{k \to \infty }{\longrightarrow} 0$.

Let $k_0 \in \N$ such that for every $k \geq k_0$, $\Prob((\mathcal{C}^{\delta}_k)^c) \leq \varepsilon$. Assume that $N \geq \beta k_0$ with $\beta$ as in Lemma \ref{lem:trivial_fact}. One has
\begin{align*}
    \vert g(X_{i},U_{j})-g(x,u) \vert^p \mathds{1}_{\{ \Gamma_i \leq c \}} \mathds{1}_{ \{\Sigma_j \leq e \}} \leq \left(\varepsilon + \norm{g}_\infty \mathds{1}_{(\mathcal{C}^{\delta}_k)^c} \right) \mathds{1}_{\{ \Gamma_i \leq c \}} \mathds{1}_{ \{\Sigma_j \leq e \}}. 
\end{align*}
This implies
\begin{align*}
\frac{\alpha}{ce} \Esp \left[ \sum_{ \substack{1 \leq i \leq m \\ 1 \leq j \leq n}}  \vert g(X_{i},U_{j})-g(x,u) \vert ^p \mathds{1}_{\{ \Gamma_i \leq c \}} \mathds{1}_{ \{\Sigma_j \leq e \}}  \right] & \leq \alpha \varepsilon+  \frac{\alpha \norm{g}_\infty}{ce} \, \Esp \left[ \sum_{ \substack{1 \leq i \leq m \\ 1 \leq j \leq n}} \mathds{1}_{(\mathcal{C}^{\delta}_k)^c} \mathds{1}_{\{ \Gamma_i \leq c \}} \mathds{1}_{ \{\Sigma_j \leq e \}} \right]\\
& \leq \alpha \varepsilon+  \frac{\alpha \norm{g}_\infty}{ce} \, \sum_{ \substack{1 \leq i \leq m \\ 1 \leq j \leq n}} \Esp  \mathds{1}_{(\mathcal{C}^{\delta}_k)^c} \Esp ( \mathds{1}_{\{ \Gamma_i \leq c \}} \mathds{1}_{ \{\Sigma_j \leq e \}} )\\
&\leq \alpha \varepsilon+  \alpha \, \norm{g}_\infty \, \Prob((\mathcal{C}^{\delta}_k)^c)\\
&\leq \alpha \,(1+\norm{g}_\infty) \, \varepsilon.
\end{align*}
This conclude the proof. 
\end{proof}

\begin{lemma}\label{lem:vanishing_condexp}
    Let $\sigma \colon \R^l \times \R^l \to \R$ be a borel mesurable function. Let $T(X,U)=\Esp(\sigma(Y,V) \mid X, U)$. 
For $1 \leq i \leq m$, $1 \leq j \leq n$, set 
\begin{equation*}
Z_{(i)(j)}(x,u)=\sigma(Y_{(i)}(x),V_{(j)}(u))-T(X_{(i)}(x),U_{(j)}(u)).
\end{equation*}
For $1 \leq  i, p \leq m$ and $1 \leq j,q \leq n$ with $i \neq p$ and $j \neq q$, the random variables $Z_{(i)(j)}(x,u)$ and $Z_{(p)(q)}(x,u)$ are independent conditional on $X_1,\ldots X_m,U_1,\ldots,U_n$. Moreover,
\begin{equation*}
    \Esp(Z_{(i)(j)}(x,u))\mid X_1,\ldots X_m,U_1,\ldots,U_n)=0.
\end{equation*}
\end{lemma}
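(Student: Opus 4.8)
The plan is to condition on the $\sigma$-algebra generated by all the covariates, $\mathcal{G}:=\sigma(X_1,\ldots,X_m,U_1,\ldots,U_n)$, and to exploit two structural facts: the nearest-neighbor reorderings are deterministic functions of the covariates alone, and conditioning on $\mathcal{G}$ renders the outcomes mutually independent with the expected marginals. First I would record the measurability observation. For fixed $x$ (resp.\ $u$), the tie-breaking rule of Section~\ref{sec:conv_knn} makes the rank assignment unambiguous, so there are permutation-valued maps $\pi$ of $\llbracket 1;m\rrbracket$ and $\tau$ of $\llbracket 1;n\rrbracket$, measurable with respect to $\sigma(X_1,\ldots,X_m)$ and $\sigma(U_1,\ldots,U_n)$ respectively, with $(X_{(i)}(x),Y_{(i)}(x))=(X_{\pi(i)},Y_{\pi(i)})$ and $(U_{(j)}(u),V_{(j)}(u))=(U_{\tau(j)},V_{\tau(j)})$. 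In particular $\pi$ and $\tau$ are $\mathcal{G}$-measurable and depend on the covariates only, not on the outcomes.

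Next I would establish the conditional law of the outcomes. Since the pairs $(X_i,Y_i)_{1\le i\le m}$ are iid, the regular conditional distribution of $(Y_1,\ldots,Y_m)$ given $(X_1,\ldots,X_m)$ is the product $\bigotimes_i P_{Y\mid X=X_i}$, and likewise for the $(U_j,V_j)$. Because the control and experimental samples are independent, conditionally on $\mathcal{G}$ the whole family $(Y_1,\ldots,Y_m,V_1,\ldots,V_n)$ is mutually independent, with $Y_k\sim P_{Y\mid X=X_k}$ and $V_\ell\sim P_{V\mid U=U_\ell}$. Writing $Z_{(i)(j)}(x,u)=\sigma(Y_{\pi(i)},V_{\tau(j)})-T(X_{\pi(i)},U_{\tau(j)})$, I note that the subtracted term is $\mathcal{G}$-measurable and that $Z_{(i)(j)}$ is, up to this $\mathcal{G}$-measurable shift, a function of the single outcome pair $(Y_{\pi(i)},V_{\tau(j)})$.

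The conditional independence claim then follows directly: for $i\neq p$ and $j\neq q$, injectivity of $\pi$ and $\tau$ gives $\pi(i)\neq\pi(p)$ and $\tau(j)\neq\tau(q)$, so $Z_{(i)(j)}$ and $Z_{(p)(q)}$ are, conditionally on $\mathcal{G}$, functions of the disjoint and conditionally independent pairs $(Y_{\pi(i)},V_{\tau(j)})$ and $(Y_{\pi(p)},V_{\tau(q)})$; hence they are independent given $\mathcal{G}$. For the mean-zero claim I would compute, using the conditional independence of $Y_{\pi(i)}$ and $V_{\tau(j)}$ together with their marginals,
\[
\Esp\big(\sigma(Y_{\pi(i)},V_{\tau(j)})\mid\mathcal{G}\big)=\int\!\!\int \sigma(y,v)\,P_{Y\mid X=X_{\pi(i)}}(dy)\,P_{V\mid U=U_{\tau(j)}}(dv),
\]
and since $(X,Y)$ and $(U,V)$ are independent this double integral equals $\Esp(\sigma(Y,V)\mid X=X_{\pi(i)},U=U_{\tau(j)})=T(X_{\pi(i)},U_{\tau(j)})$. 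Subtracting the $\mathcal{G}$-measurable term gives $\Esp(Z_{(i)(j)}(x,u)\mid\mathcal{G})=0$.

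I expect the main obstacle to be the rigorous justification of the conditional-law step: that conditioning on the covariates, which also fixes the ranking permutations, introduces no dependence among the outcomes beyond the pointwise conditioning $Y_k\sim P_{Y\mid X=X_k}$. The point to emphasize is that $\pi$ and $\tau$ are measurable functions of the covariates alone, so they carry no additional information about the outcomes, which is precisely what licenses substituting the reordered indices into the product-form regular conditional distribution. The remaining measure-theoretic care concerns the existence of the regular conditional distributions $P_{Y\mid X}$ and $P_{V\mid U}$ and the factorization $T(a,b)=\iint\sigma\,dP_{Y\mid X=a}\,dP_{V\mid U=b}$, which is exactly where the independence of the two samples enters.
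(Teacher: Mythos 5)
Your proof is correct and takes essentially the same approach as the paper: the paper's proof simply defers to the one-sample argument of \cite[Proposition 8.1]{Biau_Devroye2015}, which rests on exactly the two facts you isolate --- the rank permutations are measurable functions of the covariates alone, and conditionally on the covariates the outcomes are mutually independent with laws $P_{Y\mid X=X_k}$ and $P_{V\mid U=U_\ell}$. Your write-up just makes the two-sample adaptation explicit, including the factorization $T(a,b)=\iint\sigma\,dP_{Y\mid X=a}\,dP_{V\mid U=b}$ supplied by the independence of the two arms.
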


\begin{proof}
The proof is similar to the one of \cite[Proposition 8.1]{Biau_Devroye2015}. %
\end{proof}

\begin{proof}[Proof of Theorem \ref{thm:cv_knn}]
Our proof is an adaptation of the proof of \cite[Theroem 11.1]{Biau_Devroye2015}.
It follows from Jensen inequality that    
\begin{equation}\label{eq:thm_conv}
    \begin{split}
    \Esp \left| \sum_{i,j} v_{mi,nj} \sigma(Y_{(i)}(x),V_{(j)}(u) )- T(x,u) \right|^2 & \leq 2 \, \Esp \left| \sum_{i,j} v_{mi,nj} \lbrack\sigma(Y_{(i)}(x),V_{(j)}(u) )- T(X_{(i)}(x),U_{(j)}(u)) \rbrack \right|^2\\
    &\;+2 \,  \Esp \left| \sum_{i,j} v_{mi,nj} T(X_{(i)}(x),U_{(j)}(u) )- T(x,u) \right|^2.
    \end{split}
\end{equation}
The second term of the right-hand side of the above inequality is converging to zero by Lemma \ref{lem:rhs_main}. For the first term, we have
\begin{align*}
    \Esp \left| \sum_{i,j} v_{mi,nj} \lbrack\sigma(Y_{(i)}(x),V_{(j)}(u) )- T(X_{(i)}(x),U_{(j)}(u)) \rbrack \right|^2& \leq    \Esp \left( \sum_{i,j} v_{mi,nj} Z_{(i)(j)}(x,u) \right) ^2\\
    \hspace{-5cm} &\hspace{-3cm}= \Esp\left( \sum_{i,j,p,q} v_{mi,nj} v_{mp,nq} \, Z_{(i)(j)}(x,u) \,  Z_{(p)(q)}(x,u)  \right).
\end{align*}
 Moreover,

 \begin{equation}\label{eq:sum_decomposition}
    \begin{split}
    \sum_{i,j,p,q} v_{mi,nj} v_{mp,nq} \, Z_{(i)(j)}(x,u) \,  Z_{(p)(q)}(x,u) 
     &= \sum_{\substack{i,j}} v_{mi,nj}^2  \, Z^2_{(i)(j)}(x,u)\\
    &+ \sum_{\substack{i\\j \neq q}} v_{mi,nj} v_{mi,nq}   \,  Z_{(i)(j)}(x,u) \, Z_{(i)(q)}(x,u)  \\
    &+  \sum_{\substack{i\neq p \\j}} v_{mi,nj} v_{mp,nj}   \, Z_{(i)(j)}(x,u) \, Z_{(p)(j)}(x,u)  \\
    &+  \sum_{\substack{i\neq p \\j \neq q}} v_{mi,nj} v_{mp,nq}   \, Z_{(i)(j)}(x,u) \, Z_{(p)(q)}(x,u).
    \end{split}
\end{equation}

It follows from Lemma \ref{lem:vanishing_condexp}, that for $i\neq p$, $j \neq q$ $Z_{(i)(j)}(x,u)$ and   $Z_{(p)(q)}(x,u)$ are independent conditionally to $X_1, \ldots, X_m,U_1,\ldots,U_n$. Hence, for $i\neq p$, $j \neq q$
\begin{align}\label{eq:vanishing_expectation}
\Esp \left( Z_{(i)(j)}(x,u) \, Z_{(p)(q)} (x,u) \right) &=0.
\end{align}
Moreover, since $\sigma(Y,V)$ is bounded, there exists $M>0$ such that for every pair $(i,j)$, $\vert Z_{(i)(j)}(x,u) \vert \leq M $. Combining this together with the equation \eqref{eq:sum_decomposition} and \eqref{eq:vanishing_expectation}
Hence, we get that
\begin{align}\label{eq:ineq convergence}
     \Esp \left( \sum_{i,j} v_{mi,nj} Z_{(i)(j)}(x,u) \right) ^2 &\leq  M \left[ \sum_{\substack{i,j}} v_{mi,nj}^2  
     +  \sum_{\substack{i\\j \neq q}} v_{mi,nj} v_{mi,nq}   + \sum_{\substack{i\neq p \\j}} v_{mi,nj} v_{mp,nj} \right]. 
\end{align}
We now control each term of the sum of the right-hand side. We have
\begin{equation*}
\sum_{\substack{i,j}} v_{mi,nj}^2 \leq \max_{i,j}(v_{ij}),
\end{equation*}
and
\begin{align*}
    \sum_{\substack{i\\j \neq q}} v_{mi,nj} v_{mi,nq}&= \sum_{i=1}^m  \sum_{j \neq q} \mathds{1}_{\{j \leq e_n\}} v_{mi,nj} v_{mi,nq}\\
                                                     &\leq \max_{i,j}(v_{mi,nj}) \sum_{i=1}^m  \sum_{j \neq q} \mathds{1}_{\{j \leq e_n\}} v_{mi,nq}\\
                                                     &\leq \max_{i,j}(v_{mi,nj}) \sum_{i=1}^m  \sum_{j , q} \mathds{1}_{\{j \leq e_n\}} v_{mi,nq}\\
                                                     &\leq \max_{i,j}(v_{mi,nj}) e_n.
 \end{align*}
Similarly, we prove that
\begin{align*}
    \sum_{\substack{i\neq p \\j}} v_{mi,nj} v_{mp,nj}   &\leq \max_{i,j}(v_{mi,nj}) c_m.
\end{align*}
Thanks to assumption \eqref{cond:conv}, the above inequalities shows that each of the terms in the right-hand side of inequality \eqref{eq:ineq convergence} converges to zero, when $N \to \infty$. This proves that the left-hand side of inequality \eqref{eq:thm_conv} converges to zero when $ N \to \infty$ which proves the theorem.
\end{proof}

\section{Bagging}\label{sec:bagging}

In this section, we provide a proof of Theorem \ref{thm:bagging}.

\begin{proof}[Proof of Theorem \ref{thm:bagging}]
Observe that the random variable $Z$ is independent of $\mathcal{D}_{m,n}$.
To prove the theorem, it is sufficient to prove that if $g_K(x,u, \mathcal{D}^\prime_K)$ is pointwise consistent in $(x,u)$ then $g_{L}(\cdot,\mathcal{D}^\prime_{m,n}(Z))$ is pointwise consistent at $(x,u)$. Note that we have the following identity:
\begin{equation*}
    g_L(x,u,\mathcal{D}^\prime_{m,n}(Z))=\sum_{\ell=1}^{k} \sum_{\substack{\alpha \in \mathcal{I}(\ell,m)\\ \beta \in \mathcal{I}_i(\ell,n)}} \, \mathds{1}_{\{(\alpha,\beta)\}}(Z) \, g_\ell(x,u,\mathcal{D}^\prime_{\alpha,\beta}).
\end{equation*}
Let $\delta>0$. The above identity implies
\begin{align*}
    \Prob(\vert g_L(x,u,\mathcal{D}^\prime_{m,n}(Z))-\Delta(x,u) \vert > \delta) &= \sum_{\ell=1}^{k} \sum_{\substack{\alpha \in \mathcal{I}(\ell,m)\\ \beta \in \mathcal{I}_i(\ell,n)}} \, \Prob(\mathds{1}_{\{(\alpha,\beta)\}}(Z) \,  \mid g_\ell(x,u,\mathcal{D}^\prime_{\alpha,\beta})-\Delta(x,u) \mid > \delta )\\
    & =\sum_{\ell=1}^{k} \sum_{\substack{\alpha \in \mathcal{I}(\ell,m)\\ \beta \in \mathcal{I}_i(\ell,n)}} \,
    \Prob(Z=(\alpha,\beta))  \Prob(\mid g_\ell(x,u,\mathcal{D}^\prime_{\alpha,\beta})-\Delta(x,u) \mid > \delta ).
\end{align*}
Let $ \varepsilon >0$. There exists $\ell_0 \in \N$ such that for $\ell \geq \ell_0$, $\Prob( \vert g_\ell(x,u,\mathcal{D}^\prime_{\alpha,\beta}) - \Delta(x,u) \vert > \delta) \leq \varepsilon$. Assuming $k\geq \ell_0$, we have

\begin{equation}\label{ineq:globalbag}
\Prob(\vert g_L(x,u,\mathcal{D}^\prime_{m,n}(Z))-\Delta(x,u) \vert > \delta)  \leq   \varepsilon \sum_{\ell=\ell_0}^{k} \sum_{\substack{\alpha \in \mathcal{I}(\ell,m)\\ \beta \in \mathcal{I}_i(\ell,n)}} 
    \! \! \Prob(Z=(\alpha,\beta)) + \sum_{\ell=1}^{\ell_0} \sum_{\substack{\alpha \in \mathcal{I}(\ell,m)\\ \beta \in \mathcal{I}_i(\ell,n)}} \! \! \Prob(Z=(\alpha,\beta)).\\
\end{equation}
Moreover,
\begin{equation}\label{ineq:firstbag}
    \sum_{\ell=\ell_0}^{k} \sum_{\substack{\alpha \in \mathcal{I}(\ell,m)\\ \beta \in \mathcal{I}_i(\ell,n)}} \Prob(Z=(\alpha,\beta))  \leq 1
\end{equation}
and
\begin{align}
    \sum_{\ell=1}^{\ell_0} \sum_{\substack{\alpha \in \mathcal{I}(\ell,m)\\ \beta \in \mathcal{I}_i(\ell,n)}} \Prob(Z=(\alpha,\beta)) &=\sum_{\ell=1}^{\ell_0} \sum_{\substack{\alpha \in \mathcal{I}(\ell,m)\\ \beta \in \mathcal{I}_i(\ell,n)}} \Prob(Z=(\alpha,\beta) \mid L=\ell) \Prob(L=\ell) \notag\\
    &= \sum_{\ell=1}^{\ell_0} \binom{k}{\ell} q_k^\ell (1-q_k)^{k-l} \sum_{\substack{\alpha \in \mathcal{I}(\ell,m)\\ \beta \in \mathcal{I}_i(\ell,n)}} \Prob(Z=(\alpha,\beta) \mid L=\ell) \notag\\
    &= \sum_{\ell=1}^{\ell_0} \binom{k}{\ell} q_k^\ell (1-q_k)^{k-l} \underset{k \to \infty}{\longrightarrow} 0. \label{ineq:secondbag}
\end{align}
Combining the expressions \eqref{ineq:globalbag}, \eqref{ineq:firstbag} and \eqref{ineq:secondbag}, we get that for $k$ sufficiently large
\begin{equation*}
    \Prob(\vert g_L(x,u,\mathcal{D}^\prime_{m,n}(Z))-\Delta(x,u) \vert > \delta)  \leq  2 \varepsilon,
\end{equation*}
which proves the claim.
\end{proof}

\section{Simulation} \label{sec:datgen}

In this section, we provide a detailed description of the data-generating mechanism used in our simulations.
We first describe how we generate the population then how we generate the outcome. 

Our approach follows the one of \cite{Grolleau2024}. The population we generate is described via eight correlated covariates, one normal, three log-normal, and four Bernoulli. We first define a Gaussian random vector and transform it to obtain the desired random variables. We proceed as follows.

\begin{enumerate}
    \item  We define a diagonal matrix $D=\mathrm{diag}(\lambda_1\ldots,\lambda_8)$, where $\lambda_i=1+0.3 \times i$.
    \item We draw an orthogonal $O$ matrix from the Haar distribution on the orthogonal group $\mathcal{O}(8)$ and define the covariance matrix:
    \begin{equation*}
        \Sigma=O \, D \, O^t.
    \end{equation*}
    \item We generate a sample of size $2n$, where either $n=200$ or $n=1000$, by drawing from a Gaussian vector $(Z_1,\ldots,Z_8) \sim \mathcal{N}(0,\Sigma)$ and generates $(X_1,\ldots,X_8)$ as follows:
    \begin{align*}
        X_1&=Z_1 \\ 
        (X_2,X_3,X_4)&=(\exp(Z_2),\exp(Z_3),\exp(Z_4)) \\ 
        (X_4,X_5,X_6,X_8)&=(\mathds{1}_{\{Z_5>0\}},\ldots,\mathds{1}_{\{Z_8>0\}}).
    \end{align*}
    Finally, we form the vector $X=(X_0,X_1,\ldots,X_8)$ with $X_0\equiv1$ to  allow for an intercept.
\end{enumerate}

\subsection{Scenario 1: the case of two binary outcomes}
In this scenario, we assume that the potential outcome $Y(0)=(Y^1(0),Y^2(0))$ of a patient $X$ and $V(1)=(V^1(1),V^2(1))$  of a patient $U$  are random binary vectors generated as follows. The distribution of the potential outcomes are 
\begin{align*}
    Y(0)|X &\sim \Multinomial(N=1,p(X)=(p_{11}(X),p_{10}(X),p_{01}(X),p_{00}(X)),\\
    V(1)|X &\sim \Multinomial(N=1,q(U)=(q_{11}(U),q_{10}(U),q_{01}(U),q_{00}(U))
\end{align*}

with
\begin{equation*}
    p_{ab}(U)= \dfrac{\exp(\alpha_{ab}^t X)}{\sum \exp(\alpha_{ab}^t X) } \quad \text{and} \quad q_{ab}(U)= \dfrac{\exp(\beta_{ab}^t U)}{\sum \exp(\beta_{ab}^t U) }
\end{equation*}
where $a,b \in \{0,1\}$, $\alpha_{00}=\beta_{00}=0$ and the parameter $(\alpha_{11}^t,\alpha_{10}^t,\alpha_{01}^t,\beta_{11}^t,\beta_{10}^t,\beta_{01}^t)$ is drawn from the distribution $\mathcal{U}\left([-1,1]^{6\times 9}\right)$.

\subsection{Scenario 2: the case of a binary and a numerical outcome}
In this scenario, we assume that the potential outcome $Y(0)=(Y^1(0),Y^2(0))$ and $V(1)=(V^1(1),V^2(1))$ a patients $X$ and $U$ are random vectors where $Y^1(0)$ and $V^1(0)$ are a Bernoulli random variables and $Y^2(0)$ and $V^2(1)$ are  numerical outcomes with integer values between 0 and 25. The potential outcomes are generated as follows.
\begin{align*}
    Y^1(0) | X \sim \Ber(p_1(X)) \quad \text{and} \quad  Y^2(0) | X,Y^1(0) \sim \Binom(25, p_2(X,Y^1(0)), \\
    V^1(1) | X \sim \Ber(q_1(U)) \quad \text{and} \quad  V^2(1) | U,V^1(1) \sim \Binom(25, q_2(U,V^1(1))
\end{align*}
with 
\begin{align*}
p_1(X)=\dfrac{1}{1+\exp(\alpha^t X)} \quad \text{and} \quad p_2(X,Y^1(0))=\dfrac{1}{1+\exp(\beta^t X + Y^1(0) \times \gamma^t X)},\\
q_1(U)=\dfrac{1}{1+\exp(\theta^t X)} \quad \text{and} \quad q_2(U,V^1(1))=\dfrac{1}{1+\exp(\xi^t X + V^1(1) \times \rho^t X)}
\end{align*}
where $(\alpha^t,\beta^t,\gamma^t,\theta^t,\xi^t,\rho^t)$ is drawn from the distribution $\mathcal{U}\left([-1,1]^{6\times 9}\right)$.

\subsection{Oracle IPB}

We evaluate each classifier on a dataset of 2 000 0000 rows. For each rows, we have computed the oracle value of $\IPB_{(r_0,r_1)}$ thanks to the following formulas.

\begin{align*}
\Delta_{(r_0,r_1)}(x,u) = \int_{\mathcal{Y} \times \mathcal{Y}} \sigma(y,v) K_x(dy) \, K_u(dv)    
\end{align*}
where $K_x(dy)$ and $K_u(dv)$ designates respectively the conditional laws of $Y(0) | X$ and $V(1) | U$.
In the case of the first scenario, this leads to
\begin{equation*}
    \Delta_{(r_0,r_1)}(x,u)= \sum_{y,v \in \{0,1\}^2} \sigma(y,v) p_y(x)q_v(u),
\end{equation*}
while in the second scenario, it yields
\begin{equation*}
    \Delta_{(r_0,r_1)}(x,u)= \sum_{y,v \in S} \binom{25}{y^2} \binom{25}{v^2}\sigma(y,v)  p_1(x)p_2(x,y^1) q_1(u)q_2(u,v^1),
\end{equation*}
where $y=(y^1,y^2)$, $v=(v^1,v^2)$ and $S=\{0,1\} \times \{0, \ldots, 25\}$.

\section{List of personalization variables} \label{sec:covar}
The model used to construct the ITR from the IST-3 uses as outomes the variables \texttt{ohs6} and \texttt{euroqol6} data. It takes into account the following 47 categorical random variable and the following 14 continuous covariates for personalization. We refer the reader to the IST-3 data dictionnary for the description of these variables.

\begin{enumerate}
        \item \texttt{gender}, 
        \item \texttt{livealone\_rand}: Lived alone before stroke ?,
        \item \texttt{indepinadl\_rand}: Independent in activities of daily living before stroke ?,
        \item \texttt{infarct} : Recent ischaemic change likely cause of this stroke ?,
        \item \texttt{antiplat\_rand} : Received antiplatelet drugs in last 48 hours ?,
        \item \texttt{atrialfib\_rand} : Patient in atrial fibrillation at randomisation,
        \item \texttt{country},
        \item \texttt{liftarms\_rand}: Able to lift both arms off bed at randomisation,
        \item \texttt{ablewalk\_rand}: Able to walk without help at randomisation,
        \item \texttt{weakface\_rand}:Unilateral weakness affecting face at randomisation,
        \item \texttt{weakarm\_rand}: Unilateral weakness affecting arm or hand at randomisation,
        \item \texttt{weakleg\_rand}: Unilateral weakness affecting leg or foot at randomisation,
        \item \texttt{dysphasia\_rand}: Dysphasia at randomisation,
        \item \texttt{hemianopia\_rand}: Homonymous hemianopia at randomisation ,
        \item \texttt{visuospat\_rand}: Visuospatial disorder at randomisation,
        \item \texttt{brainstemsigns\_rand}: Brainstem or cerebellar signs at randomisation , 
        \item \texttt{otherdeficit\_rand}: Other neurological deficit at randomisation,
        \item \texttt{stroketype}: Stroke subtype,
        \item \texttt{R\_scannorm}: Completely normal  pre-randomisation scan (R scan),
        \item \texttt{R\_infarct\_territory} Infarct territory (from R scan) ,
         \item \texttt{R\_infarct\_size} : Infarct size (from R scan),
        \item \texttt{R\_hypodensity}: Degree of acute hypodensity (R scan),
        \item \texttt{R\_hyperdensity},
        \item \texttt{R\_hyperdense\_arteries}: Hyperdense arteries visible on R scan,
        \item \texttt{R\_isch\_change} : Ischaemic change visible on R scan,
         \item \texttt{R\_swelling} : Degree of tissue swelling in acute infarct (R scan) ,
        \item \texttt{R\_atrophy} : Atrophy visible on R scan,
        \item \texttt{R\_whitematter} : White matter visible on R scan,
        \item \texttt{R\_oldlesion} : Old vascular lesion visible on R scan,
        \item \texttt{R\_nonstroke\_lesion} : Non-stroke lesion visible on R scan ,
        \item \texttt{apca} : infarcts in ACA / PCA territory,
        \item \texttt{subinf1}: Acute small subcortical infarcts ?,
        \item \texttt{cbzinf1}: Acute cortical borderzone infarcts ?,
        \item \texttt{cinf1}:  Acute cerebellar infarct ?,
        \item \texttt{stem1}: Acute brainstem infarct ?,
        \item \texttt{aca1}: Anterior cerebral artery affected ?,
        \item \texttt{pca1}: Posterior cerebral artery affected ?,
        \item \texttt{aspirin\_pre}: Aspirin before admission ?,
        \item \texttt{dipyridamole\_pre}: Dipyridamole before admission ?,
        \item \texttt{clopidogrel\_pre}: Clopidogrel before admission ?,
        \item \texttt{lowdose\_heparin\_pre}: Low dose heparin before admission ?,
        \item \texttt{warfarin\_pre}: Warfarin before admission ?,
        \item \texttt{antithromb\_pre}: Other thrombotic agents before admission ?,
        \item \texttt{hypertension\_pre}: Treatment for hypertension before admission ?,
        \item \texttt{diabetes\_pre}: Treatment for diabetes before admission ?,
        \item \texttt{stroke\_pre} History of previous stroke or transient ischemic attack ?,
        \item \texttt{tmt}=1-\texttt{itt\_treat} : Allocated treatment.
\end{enumerate}

It uses the following continuous covariates:
\begin{enumerate}
\item \texttt{ageimp} : age,
\item \texttt{sbprand} : Systolic BP at randomisation (mm Hg), 
\item \texttt{dpbrand} : Diastolic BP at randomisation (mm Hg), 
\item \texttt{weight} : Estimated weight (kg),
\item \texttt{glucose} : Blood glucose (mmol/L), 
\item \texttt{gcs\_eye\_rand} : Best eye response (Glasgow Coma Scale) at randomisation, 
\item \texttt{gcs\_motor\_rand} : Best motor response (Glasgow Coma Scale) at randomisation , 
\item \texttt{gcs\_verbal\_rand} : Best verbal response (Glasgow Coma Scale) at randomisation, 
\item \texttt{gcs\_score\_rand} : Total Glasgow Coma Scale score at randomisation, 
\item \texttt{nihss} : Total NIH Stroke Score at randomisation,
\item \texttt{treatdelay} : Time (hour) from stroke to treatment,
\item \texttt{konprob}: Probability of good outcome based on Konig model, 
\item \texttt{R\_mca\_aspects} : Aspects score (max 10) for Middle cerebral artery (R scan), 
\item \texttt{R\_tot\_aspects} : Total aspects score (max 12) including ACA/PCA (R scan).
\end{enumerate}

\section{Single outcome}\label{sec:ohs6}

Here, using our methodology, we construct an ITR using the OHS at six months as the sole outcome. The score used is defined in Table \ref{tab:monolevelscore}. We denote by $r^{\opt}_m$ the optimal rule associated with this score. We report the following result.

\begin{table}[htb]
	\centering
	\caption{Scoring $\sigma_{ij}$ of a pair $(i,j)$ of individuals taken from the control ($C$) and experimental ($E$) groups, respectively, with outcomes $(Y^{1}_i)$ and  $(V^{1}_j)$.}
	\label{tab:monolevelscore}
	\begin{tabular}{lccl} 
		\hline
        \multicolumn{4}{l}{OHS at 6 months}  \\ \cline{1-3}
		$(Y^{1}_i,V^{1}_j)$ & Ordering & Label&  $\sigma_{ij}$ \\ \hline
		$V_j^1-Y_i^1>0$ &  $V^{1}_j \succ Y^{1}_i$ & Favorable &  $+ 1$ \\
		$V_j^1-Y_i^1<0$ & $V^{1}_j \prec Y^{1}_i$ & Unfavorable & $-1$\\
		$V_j^1-Y_i^1=0$ & $V^{1}_j \bowtie Y^{1}_i$ & Tie/neutral & $+ 0$\\
		\hline
	\end{tabular}
\end{table}

\begin{table}[htb]
	\centering
	\caption{Estimate of the AIPB for the rule treating everyone with rt-PA and for the estimated optimal rules together with the standard errors. The last two standard errors are conditional to the training set.}\label{result:ist3}
	\begin{tabular}{ccc}
    \hline
    \rule{0pt}{2.8ex} 
        Treatment rules & Mean & Standard error\\ \hline
		$\widehat{\AIPB}_{(r_0,r_1)}$ & 0.038 & (0.019)\\
        $\widehat{\AIPB}_{(r_1,\hat{r}_m^{\opt})}$ & 0.041 & (0.001) \\
        $\widehat{\AIPB}_{(r_0,\hat{r}_m^{\opt})}$ & 0.083 & (0.002)\vspace{0.1cm}\\
    \hline
	\end{tabular}
\end{table}

{\footnotesize
\begin{longtable}{lll}
\caption{Characteristics of patients according to the treatment recommended by $\hat{r}_m^{\opt}$. NIHSS=National Institutes of Health Stroke Scale. TACI=total anterior circulation infarct. PACI=partial anterior circulation infarct. LACI=lacunar infarct. POCI=posterior circulation infarct.}
\label{tab:char_recom} \\
\hline
& \textbf{rt-PA} & \textbf{No rt-PA} \\
& \textbf{recommended} & \textbf{recommended} \\
& \textbf{(n=1639)} & \textbf{(n=1396)} \\
\hline
\endfirsthead

\hline
& \textbf{rt-PA} & \textbf{No rt-PA} \\
& \textbf{recommended} & \textbf{recommended} \\
& \textbf{(n=1639)} & \textbf{(n=1396)} \\
\hline
\endhead
\multicolumn{3}{l}{\textbf{Baseline variables}} \\
\hline
\multicolumn{3}{l}{\textbf{Age (years)}} \\
18--50 & 4.4\% &  3.9\% \\
51--60 & 6.5\% & 6.8\%  \\
61--70 & 12.3\% & 11.7\%\\
71--80 & 22.6\% & 25.4\%  \\
81--90 & 47.8\% & 44.7\% \\
$>$90 & 6.5\% & 7.4\% \\
\hline
\multicolumn{3}{l}{\textbf{Sex}} \\
Female & 53.1\% & 50.1\% \\
\hline
\multicolumn{3}{l}{\textbf{NIHSS}} \\
0--5 & 20.6\% & 19.6\% \\
6--10 & 28.8\% & 27.2\% \\
11--15 & 19.4\% & 20.3\% \\
16--20 & 17.8\% & 18.0\% \\
$>$20 &  13.4\% & 14.8\% \\
\hline
\multicolumn{3}{l}{\textbf{Delay in randomisation}} \\
0--3.0 h & 21.4\% & 19.3\% \\
3.0--4.5 h & 37.3\% & 38.4\% \\
4.5--6.0 h & 37.3\% & 38.5\% \\
$>$6.0 h & 4.0\% & 3.8\% \\
\hline
Atrial fibrillation & 30.1\% & 30.2\% \\
\hline
\multicolumn{3}{l}{\textbf{Systolic blood pressure}} \\
$\leq$143 mm Hg & 35.1\% & 31.5\% \\
144--164 mm Hg & 31.3\% & 33.4\% \\
$\geq$165 mm Hg & 33.6\% & 35.1\% \\
\hline
\multicolumn{3}{l}{\textbf{Diastolic blood pressure}} \\
$\leq$74 mm Hg & 34.1\% & 32.8\% \\
75--89 mm Hg & 33.5\% & 34.3\% \\
$\geq$90 mm Hg & 32.4\% & 32.9\% \\
\hline
\multicolumn{3}{l}{\textbf{Blood glucose}} \\
$\leq$5 mmol/L & 21.1\% & 18.2\% \\
6--7 mmol/L & 46.3\% & 49.3\% \\
$\geq$8 mmol/L & 32.6\% & 32.5\% \\
\hline
Treatment with antiplatelet drugs in previous 48 h & 51.3\% & 51.6\% \\
\hline
\multicolumn{3}{l}{\textbf{Predicted probability of poor outcome at 6 months}} \\
$<$40\% & 54.2\% & 54.6\% \\
40--50\% & 11.1\% & 10.3\% \\
50--75\% & 23.1\% & 26.0\% \\
$\geq$75\% & 11.5\% & 9.1\% \\
\hline
\multicolumn{3}{l}{\textbf{Stroke clinical syndrome}} \\
TACI & 42.2\% & 44.1\% \\
PACI & 38.7\% & 36.6\% \\
LACI & 11.0\% & 10.8\% \\
POCI & 7.9\% & 8.4\% \\
Other & 0.2\% &0.1\% \\
\hline
\multicolumn{3}{l}{\textbf{Baseline variables collected from prerandomisation scan}} \\
\multicolumn{3}{l}{\textbf{Expert reader's assessment of acute ischaemic change}} \\
Scan completely normal & 9.2\% & 9.2\% \\
Scan not normal but no sign of acute ischaemic change & 50.9\% & 50.1\% \\
Signs of acute ischaemic change & 39.9\% & 40.7\% \\
\hline
\end{longtable}
}
\end{document}